\documentclass[twoside, times, 11pt, journal, onecolumn]{IEEEtran}

\usepackage{epsfig}
\usepackage{graphicx}
\usepackage{amsmath}
\usepackage{amssymb}
\usepackage{float}
\usepackage{url}
\usepackage{verbatim}

\interdisplaylinepenalty=2500

\topmargin      0.1truein
\headheight     0.3truein
\headsep 0.0truein

\usepackage[bottom=0.8in]{geometry}

\setlength{\leftmargini}{0\leftmargini}
\newtheorem{definition}{Definition}
\newtheorem{lemma}{Lemma}
\newtheorem{proposition}{Proposition}
\newtheorem{corollary}{Corollary}
\newtheorem{theorem}{Theorem}

\newtheorem{remark}{Remark}
\newtheorem{example}{Example}
\newtheorem{discussion}{Discussion}
\newtheorem{question}{Question}

\newcommand{\naturals}{\ensuremath{\mathbb{N}}}
\newcommand{\reals}{\ensuremath{\mathbb{R}}}
\newcommand{\expectation}{\ensuremath{\mathbb{E}}}
\newcommand{\vect}{\mathbf}
\newcommand{\vnode}{{v}}
\newcommand{\cnode}{{c}}

\begin{document}

\title{\huge{On Universal Properties of Capacity-Approaching LDPC Code Ensembles}}

\author{\thanks{This research work was
supported by the Israel Science Foundation (grant no. 1070/07).}
Igal~Sason\thanks{Igal Sason is with the Department of Electrical
Engineering at the Technion -- Israel Institute of Technology,
Haifa 32000, Israel (e-mail:
sason@ee.technion.ac.il).} }

\maketitle

\begin{abstract}
This paper is focused on the derivation of some universal
properties of capacity-approaching low-density parity-check (LDPC)
code ensembles whose transmission takes place over memoryless
binary-input output-symmetric (MBIOS) channels. Properties of the
degree distributions, graphical complexity and the number of
fundamental cycles in the bipartite graphs are considered via the
derivation of information-theoretic bounds. These bounds are
expressed in terms of the target block/ bit error probability and
the gap (in rate) to capacity. Most of the bounds are general for
any decoding algorithm, and some others are proved under belief
propagation (BP) decoding. Proving these bounds under a certain
decoding algorithm, validates them automatically also under any
sub-optimal decoding algorithm. A proper modification of these
bounds makes them universal for the set of all MBIOS channels
which exhibit a given capacity. Bounds on the degree distributions
and graphical complexity apply to finite-length LDPC codes and to
the asymptotic case of an infinite block length. The bounds are
compared with capacity-approaching LDPC code ensembles under BP
decoding, and they are shown to be informative and are easy to
calculate. Finally, some interesting open problems are considered.
\end{abstract}

\begin{keywords}
Belief propagation (BP), bipartite graphs, complexity, cycles,
density evolution (DE), linear programming (LP) bounds,
low-density parity-check (LDPC) codes, maximum-likelihood (ML)
decoding, memoryless binary-input output-symmetric (MBIOS)
channels, sphere-packing bounds, stability.
\end{keywords}

\section{Introduction}
\label{Section: Introduction} Low-density parity-check (LDPC)
codes form a class of powerful error-correcting codes which are
efficiently encoded and decoded with low-complexity algorithms.
These linear block codes, originally introduced by Gallager in the
early sixties \cite{Gallager_1962}, are characterized by sparse
parity-check matrices which facilitate their low-complexity
decoding with iterative message-passing algorithms. In spite of
the seminal work of Gallager, LDPC codes were ignored for a long
time. Following the breakthrough in coding theory, made by the
introduction of turbo codes \cite{BGT} and the rediscovery of LDPC
codes \cite{MN_LDPC96} in the mid 1990s, it was realized that an
efficient design of these codes enables to closely approach the
channel capacity while maintaining reasonable decoding complexity.
This breakthrough attracted many coding-theorists during the last
decade (see, e.g., \cite{Costello-Forney07}, \cite{RiU_book},
\cite{IT_Feb2001}).

The asymptotic analysis of LDPC code ensembles under iterative
message-passing decoding algorithms relies on the {\em density
evolution} (DE) approach which was developed by Richardson and
Urbanke (see \cite{Richardson1, Richardson2, RiU_book}). This
technique is commonly used for optimizing the degree distributions
of capacity-approaching LDPC code ensembles where the target is to
maximize the achievable rate for a given channel model or to
maximize the threshold for a given code rate subject to some
constraints on the degree distributions \cite{LTHC}.  Some
approximate techniques which optimize the degree distributions of
LDPC code ensembles under further practical constraints are of
interest (e.g., an optimization for obtaining a good tradeoff
between the asymptotic gap to capacity and the decoding complexity
\cite{ArdakaniSFK_Allerton05}). For the binary erasure channel
(BEC), the DE approach is much simplified since it leads to a
one-dimensional analysis. As a result of this significant
simplification, some explicit expressions for capacity-achieving
sequences of LDPC code ensembles have been derived for the BEC
(see, e.g., \cite{LubyMSS_IT01, Oswald-it02, RiU_book} and
\cite{Shokrollahi-IMA2000}). For general memoryless binary-input
output-symmetric (MBIOS) channels, as of yet there are no
closed-form expressions for capacity-achieving LDPC code ensembles
under iterative decoding, and the DE technique serves as a
numerical tool for the design of capacity-approaching LDPC code
ensembles in the limit where their block length tends to infinity.
Although maximum-likelihood (ML) decoding is prohibitively
complex, capacity-achieving sequences of LDPC code ensembles have
been constructed under ML decoding for any MBIOS channel where the
analysis relies on upper bounds on the decoding error probability
which are based on the distance spectra of these ensembles (see
\cite{Hsu_Achilleas1}, \cite{Hsu_Achilleas2}, \cite{Tutorial}, and
\cite[Theorem~2.2]{Sason-it03}).

Consider right-regular LDPC codes (i.e., LDPC codes where the
degree of the parity-check nodes is fixed to a certain value
$a_{\text{R}}$), and assume that their transmission takes place
over a binary symmetric channel (BSC). In his thesis, Gallager
derived an upper bound on the maximal achievable rate of these
codes where it is required to obtain vanishing block error
probability as we let the block length tend to infinity (see
\cite[Theorem~3.3]{Gallager_1962}). This information-theoretic
bound holds under ML decoding or any sub-optimal decoding
algorithm. This bound shows that right-regular LDPC codes cannot
achieve the channel capacity on a BSC, even under ML decoding.
Based on this bound, the inherent gap between the achievable rate
and the channel capacity is well approximated by an expression
which decreases to zero exponentially fast in $a_{\text{R}}$.
Burshtein {\em et al.} have generalized Gallager's bound for
general LDPC code ensembles whose transmission takes place over an
MBIOS channel \cite{Burshtein_IT2002}. An improved upper bound on
the achievable rates of LDPC code ensembles was obtained by
Wiechman and Sason \cite{Wiechman_Sason}, followed by a
generalization of this bound to the case where the transmission
takes place over a set of parallel MBIOS channels
\cite{Sason-it07}. This work partially relies on the analysis in
\cite{Wiechman_Sason} (see Section~\ref{Section: Preliminaries}
for relevant background).

Khandekar and McEliece suggested to measure the encoding and
decoding complexity of codes defined on graphs in terms of the
achievable gap (in rate) to capacity, and they also had some
conjectures regarding the behavior of the complexity as the gap to
capacity vanishes \cite{Khandekar-isit01}. Following their
approach, the tradeoff between the performance and complexity is
analyzed in the literature for LDPC code ensembles and some other
variants of codes defined on graphs (see, e.g.,
\cite{Hsu_Achilleas1}, \cite{Hsu_Achilleas2}, \cite{Pfister1},
\cite{Pfister2}, \cite{Sason-it03}, \cite{Sason-it07},
\cite{iterations}, \cite{Wiechman_Sason} and references therein).

In this paper, we consider some properties of capacity-approaching
LDPC code ensembles whose transmission takes place over MBIOS
channels. One question which is addressed in this paper is the
following: \vspace*{0.1cm}
\begin{question}
How do the degree distributions of capacity-approaching LDPC code
ensembles behave as a function of the achievable gap (in rate) to
capacity ?
\end{question}
\vspace*{0.1cm} The behavior of the degree distributions of
capacity-approaching LDPC code ensembles is addressed in this work
via the derivation of some information-theoretic bounds. Some of
them hold under ML decoding or any sub-optimal decoding algorithm,
and some other bounds are proved under belief propagation (BP)
decoding where we refer to the sum-product decoding algorithm (see
\cite{factor_graphs_IT01} and \cite[Chapter~2]{RiU_book}). For the
characterization of the degree distributions for
capacity-approaching LDPC code ensembles, a special consideration
is given to the fraction of degree--2 variable nodes ($L_2$) and
the fraction of edges connected to these nodes $(\lambda_2)$. This
focus was partially motivated by the influence of $\lambda_2$ on
the satisfiability of the stability condition; this condition is
necessary for achieving vanishing bit error probability under
iterative message-passing decoding when we let the block length
tend to infinity \cite{Richardson1}. Also, some previously
reported information-combining bounds on the performance of LDPC
code ensembles under iterative decoding are sensitive to this
quantity (see, e.g., \cite{Sutskover_IT07}). This motivates a
study of the behavior of $L_2$ and $\lambda_2$ for
capacity-approaching LDPC code ensembles, where the bounds on
these quantities are expressed in terms of the gap between the
channel capacity and the achievable rates of these code ensembles
under BP decoding. We also demonstrate the tightness of these
bounds for the BEC by considering the right-regular sequence of
capacity-achieving LDPC code ensembles proposed by Shokrollahi
\cite{Shokrollahi-IMA2000}.

General upper bounds on the degree distributions of
capacity-approaching LDPC code ensembles are derived in this paper
for the case where the transmission takes place over an MBIOS
channel. The bounds are expressed in terms of the gap (in rate) to
capacity with a target bit (or block) error probability. These
linear programming (LP) upper bounds on the degree distributions
of LDPC code ensembles are general with respect to the decoding
algorithm, and they also hold for ensembles of finite-length codes
or for the asymptotic case of an infinite block length. We note
that two LP problems are formulated in
\cite{Amraoui_LDPCopt_finite_length} for optimizing the degree
distributions of finite-length LDPC code ensembles whose
transmission takes place over a BEC, and also a convex
optimization problem is formulated in
\cite{ArdakaniSFK_Allerton05} for optimizing the degree
distributions of LDPC code ensembles with the goal of obtaining a
good tradeoff between performance and decoding complexity. It is
noted that the LP-based optimizations in
\cite{Amraoui_LDPCopt_finite_length} and
\cite{ArdakaniSFK_Allerton05} hold under BP decoding, whereas the
LP bounds which are derived in this paper are
information-theoretic bounds which hold under ML decoding or any
sub-optimal decoding algorithm. Although the degree distributions
of the parity-check nodes are often set to be regular (or almost
regular), and the irregularity often refers to the degree
distributions of the variable nodes, this is not necessarily the
case for capacity-approaching ensembles. For example,
\cite[Section~VI]{Pfister2} introduces some capacity-achieving
sequences of accumulate-repeat-accumulate code ensembles for the
BEC, which also possess a bounded complexity per information bit
under BP decoding; they are designed in a way where the degree
distributions of the LDPC code ensembles after a proper graph
reduction (as explained in \cite[Section~II]{Pfister2}) are
self-matched and are both irregular. The irregularity of the
parity-check degree distributions in the design of LDPC codes
appears to be useful in various cases under BP decoding, e.g., the
optimization of finite-length LDPC code ensembles whose
transmission takes place over the BEC
\cite{Amraoui_LDPCopt_finite_length}, the heavy-tail Poisson
distribution introduced in \cite{LubyMSS_IT01} and
\cite{Shokrollahi-IMA2000} which gives rise to capacity-achieving
degree distributions for the BEC, the design of bilayer LDPC code
ensembles for a degraded relay AWGN channel
\cite{irregular_check_node_degree_distribution_ISIT08}, and the
design of LDPC code ensembles for unequal error protection
\cite{irregular_check_node_degree_distribution_Turbo06}.

It is well known that linear block codes which are represented by
cycle-free bipartite (Tanner) graphs have poor performance even
under ML decoding \cite{Cycle-free codes}. The bipartite graphs of
capacity-approaching LDPC codes should have cycles. Hence, another
question which is addressed in this paper, as a continuation to a
previous study in \cite{Cycle-free codes} and \cite{Sason-it03}
(see also \cite[Problems~4.52 and 4.53]{RiU_book}), is the
following: \vspace*{0.1cm}
\begin{question}
How does the average cardinality of the fundamental system of
cycles of bipartite graphs behave as a function of the achievable
gap to capacity of the underlying LDPC code ensembles ?
\end{question}
\vspace*{0.1cm}

The fundamental tradeoff between the graphical complexity and
performance of codes defined on graphs is of interest, especially
for codes of finite-length. In this paper, we address the
following question: \vspace*{0.1cm}
\begin{question}
Consider the representation of a finite-length binary linear block
code by an arbitrary bipartite graph. How simple can such a
graphical representation be as a function of the channel model,
target block error probability, and code rate (which is below
capacity) ?
\end{question}
\vspace*{0.1cm}

We note that the graphical complexity referred to in this paper
measures the total number of edges used for the representation of
finite-length codes by bipartite graphs. By referring to the total
number of edges, the graphical complexity is strongly related to the
decoding complexity per iteration. This differs from the graphical
complexity in \cite{ArdakaniSFK_Allerton05}, \cite{Hsu_Achilleas1},
\cite{Pfister1} and \cite{Pfister2} which measures the number of
edges per information bit in the asymptotic case where we let the
block length tend to infinity. Although it may appear at first
glance that the aforementioned distinction is just a matter of
normalization, this is not the case: the reason is that given the
target block error probability and the required gap to capacity for
achieving this target with any finite-length block code, one needs
first to calculate the minimal block length which potentially allows
to fulfill these requirements. It is done in this work via the
calculation of classical and recent sphere-packing bounds (see
\cite{Shannon_1959}, \cite{SGB}, \cite{Valembois_Fossorier} and
\cite{ISP08}).

A universal design of LDPC code ensembles which enables these
codes to operate reliably over a multitude of channels is of great
theoretical and practical interest. We refer the reader to recent
studies on universal LDPC codes (see, e.g.,
\cite{universal_LDPC_Comm06},
\cite{universal_LDPC_Comm_Letters_06}, \cite{univeral_LDPC_ISIT08}
and \cite{universal_LDPC_IT07}). A simple modification of the
bounds derived in this paper makes them universal in the sense
that they hold for the set of MBIOS channels which exhibit a given
channel capacity. The universality of the bounds derived in this
paper stems also from the fact that they do not depend on the full
characterization of the LDPC code ensembles, but only on the gap
between the channel capacity and the design rates of these
ensembles, and they also depend on the target bit/ block error (or
erasure) probability. The bounds derived in this work are
expressed in closed form and are easily calculated.

This paper is structured as follows: Section~\ref{Section:
Preliminaries} provides some preliminary material and notation,
Section~\ref{Section: main results} introduces the new
information-theoretic bounds of this paper, Section~\ref{Section:
Proofs of Main Results} then provides their proofs followed by
some discussions, and Section~\ref{Section: Numerical Results}
formulates some algorithms related to the bounds derived in this
paper, it discusses their implications, and provides numerical
results. Finally, Section~\ref{Section: Outlook} summarizes this
work, and it provides some interesting open problems which are
related to this research.

\section{Preliminaries}
\label{Section: Preliminaries} We introduce in this section some
preliminary material and notation which serve for the analysis in
this paper.

\subsection{LDPC Code Ensembles}
\label{LDPC} LDPC codes are linear block codes which are
characterized by sparse parity-check matrices. A parity-check
matrix is represented by a bipartite graph where the variable and
parity-check nodes are on the left and right sides of this graph,
respectively. An edge connects a variable node with a parity-check
node in this graph if the corresponding parity-check equation
involves the code symbol which is represented by this variable
node (it is illustrated in Fig.~\ref{fig:tannergraph}). The
requirement for a sparse parity-check matrix is equivalent to the
requirement that the number of edges in the corresponding
bipartite graph scales linearly with the block length.

We move to consider ensembles of binary LDPC codes. Following
standard notation, let $\lambda_i$ and $\rho_i$ denote the fraction
of edges attached, respectively, to variable and parity-check nodes
of degree~$i$. Let $\Lambda_i$ and $\Gamma_i$ denote, respectively,
the fraction of variable and parity-check nodes of degree~$i$. The
LDPC code ensemble is characterized by a triple $(n,\lambda,\rho)$,
where $n$ designates the block length of the codes, and $\lambda(x)
\triangleq \sum_i \lambda_i x^{i-1}$ and $\rho(x) \triangleq \sum_i
\rho_i x^{i-1}$ represent, respectively, the left and right degree
distributions from the edge perspective. Equivalently, this ensemble
is also characterized by the triple $(n,\Lambda,\Gamma)$ where
$\Lambda(x) \triangleq \sum_i \Lambda_i x^i$ and $\Gamma(x)
\triangleq \sum_i \Gamma_i x^i$ represent, respectively, the left
and right degree distributions from the node perspective. We denote
by $\text{LDPC}(n,\lambda,\rho)$ (or
$\text{LDPC}(n,\Lambda,\Gamma)$) the ensemble whose bipartite graphs
are constructed according to the corresponding pairs of degree
distributions. The connections between the edges $\mathcal{E}$
emanating from the variable nodes to the parity-check nodes are
constructed by first numbering the connectors on the left and on the
right sides of the graph. The number of connectors is the same on
both sides of the graph, and it is equal to $ |\mathcal{E}| = n
\sum_i i \Lambda_i = m \sum_i i \Gamma_i $ where $n$ and $m$
designate the number of variable nodes and parity-check nodes,
respectively. Finally, the edges which connect the variable nodes
with the parity-check nodes of the bipartite graph are determined by
using a permutation $\pi: \{1, \ldots, |\mathcal{E}|\} \rightarrow
\{1, \ldots, |\mathcal{E}|\}$ which is chosen uniformly at random,
and associates connector number $i$ on the left side of this graph
with the connector whose number is $\pi(i)$ on the right. The degree
distributions with respect to the nodes and edges of a bipartite
graph are related via the following equations:
\begin{eqnarray}
&&\Lambda(x) =
\frac{\int_{0}^{x}\lambda(u)du}{\int_{0}^{1}\lambda(u)du} \; ,
\quad \quad \Gamma(x) =
\frac{\int_{0}^{x}\rho(u)du}{\int_{0}^{1}\rho(u)du}
\label{switching between representations_1} \\[0.2cm]
&&\lambda(x) = \frac{\Lambda'(x)}{\Lambda'(1)} \; ,
\hspace*{1.5cm} \rho(x) = \frac{\Gamma'(x)}{\Gamma'(1)}\;.
\label{switching between representations_2}
\end{eqnarray}
For an LDPC code ensemble, whose codes are represented by
parity-check matrices of dimension $m \times n$, the {\em design
rate} is defined as $R_{\text{d}} \triangleq 1 - \frac{m}{n}$.
This forms a lower bound on the rate of any code from this
ensemble, and the rate is equal to the design rate if the
particular parity-check matrix representing this code is full rank
(i.e., there are no redundant parity-check equations in this
matrix). The design rate is expressed in terms of the degree
distributions in the following two forms:
\begin{equation}
R_{\text{d}} = 1-\frac{\int_0^1\rho(x)dx}{\int_0^1\lambda(x)dx} =
1-\frac{\Lambda'(1)}{\Gamma'(1)}\;. \label{design rate of LDPC
ensemble}
\end{equation}
Note that
\begin{eqnarray}
a_{\mathrm{L}} = \Lambda'(1) = \frac{1}{\int_0^1\lambda(x)dx} \,
\label{eq: average left degree} \\ a_{\text{R}} = \Gamma'(1) =
\frac{1}{\int_0^1\rho(x)dx} \label{eq: average right degree}
\end{eqnarray}
designate the average left and right degrees (i.e., the average
degrees of the variable and parity-check nodes, respectively).

\subsection{Functionals Related to Memoryless Binary-Input
Output-Symmetric Channels} \label{subsection: functionals on MBIOS
Channels}

Consider an MBIOS channel whose channel input and channel output
are designated by $X$ and $Y$, respectively, and let
$p_{Y|X}(\cdot|\cdot)$ be its transition probability. The
associated log-likelihood ratio (LLR) $l(y)$ when the channel
output is $Y=y$ is given by
\begin{equation*}
l(y) = \ln \left(\frac{p_{Y|X}(y|0)}{p_{Y|X}(y|1)} \right).
\end{equation*}
The LLR associated with the random variable $Y$ is defined as
$L=l(Y)$. Let $a$ designate the conditional {\em pdf} of the
random variable $L$ given that the channel input is $X=0$ (to be
referred as the $L$-density function). This density function
satisfies the symmetry property $a(l) = e^l \, a(-l)$ for every $l
\in \reals$ \cite{Richardson2}.

This paper relies on the following two functionals (various other
functionals are presented in \cite[Section~4.1]{RiU_book}).

\begin{lemma}{\bf{[Capacity functional]}} Consider an MBIOS
channel whose symmetric $L$-density function is denoted by $a$.
Then the capacity of this channel in units of bits per channel
use, $C = C(a)$, is given by
\begin{equation}
C = \int_{-\infty}^{\infty} a(l) \bigl(1-\log_2(1+e^{-l}) \bigr)
\, \text{d}l.  \label{eq: channel capacity of an MBIOS channel}
\end{equation}
An equivalent form of the capacity is given by
\begin{equation}
C = \int_0^{\infty} a(l) (1+e^{-l}) \left( 1 -
h_2\Bigl(\frac{1}{1+e^l} \Bigr) \right) \, \text{d}l. \label{eq:
channel capacity 2 of an MBIOS channel}
\end{equation}
\end{lemma}
\vspace*{0.1cm} This lemma is proved in \cite[page~193]{RiU_book}.

\begin{definition}{\bf{[The Bhattacharyya functional]}} The
Bhattacharyya constant which is associated with the symmetric
$L$-density function $a$ is given by
\begin{equation}
\mathcal{B}(a) \triangleq \int_{-\infty}^{\infty} a(l)
e^{-\frac{l}{2}} \, \text{d}l. \label{eq: definition of
Bhattacharyya constant}
\end{equation}
\end{definition}

The analysis in this paper relies partially on the {\em stability
condition}. This condition applies to the asymptotic case where we
let the block length tend to infinity, and it forms a necessary
condition for successful decoding in the sense that it requires
that the fixed point of zero error rate be stable. Consider an
LDPC code ensemble with a given pair of degree distributions
$(\lambda, \rho)$ whose transmission takes place over an MBIOS
channel, characterized by its $L$-density function $a$. Then, the
stability condition under BP decoding gets the form (see
\cite[Theorem~4.125]{RiU_book})
\begin{equation}
\mathcal{B}(a) \lambda_2 \rho'(1) < 1. \label{eq: stability
condition}
\end{equation}
The reader is referred to \cite[Section~4.9]{RiU_book} for a
proof.

\subsection{Lower Bound on the Conditional Entropy for Binary
Linear Block Codes Transmitted over MBIOS Channels}
\label{subsection: Lower Bound on the Conditional Entropy for
Binary Linear Block Codes} We start this section by outlining in
Section~\ref{subsubsection: The analysis for a full-rank
parity-check matrix} the derivation of a lower bound on the
conditional entropy of the transmitted codeword given the received
sequence at the output of an MBIOS channel.
Section~\ref{subsubsection: The analysis for a full-rank
parity-check matrix} relies on \cite[Section~IV]{Wiechman_Sason}
and its appendices where it is assumed that the code is
represented by a full-rank parity-check matrix (the same
assumption is also made in \cite[Section~4.11]{RiU_book}).
Section~\ref{subsubsection: An adaptation of the analysis to LDPC
codes} revisits the derivation in Section~\ref{subsubsection: The
analysis for a full-rank parity-check matrix} in order to extend
the bound for the case where the binary linear block code is
represented by a parity-check matrix which is not necessarily
full-rank; this extension was hinted briefly in
\cite[Section~V]{Wiechman_Sason} (along the lines of the section
on numerical results), and we take this occasion to give a
rigorous proof which serves as a crucial preparatory step towards
the analysis in the continuation to this paper.

\subsubsection{The analysis for a full-rank parity-check matrix}
\label{subsubsection: The analysis for a full-rank parity-check
matrix} We assume in the following that the transmission of a
binary linear block code takes place over an MBIOS channel. Let
$\mathcal{C}$ be a binary linear block code of length $n$ and rate
$R$, and let $\mathbf{X}$ and $\mathbf{Y}$ be the transmitted
codeword and received sequence, respectively. Assume that the
codewords of $\mathcal{C}$ have no bits which are set a-priori to
zero. We assume that the code $\mathcal{C}$ is represented by a
parity-check matrix $H$ which is full rank. In the following, $C$
designates the capacity of the communication channel in units of
bits per channel use.
\begin{itemize}
\item{Define an equivalent channel whose output is the LLR of the original
channel}.
\item{The LLR is represented by a pair which includes its sign and absolute value.}
\item{For the characterization of the equivalent channel, let the function $a$ designate
the $L$-density function.}
\item{We randomly generate an i.i.d. sequence $\{L_i\}_{i=1}^n$ with respect to the $L$-density function $a$,
and define}
\begin{eqnarray*}
\Omega_i \triangleq |L_i|, \quad \Theta_i \triangleq \left\{
\begin{array}{ll}
               0  &\mbox{if $L_i > 0$} \\[0.1cm]
               1  &\mbox{if $L_i < 0$} \\[0.1cm]
               0 \; \text{or} \; 1 \; \text{equally likely} &\mbox{if $L_i = 0$}
               \end{array}
               \right. .
\end{eqnarray*}
Note that $\{\Theta_i\}$ is a sequence which represents the signs
of the LLR (conditioned on ${\bf{X}} = {\bf{0}}$).
\item{The output of the equivalent channel is
$\widetilde{\bf{Y}} = (\widetilde{Y}_1, \ldots, \widetilde{Y}_n$)
where
\begin{equation*}
\widetilde{Y}_i = (\Phi_i, \Omega_i), \quad i=1, \ldots, n
\end{equation*}
and $\Phi_i = \Theta_i + X_i$ (modulo-2 addition). This channel is
memoryless.}
\item{The output of this channel at time $i$ is
$\widetilde{Y}_i \in \{0,1\} \times \reals_{+}$. Note that $\Phi_i$
is a binary random variable which is affected by the channel input
$X_i$, and $\Omega_i$ is a non-negative random variable which is not
affected by $X_i$.}
\item{Due to the symmetry of the communication channel}, the {\em
pdf} of the absolute value of the LLR satisfies
\begin{equation*} \label{definition of f_Omega}
f_{\Omega}(\omega) = \left\{
\begin{array}{lr}
\hspace{-1mm}a(\omega) + a(-\omega) = (1+e^{-\omega}) \, a(\omega)  &\mbox{if $\omega > 0$,} \\[0.1cm]
\hspace{-1mm}a(0)  &\mbox{if $\omega=0$}. \\[0.1cm]
\end{array}
\right.
\end{equation*}
\end{itemize}
The conditional entropy of the transmitted codeword given the
received sequence at the output of the MBIOS channel satisfies
\begin{eqnarray}
&& \hspace*{-1.8cm} H(\mathbf{X}|\mathbf{Y}) = H(\mathbf{X}| \widetilde{\mathbf{Y}}) \nonumber \\
&& \hspace*{-0.4cm} = H(\mathbf{X}) +
H(\widetilde{\mathbf{Y}}|\mathbf{X}) - H(\widetilde{\mathbf{Y}}) \nonumber \\
&& \hspace*{-0.4cm} = nR + nH(\widetilde{Y}_1|X_1) - H(\widetilde{\mathbf{Y}}) \nonumber \\
&& \hspace*{-0.4cm} = nR + n[H(\widetilde{Y}_1) -
I(X_1;\widetilde{Y}_1)] - H(\widetilde{\mathbf{Y}}) \label{eq:
chain of equalities for the conditional entropy}
\end{eqnarray}
and
\begin{eqnarray}
&& I(X_1 ; \widetilde{Y}_1) = I(X_1; Y_1) \leq C \label{eq:
trivial upper bound on the mutual information} \\[0.2cm]
&& H(\widetilde{Y}_1) = H(\Phi_1, \Omega_1) \nonumber \\
&& \hspace*{1cm} = H(\Omega_1) + H(\Phi_1 | \Omega_1) \nonumber \\
&& \hspace*{1cm} = H(\Omega_1) + 1. \label{entropy of the scalar
random variable tilde Y}
\end{eqnarray}
The last transition in \eqref{entropy of the scalar random
variable tilde Y} is due to the fact that given the absolute value
of the LLR, its sign is equally likely to be positive or negative.
The entropy $H(\Omega_1)$ is not expressed explicitly as it will
cancel out.

The entropy of the vector $\widetilde{\bf{Y}}$ satisfies
\begin{eqnarray}
&& \hspace*{-1cm} H(\widetilde{\bf{Y}})= H\bigl( \Phi_1,
\Omega_1, \ldots, \Phi_n, \Omega_n \bigr) \nonumber \\
&&= H(\Omega_1, \ldots, \Omega_n) + H\bigl( \Phi_1, \ldots,
\Phi_n \; | \; \Omega_1, \ldots, \Omega_n \bigr) \nonumber \\
&&= n H(\Omega_1) + H\bigl( \Phi_1, \ldots, \Phi_n \; | \;
\Omega_1, \ldots, \Omega_n \bigr). \label{first step}
\end{eqnarray}
\begin{itemize}
\item Define the syndrome vector
$ {\bf{S}} \triangleq (\Phi_1, \ldots, \Phi_n) H^T $. Since $H$ is
assumed to be a full-rank parity-check matrix of $\mathcal{C}$
then ${\bf{S}} \in \{0, 1\}^{n(1-R)}$, i.e., the syndrome
${\bf{S}}$ is composed of $n(1-R)$ binary components.
\item Let $M$ be the index of
the vector $(\Phi_1, \ldots, \Phi_n)$ in the coset which
corresponds to the syndrome $ {\bf{S}} $.
\item $H(M) = nR$ since all the codewords are transmitted with equal probability, and we get
{\setlength\arraycolsep{0pt}
\begin{align}
&H\bigl( \Phi_1, \ldots, \Phi_n \, | \,
\Omega_1, \ldots, \Omega_n \bigr) \nonumber\\[0.1cm]
&= H({\bf{S}}, M \, | \, \Omega_1, \ldots,
\Omega_n \bigr) \nonumber \\[0.1cm]
&\leq H(M) + H\bigl({\bf{S}} \, | \, \Omega_1, \ldots, \Omega_n
\bigr) \nonumber \\[0.1cm]
&\leq nR + \sum_{j=1}^{n(1-R)} H\bigl(S_j \, | \, \Omega_1,
\ldots, \Omega_n \bigr)\,. \label{second step}
\end{align}}
\item Since ${\bf{X}}H^T={\bf{0}}$ for every codeword ${\bf{X}} \in \mathcal{C}$,
and also $\Phi_i = X_i + \Theta_i$ for all $i$, then $ {\bf{S}} =
(\Theta_1, \ldots, \Theta_n) H^T $ is independent of the
transmitted codeword.
\end{itemize}
Combining \eqref{eq: chain of equalities for the conditional
entropy}--\eqref{second step} gives
\begin{equation}
H(\mathbf{X}|\mathbf{Y}) \geq n(1-C) - \sum_{j=1}^{n(1-R)}H(S_j
\big{|} \Omega_1, \ldots, \Omega_n) \label{eq: new inequality for
the conditional entropy}
\end{equation}
where
\begin{itemize}
\item $S_j = 1 $ if and only if $\Theta_i =1$ for an odd number of
indices $i$ in the $j$-th parity-check equation.
\item Due to the symmetry of the channel
\begin{eqnarray*}
&& P(\alpha_i) \triangleq \text{Prob}(\Theta_i = 1 \big{|}
\Omega_i = \alpha_i) \\
&& \hspace*{1cm} = \frac{a(-\alpha_i)}{a(\alpha_i) + a(-\alpha_i)}
= \frac{1}{1+e^{\alpha_i}}.
\end{eqnarray*}
\end{itemize}
In order to calculate the conditional entropy of a single
component of the syndrome, the following lemma is used:
\begin{lemma}
If the $j$-th component of the syndrome $\mathbf{S}$ involves $k$
variables whose indices are $\{i_1,\ldots,i_k\}$ then
\begin{eqnarray*}
&& \text{Prob}(S_j= 1 \big{|} \Omega_{i_1} = \alpha_1, \ldots, \Omega_{i_k} = \alpha_k) \\
&&
=\frac{1}{2}\Bigr[1-\prod_{m=1}^{k}\bigl(1-2P(\alpha_m)\bigr)\Bigr]
\end{eqnarray*}
where
\begin{equation*}
1-2P(\alpha)=\tanh\left(\frac{\alpha}{2}\right).
\end{equation*}
\end{lemma}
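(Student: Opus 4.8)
The plan is to recognize $S_j$ as a modulo-$2$ sum of conditionally independent binary variables and to evaluate its parity probability via the expectation of the sign $(-1)^{S_j}$. By the characterization already given, $S_j=1$ if and only if $\Theta_{i}=1$ for an odd number of the indices $i\in\{i_1,\ldots,i_k\}$ entering the $j$-th parity-check equation; equivalently $S_j=\bigoplus_{m=1}^{k}\Theta_{i_m}$, the addition being modulo~$2$. Because the log-likelihood ratios $\{L_i\}_{i=1}^n$ are generated i.i.d., the pairs $(\Theta_i,\Omega_i)$ are mutually independent across $i$; consequently, conditioned on the event $\{\Omega_{i_1}=\alpha_1,\ldots,\Omega_{i_k}=\alpha_k\}$, the bits $\Theta_{i_1},\ldots,\Theta_{i_k}$ remain independent and each $\Theta_{i_m}$ is Bernoulli with $P(\Theta_{i_m}=1\mid\Omega_{i_m}=\alpha_m)=P(\alpha_m)$, by the channel-symmetry relation stated above.

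First I would pass to the sign representation. For a single bit,
\begin{equation*}
\mathbb{E}\bigl[(-1)^{\Theta_{i_m}}\,\big|\,\Omega_{i_m}=\alpha_m\bigr]
= P(\Theta_{i_m}=0\mid\Omega_{i_m}=\alpha_m) - P(\Theta_{i_m}=1\mid\Omega_{i_m}=\alpha_m)
= 1-2P(\alpha_m).
\end{equation*}
Since $(-1)^{S_j}=\prod_{m=1}^{k}(-1)^{\Theta_{i_m}}$, the conditional independence lets me factor the conditional expectation into a product, giving
\begin{equation*}
\mathbb{E}\bigl[(-1)^{S_j}\,\big|\,\Omega_{i_1}=\alpha_1,\ldots,\Omega_{i_k}=\alpha_k\bigr]
=\prod_{m=1}^{k}\bigl(1-2P(\alpha_m)\bigr).
\end{equation*}
On the other hand, $\mathbb{E}[(-1)^{S_j}\mid\cdots]=P(S_j=0\mid\cdots)-P(S_j=1\mid\cdots)=1-2P(S_j=1\mid\cdots)$, so equating the two expressions and solving for $P(S_j=1\mid\cdots)$ yields the claimed formula $\tfrac12\bigl[1-\prod_{m=1}^{k}(1-2P(\alpha_m))\bigr]$. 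Equivalently, one may prove the product identity by induction on $k$, the base step being the elementary parity of two independent bits; I would use the sign trick since it is cleaner and avoids bookkeeping.

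Finally I would verify the stated identity by direct substitution of $P(\alpha)=\frac{1}{1+e^{\alpha}}$:
\begin{equation*}
1-2P(\alpha)=1-\frac{2}{1+e^{\alpha}}=\frac{e^{\alpha}-1}{e^{\alpha}+1}=\tanh\!\left(\frac{\alpha}{2}\right).
\end{equation*}
The computation itself is routine; the one point deserving care --- and the main obstacle --- is the justification of the conditional independence of the $\Theta_{i_m}$ given the $\Omega$-values, which rests on the i.i.d.\ generation of the $L_i$ and on the fact that the randomization defining $\Theta_i$ in the degenerate case $L_i=0$ is performed independently across coordinates. Once this is in place, the remainder is the standard sign-expectation manipulation above.
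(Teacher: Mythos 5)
Your proof is correct and follows the standard argument --- Gallager's parity lemma, established here via the factorization of the conditional sign-expectation $\expectation[(-1)^{S_j}\mid\Omega_{i_1},\ldots,\Omega_{i_k}]$ over the conditionally independent bits $\Theta_{i_m}$ --- which is essentially the same route as the derivation in \cite{Wiechman_Sason}, to which the paper defers for this lemma. Your attention to the conditional independence of the $\Theta_{i_m}$ given the $\Omega$-values (including the independent randomization when $L_i=0$) and the direct verification that $1-2P(\alpha)=\tanh(\alpha/2)$ are both handled correctly.
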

The proof of this lemma follows from
\cite[Lemma~4.1]{Gallager_1962}.
\begin{itemize}
\item For a parity-check node of degree $k$, the conditional entropy
$H(S_j \big{|} \Omega_1, \ldots, \Omega_n)$ is equal to the
$k$-dimensional integral
\begin{eqnarray*}
&& \int_0^{\infty} \ldots \int_0^{\infty} h_2
\Biggl(\frac{1}{2}\biggr[1-\prod_{m=1}^{k} \tanh
\bigl(\frac{\alpha_m}{2}\bigr) \biggr] \Biggr) \; \prod_{m=1}^k
f_{\Omega}(\alpha_m) \; d\alpha_1 \ldots d\alpha_k
\end{eqnarray*}
where $f_{\Omega}$ is the {\em pdf} of the absolute value of the
LLR, and $h_2$ is the binary entropy function to the base~2.
\item Using the following Taylor series expansion of $h_2$:
\begin{equation}
h_2(x) = 1 - \frac{1}{2 \ln 2} \sum_{p=1}^{\infty}
\frac{(1-2x)^{2p}}{p(2p-1)}, \quad 0 \leq x \leq 1 \label{eq:
power series for h_2}
\end{equation}
then, for a parity-check node of degree~$k$, the above
$k$-dimensional integral is transformed to the following infinite
sum of one-dimensional integrals (see
\cite[Appendix~II]{Wiechman_Sason}):
\begin{eqnarray}
&& \hspace*{-1cm} H(S_j \big{|} \Omega_1, \ldots, \Omega_n) \nonumber \\
&& \hspace*{-1cm} = 1 - \frac{1}{2 \ln 2} \sum_{p=1}^{\infty}
\biggl\{ \frac{1}{p(2p-1)} \cdot \left( \int_0^{\infty} a(l)
(1+e^{-l}) \tanh^{2p}\Bigl( \frac{l}{2} \Bigr) \, \text{d}\l
\right)^k \biggr\} \, . \label{eq: lower bound on conditional
entropy for a parity-check node of degree k}
\end{eqnarray}
\end{itemize}
\vspace*{1mm} For an arbitrary full-rank parity-check matrix of a
binary linear block code $\mathcal{C}$, let $\Gamma_k $ designate
the fraction of the parity-checks involving $k$ variables, and let
$\Gamma(x) \triangleq \sum_{k}\Gamma_k x^k$. The combination of
\eqref{eq: new inequality for the conditional entropy} and
\eqref{eq: lower bound on conditional entropy for a parity-check
node of degree k} leads to the following lower bound on the
conditional entropy of the transmitted codeword given the received
sequence at the channel output:
\begin{equation}
\frac{H(\vect{X}|\vect{Y})}{n} \geq R - C + \frac{1-R}{2\ln 2}\;
\sum_{p=1}^{\infty} \frac{\Gamma(g_p)}{p(2p-1)} \label{eq: Lower
bound on conditional entropy}
\end{equation}
where
\begin{equation}
g_p \triangleq \int_{0}^{\infty} a(l) (1+e^{-l})
\tanh^{2p}\left(\frac{l}{2}\right) dl, \quad p \in \naturals.
\label{eq: definition of g_k}
\end{equation}
The above lower bound on the conditional entropy holds for any
representation of the code by a full-rank parity-check matrix. The
symmetry condition for MBIOS channels states that $a(l) = e^{\, l}
a(-l)$ for all $l \in \reals$, and therefore \eqref{eq: definition
of g_k} gives that
\begin{equation}
g_p =
\expectation\left[\tanh^{2p}\left(\frac{L}{2}\right)\right],\quad
p\in\naturals \label{eq: alternative definition of g_k}
\end{equation}
where $\expectation$ designates the statistical expectation with
respect to the $L$-density function $a$, and $L$ is a random
variable which stands for the LLR at the output of the channel
given that the input bit is zero. Eq.~\eqref{eq: alternative
definition of g_k} implies that the non-negative sequence
$\{g_p\}_{p \geq 1}$ is monotonically non-increasing and it only
depends on the communication channel (but not on the code). Note
also that, from \eqref{eq: alternative definition of g_k}, $0 \leq
g_p < 1$ for all $p \in \naturals$ (unless the channel is perfect,
which then implies that $g_p=1$ for all values of $p$).

We note that the conditional entropy on the LHS of \eqref{eq:
Lower bound on conditional entropy} depends only on the code and
the communication channel, but its lower bound on the RHS of
\eqref{eq: Lower bound on conditional entropy} depends also on the
specific representation of the code by a bipartite graph.

The lower bound in \eqref{eq: Lower bound on conditional entropy}
improves the bound in \cite[Eq.~(15)]{Burshtein_IT2002}, except
for the binary symmetric channel (BSC) where they both coincide.
The reason is that the derivation of \eqref{eq: Lower bound on
conditional entropy} relies on the un-quantized soft output of the
channel whereas the derivation of the bound in
\cite[Eq.~(15)]{Burshtein_IT2002} relies on a two-level
quantization of this output (which therefore does not loosen the
bound for a BSC).

\vspace*{0.1cm}
\subsubsection{An adaptation of the analysis to LDPC codes which
are not necessarily represented by full-rank parity-check matrices}
\label{subsubsection: An adaptation of the analysis to LDPC codes}
The derivation of the lower bound in \eqref{eq: Lower bound on
conditional entropy} relies on the assumption that the parity-check
matrix is full rank. Though it seems like a feasible requirement for
specific binary linear block codes, this poses a problem when
considering ensembles of LDPC codes. In the latter case, a
parity-check matrix which corresponds to a randomly chosen bipartite
graph with a given pair of degree distributions may not be full
rank.\footnote{A concentration of the code rate to the design rate
of LDPC code ensembles is proved asymptotically (for an infinite
block length) under some conditions (see \cite{Miller} and
\cite[Lemma~3.22]{RiU_book}). However, we are interested in a lower
bound on the conditional entropy which also holds for finite-length
binary linear block codes regardless of this asymptotic
concentration property.} To this end, we present the following
lemma:
\begin{lemma}
For (regular and irregular) ensembles of binary LDPC codes, the
inequality in \eqref{eq: Lower bound on conditional entropy} stays
valid for every code from the ensemble with the following
modifications:
\begin{itemize}
\item The rate $R$ of the code is replaced with the design rate ($R_{\text{d}}$)
of the ensemble.
\item The sequence $\{\Gamma_k\}$ denotes the degree distribution of the parity-check
nodes of the ensemble (where the representation of a code by a
parity-check matrix, with the given degree distribution, possibly
includes some linearly dependent rows).
\end{itemize}
\label{lemma: extension of lower bound on the conditional entropy}
\end{lemma}
\begin{proof}
See Appendix~\ref{Appendix: extension of lower bound on the
conditional entropy}.
\end{proof}

\subsection{Sphere-Packing Bounds}
Sphere-packing bounds are commonly used for the study of the
performance limitations of finite-length error-correcting codes
over memoryless symmetric channels. For a tutorial on classical
sphere-packing bounds, the reader is referred to
\cite[Chapter~5]{Tutorial}. This paper relies on the following
sphere-packing bounds (see Section~\ref{Numerical Results for
Finite-Length Analysis}):
\begin{itemize}
\item The {\em SP59 bound}: The 1959 sphere-packing (SP59) bound of Shannon
\cite{Shannon_1959} serves for the evaluation of the performance
limits of block codes whose transmission takes place over an AWGN
channel. This lower bound on the decoding error probability is
expressed in terms of the block length and the rate of the code;
however, it does not take into account the modulation used, but
only assumes that the modulated signals have equal energy. It is
often used as a reference for quantifying the sub-optimality of
error-correcting codes under some practical decoding algorithms
(see \cite[Chapter~5]{Tutorial} and references therein). An
efficient algorithm for the calculation of the SP59 bound is
introduced in \cite[Section~IV.C]{ISP08}.
\item The {\em ISP bound}: This sphere-packing bound was recently
derived in \cite[Section~III]{ISP08}. The ISP bound applies to all
memoryless symmetric channels. For codes of finite block length,
it improves the classical sphere-packing bound of Shannon,
Gallager and Berlekamp \cite{SGB} and the sphere-packing bound of
Valembois and Fossorier \cite{Valembois_Fossorier} where this
improvement is especially pronounced for short to moderate block
lengths. We note that the ISP bound in \cite{ISP08} is not
uniformly tighter than the SP59 bound for equi-energy signals
transmitted over an AWGN channel.
\end{itemize}
Comparisons between the sphere-packing bounds in
\cite{Shannon_1959}, \cite{Valembois_Fossorier} and
\cite[Section~III]{ISP08} are shown in \cite[Section~V]{ISP08}.
\label{sub-section: sphere-packing bounds}

\subsection{Cycles in Graphs}
\label{subsection: elements from graph theory}

We consider in this paper the cycles in bipartite graphs which
represent capacity-approaching LDPC code ensembles. To this end,
we define and exemplify some notions which are relevant to the
analysis in this paper.

\begin{definition}{\bf[Cycle and cycle length]} A {\em cycle} in
an un-directed graph is a closed path. The length of a cycle is
the number of edges on this closed path. The {\em girth} of an
un-directed graph is defined as the shortest length of its cycles.
\end{definition}

\begin{definition}{\bf{[Tree]}} A {\em tree} is a connected graph that has no
cycles. \label{definition: tree}
\end{definition}
From Definition~\ref{definition: tree}, a removal of any edge from
a tree makes the graph disconnected. An important property of
trees is that any two vertices are connected by a single path.

Every graph $\mathcal{G}$ has subgraphs that are trees. This
motivates the following definition:
\begin{definition}{\bf{[Spanning tree]}}
A {\em spanning tree} of a connected graph $\mathcal{G}$ is a tree
which spans all the vertices of $\mathcal{G}$. Note that by
repeatedly removing edges which originally create cycles in the
graph, it follows that every connected graph has a spanning tree.
\label{definition: spanning tree}
\end{definition}

\begin{definition}{\bf{[Number of components of a graph]}}
Let $\mathcal{G}$ be a possibly disconnected graph. The {\em number
of components} of $\mathcal{G}$ is the minimal number of its
connected subgraphs whose union forms the graph $\mathcal{G}$
(clearly, a connected graph has a single component).
\label{definition: number of components of a graph}
\end{definition}

\begin{definition}{\bf{[Cycle rank]}} Let $\mathcal{G}$ be an
un-directed graph with $|V_{\mathcal{G}}|$ vertices,
$|E_{\mathcal{G}}|$ edges and $C(\mathcal{G})$ components. The
{\em cycle rank} of $\mathcal{G}$, denoted by
$\beta(\mathcal{G})$, is defined as the maximal number of edges
which can be removed from the graph without increasing its number
of components (note that each component becomes a spanning tree
after the removal of these edges). \label{definition: cycle rank}
\end{definition}
From Definition~\ref{definition: cycle rank}, the cycle rank of a
graph is a measure of the edge redundancy with respect to the
connectedness of this graph. The cycle rank satisfies the
following equality (see
\cite[p.~154]{Graph_theory_and_applications}):
\begin{equation}
\beta(\mathcal{G}) = |E_{\mathcal{G}}| - |V_{\mathcal{G}}| +
C(\mathcal{G}). \label{eq: cycle rank}
\end{equation}

\begin{definition}{\bf{[Full spanning forest]}}
Let $\mathcal{G}$ be an un-directed graph. A {\em full spanning
forest} $\mathcal{F}$ of the graph $\mathcal{G}$ is the subgraph of
$\mathcal{G}$ that results from removing the $\beta(\mathcal{G})$
edges from Definition~\ref{definition: cycle rank}. Clearly, the
number of components of $\mathcal{F}$ and $\mathcal{G}$ is the same.
Note that a graph may have a multiplicity of full spanning forests.
\label{definition: full spanning forest}
\end{definition}

\begin{definition}{\bf{[Fundamental cycle]}}
Let $\mathcal{F}$ be a full spanning forest of an un-directed
graph $\mathcal{G}$, and let $e$ be an edge in the relative
complement of $\mathcal{F}$. The cycle of the subgraph
$\mathcal{F}\cup\{e\}$ (whose existence and uniqueness is
guaranteed by
\cite[Theorem~3.1.11]{Graph_theory_and_applications}) is called a
{\em fundamental cycle} of $\mathcal{G}$ which is associated with
$\mathcal{F}$. \label{definition: fundamental cycle}
\end{definition}

\begin{remark}
Each of the edges in the relative complement of a full spanning
forest $\mathcal{F}$ gives rise to a {\em different} fundamental
cycle of the graph $\mathcal{G}$. \label{remark: on the different
fundamental cycles}
\end{remark}

\begin{definition}{\bf{[Fundamental system of cycles]}}
The {\em fundamental system of cycles} of a graph $\mathcal{G}$
which is associated with a full spanning forest $\mathcal{F}$ is
the set of all fundamental cycles of $\mathcal{G}$ associated with
$\mathcal{F}$.
\end{definition} \label{definition: fundamental system of cycles}

\begin{remark}
From Remark~\ref{remark: on the different fundamental cycles}, the
cardinality of the fundamental system of cycles of $\mathcal{G}$
associated with a full spanning forest of this graph is equal to
the cycle rank $\beta(\mathcal{G})$. \label{remark: equality
between the fundamental system of cycles and cycle rank}
\end{remark}

\begin{example}{\bf{[Fundamental system of cycles in a bipartite graph]}}
This example refers to the bipartite graph in
Fig.~\ref{fig:tannergraph}. This graph is connected, but it is
clearly not a tree. As an example, consider the cycle $ \langle
v_9, c_4, v_{10}, c_5, v_9 \rangle$ whose length is~4. Since the
number of vertices in this graph is~15 and the number of its edges
is~30, then from \eqref{eq: cycle rank}, the cycle rank of this
connected graph is $30-15+1=16$.

\begin{figure}[htp]
\begin{center}
\parbox[c]{5.7cm}{
\[
\arraycolsep=2pt H:=
\begin{array}{crcccccccccccccl}
 &  \text{\scriptsize 1} &  \text{\scriptsize 2} &  \text{\scriptsize 3} &  \text{\scriptsize 4} &
 \text{\scriptsize 5} &  \text{\scriptsize 6} &  \text{\scriptsize 7} &  \text{\scriptsize \bf 8}
 &  \text{\scriptsize 9} &  \text{\scriptsize 10}\\
\begin{array}{@{}c@{}}
\text{\scriptsize 1} \\
\text{\scriptsize \bf 2} \\
\text{\scriptsize 3} \\
\text{\scriptsize 4} \\
\text{\scriptsize 5} \\
\end{array}
& \left( \vphantom{\begin{array}{@{}c@{}} 0 \\ 0 \\ 0 \\ 0 \\ 0
\end{array}} \right.
\begin{array}{@{}c@{}}
1 \\ {\bf 0} \\ 0 \\ 1 \\ 1
\end{array}
&
\begin{array}{@{}c@{}}
1 \\ {\bf 0} \\ 1 \\ 0 \\ 1
\end{array}
&
\begin{array}{@{}c@{}}
1 \\ {\bf 1} \\ 0 \\ 1 \\ 0
\end{array}
&
\begin{array}{@{}c@{}}
1 \\ {\bf 1} \\ 1 \\ 0 \\ 0
\end{array}
&
\begin{array}{@{}c@{}}
0 \\ {\bf 1} \\ 0 \\ 1 \\ 1
\end{array}
&
\begin{array}{@{}c@{}}
1 \\ {\bf 1} \\ 1 \\ 0 \\ 0
\end{array}
&
\begin{array}{@{}c@{}}
1 \\ {\bf 1} \\ 0 \\ 0 \\ 1
\end{array}
&
\begin{array}{@{}c@{}}
{\bf 0} \\ {\bf 1} \\ {\bf 1} \\ {\bf 1} \\ {\bf 0}
\end{array}
&
\begin{array}{@{}c@{}}
0 \\ {\bf 0} \\ 1 \\ 1 \\ 1
\end{array}
&
\begin{array}{@{}c@{}}
0 \\ {\bf 0} \\ 1 \\ 1 \\ 1
\end{array}
& \left. \vphantom{\begin{array}{@{}c@{}} 0 \\ 0 \\ 0 \\ 0 \\ 0
\end{array}} \right)
\end{array}
\]
}
\parbox[c]{1cm}{\phantom{xxxx}}
\parbox[c]{4.5cm}{\input{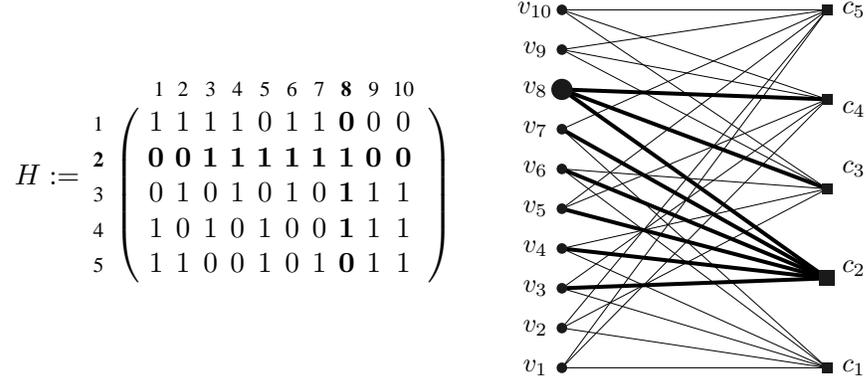}}\\[0.8cm]
\caption{\label{fig:tannergraph} A parity-check matrix $H$ and the
corresponding bipartite graph. For illustrating this relationship,
column~8 and row~2 of $H$ are bolded; the corresponding variable
and parity-check nodes, and the attached edges are also bolded
(this figure appears in \cite{comm03}).}
\end{center}
\end{figure}

In order to get a spanning tree of the graph in
Fig.~\ref{fig:tannergraph}, we remove repeatedly 16~edges which
create cycles while preserving the connectivity of the graph.

\begin{figure}[here!]
\begin{center}
\parbox[c]{5.7cm}{
\[
\arraycolsep=2pt \widetilde{H}=
\begin{array}{crcccccccccccccl}
 &  \text{\scriptsize 1} &  \text{\scriptsize 2} &  \text{\scriptsize 3} &  \text{\scriptsize 4} &
 \text{\scriptsize 5} &  \text{\scriptsize 6} &  \text{\scriptsize 7} &  \text{\scriptsize 8}
 &  \text{\scriptsize 9} &  \text{\scriptsize 10}\\
\begin{array}{@{}c@{}}
\text{\scriptsize 1} \\
\text{\scriptsize 2} \\
\text{\scriptsize 3} \\
\text{\scriptsize 4} \\
\text{\scriptsize 5} \\
\end{array}
& \left( \vphantom{\begin{array}{@{}c@{}} 0 \\ 0 \\ 0 \\ 0 \\ 0
\end{array}} \right.
\begin{array}{@{}c@{}}
1 \\ 0 \\ 0 \\ 1 \\ 1
\end{array}
&
\begin{array}{@{}c@{}}
1 \\ 0 \\ {\bf 0} \\ 0 \\ {\bf 0}
\end{array}
&
\begin{array}{@{}c@{}}
1 \\ 1 \\ 0 \\ {\bf 0} \\ 0
\end{array}
&
\begin{array}{@{}c@{}}
{\bf 0} \\ 1 \\ {\bf 0} \\ 0 \\ 0
\end{array}
&
\begin{array}{@{}c@{}}
0 \\ {\bf 0} \\ 0 \\ 1 \\ {\bf 0}
\end{array}
&
\begin{array}{@{}c@{}}
{\bf 0} \\ 1 \\ {\bf 0} \\ 0 \\ 0
\end{array}
&
\begin{array}{@{}c@{}}
{\bf 0} \\ 1 \\ 0 \\ 0 \\ {\bf 0}
\end{array}
&
\begin{array}{@{}c@{}}
0 \\ 1 \\ {\bf 0} \\ {\bf 0} \\ 0
\end{array}
&
\begin{array}{@{}c@{}}
0 \\ 0 \\ 1 \\ {\bf 0} \\ 1
\end{array}
&
\begin{array}{@{}c@{}}
0 \\ 0 \\ {\bf 0} \\ {\bf 0} \\ 1
\end{array}
& \left. \vphantom{\begin{array}{@{}c@{}} 0 \\ 0 \\ 0 \\ 0 \\ 0
\end{array}} \right)
\end{array}
\]
}
\parbox[c]{0.7cm}{\phantom{xxxx}}
\caption{\label{fig:new parity-check matrix} A parity-check matrix
which corresponds to a spanning tree of the bipartite graph in
Fig.~\ref{fig:tannergraph}. As compared to the parity-check matrix
$H$ in Fig.~\ref{fig:tannergraph}, the new parity-check matrix
$\widetilde{H}$ is obtained by changing the values of the bolded
16~entries from~$1$ to~$0$.}
\end{center}
\end{figure}
The parity-check matrix $\widetilde{H} = [\widetilde{h}_{i,j}]$ in
Fig.~\ref{fig:new parity-check matrix}, with 16~bolded zero
entries which correspond to the removed edges from the original
graph in Fig.~\ref{fig:tannergraph}, represents a spanning tree of
this graph. To exemplify its connectivity, note that the variable
nodes $v_5$ and $v_6$ are connected by the path $\langle v_6, c_2,
v_3, c_1, v_1, c_4, v_5 \rangle$ which is of length~6. This path
can be observed directly from the parity-check matrix
$\widetilde{H} = [\widetilde{h}_{i,j}]$ by alternate horizontal
and vertical moves through the ones of $\widetilde{H}$;
explicitly, this path is determined by a horizontal move from
$\widetilde{h}_{2,6}$ to $\widetilde{h}_{2,3}$, a vertical move to
$\widetilde{h}_{1,3}$, a horizontal move to $\widetilde{h}_{1,1}$,
a vertical move to $\widetilde{h}_{4,1}$ and finally a horizontal
move to $\widetilde{h}_{4,5}$. In a similar way, it can be
verified that every two vertices in the bipartite graph of
$\widetilde{H}$ are connected, and it spans all the 15~vertices of
the graph in Fig.~\ref{fig:tannergraph} (since there is no row or
column in $\widetilde{H}$ which is a zero vector). Hence, this
graph is indeed a spanning tree of the bipartite graph in
Fig.~\ref{fig:tannergraph}. This spanning tree enables to obtain a
set of~16 fundamental cycles by returning back a single bolded
zero in Fig.~\ref{fig:new parity-check matrix} (among its 16
bolded zeros) to~1. For example, by setting $\widetilde{h}_{1,6} =
1$ (which is equivalent to returning the edge which connects $v_6$
with $c_1$), we get the fundamental cycle $\langle v_3, c_2, v_6,
c_1, v_3 \rangle$. \label{Example for fundamental cycles}
\end{example}

\subsection{Notation}
We consider in this paper sequences of capacity-approaching LDPC
code ensembles, and refer to the case where the fractional gap (in
rate) to capacity $(\varepsilon)$ vanishes. Accordingly, based on
standard notation \cite{Big-O notation}, we define
\begin{itemize}
\item $f(\varepsilon) = O\bigl(g(\varepsilon)\bigr)$ means that there are
positive constants $c$ and $\delta$, such that $0 \leq
f(\varepsilon) \leq c \; g(\varepsilon)$ for all $0 \leq
\varepsilon \leq \delta$.
\item $f(\varepsilon) = \Omega\bigl(g(\varepsilon)\bigr)$ means that there are positive constants
$c$ and $\delta$, such that $0 \leq c \; g(\varepsilon) \leq
f(\varepsilon)$ for all $0 \leq \varepsilon \leq \delta$.
\end{itemize}
Note that the values of $c$ and $\delta$ must be fixed, and should
not depend on~$\varepsilon$.

Throughout the paper
\begin{equation*}
h_2(x) \triangleq -x \log_2(x) - (1-x) \log_2(1-x), \quad 0 \leq x
\leq 1
\end{equation*}
denotes the binary entropy function to the base~2, and
$h_2^{-1}:[0,1] \rightarrow \left[0, \frac{1}{2}\right]$ is the
inverse of the restriction of $h_2$ to $\left[0,
\frac{1}{2}\right]$. We also denote the block error probability
and the bit error probability of a code by $P_{\text{B}}$ and
$P_{\text{b}}$, respectively (for the BEC, the error probability
is replaced with an erasure probability). Note that $P_{\text{b}}$
refers to the bit error probability of the information bits.

This paper is focused on the analysis for MBIOS channels. For
basic definitions and examples of MBIOS channels, the reader is
referred to \cite[Section~4.1]{RiU_book} (which uses a slightly
different abbreviation: BMS channels).

For further notation used throughout this paper,
Section~\ref{LDPC} provides the setting and notation for the
degree distributions and the design rate of LDPC code ensembles,
Section~\ref{subsection: functionals on MBIOS Channels} provides
the notation for the capacity and Bhattacharyya functionals,
Section~\ref{subsection: Lower Bound on the Conditional Entropy
for Binary Linear Block Codes} presents the notation for the lower
bound on the conditional entropy (see \eqref{eq: Lower bound on
conditional entropy}--\eqref{eq: alternative definition of g_k}),
and Section~\ref{subsection: elements from graph theory} provides
the terminology and notation used here in the context of cycles in
bipartite graphs.

\section{New Information-Theoretic Bounds}
\label{Section: main results} This section introduces
information-theoretic bounds which are related to the degree
distributions, graphical complexity, and the number of fundamental
systems of cycles in the bipartite graphs of LDPC code ensembles.
\begin{theorem}{\bf{[On the average degree of the parity-check nodes]}}  Let
$\mathcal{C}$ be a binary linear block code of block length $n$
whose transmission takes place over an MBIOS channel. Let
$\mathcal{G}$ be a bipartite graph which corresponds to a
full-rank parity-check matrix of $\mathcal{C}$. Let $C$ designate
the capacity of the channel, in bits per channel use, and $a$ be
the $L$-density function of this channel. Assume that the code
rate is (at least) a fraction $1-\varepsilon$ of the channel
capacity (where $0 < \varepsilon < 1$), and the code achieves a
block error probability $P_{\text{B}}$ or a bit error probability
$P_{\text{b}}$ under some decoding algorithm. Then, the average
right degree of the bipartite graph (i.e., the average degree of
the parity-check nodes in $\mathcal{G}$) satisfies
\begin{equation}
a_{\text{R}} \geq \frac{2\ln\left(\frac{1}{1-2
h_2^{-1}\Bigl(\frac{1-C-\delta}
{1-(1-\varepsilon)C}\Bigr)}\right)}{\ln\Bigl(\frac{1}{g_1}\Bigr)}
\label{eq: lower bound on a_R with finite P_b}
\end{equation}
where $g_1$ is given in \eqref{eq: definition of g_k} (and it
depends only on the channel), and
\begin{eqnarray}
\delta \triangleq \left\{ \begin{array}{ll}
                  P_{\text{B}} + \frac{h_2(P_{\text{B}})}{n} &
                  \mbox{for a block error probability
                  $P_{\text{B}}$} \\
                  h_2(P_{\text{b}}) & \mbox{for a bit error probability
                  $P_{\text{b}}$}
                  \end{array}
                  \right. .  \label{delta}
\end{eqnarray}
Furthermore, among all the MBIOS channels which exhibit a given
capacity $C$ and for which a target block error probability
$(P_{\text{B}})$ or a bit error probability $(P_{\text{b}})$ is
obtained under some decoding algorithm, a universal lower bound on
$a_{\text{R}}$ holds by replacing $g_1$ on the RHS of \eqref{eq:
lower bound on a_R with finite P_b} with $C$.

For the BEC, the following tightened version of \eqref{eq: lower
bound on a_R with finite P_b} holds:
\begin{equation}
a_{\text{R}} \geq \frac{\ln\left(1+\frac{p-
P_{\text{b}}}{(1-p)\varepsilon+P_{\text{b}}}
\right)}{\ln\left(\frac{1}{1-p}\right)} \label{eq: lower bound on
a_R for the BEC with finite P_b}
\end{equation}
where $p$ is the erasure probability of the channel, and
$P_{\text{b}}$ is the bit erasure probability at the decoder.
\label{Theorem: Lower bound on right degree}
\end{theorem}

\vspace*{0.2cm}
\begin{remark}{\bf{[The relation of Theorem~\ref{Theorem: Lower bound on right
degree} to the bound in \cite{Wiechman_Sason}]}} In the particular
case where $P_{\text{b}}$ vanishes, the bound in \eqref{eq: lower
bound on a_R with finite P_b} forms a tightened version of the
bound given in \cite[Eq.~(77)]{Wiechman_Sason}. This point is
clarified in Discussion~\ref{Discussion: on the proof of lower
bound on average right degree} which succeeds the proof of
Theorem~\ref{Theorem: Lower bound on right degree} (see
page~\pageref{Discussion: on the proof of lower bound on average
right degree}). In the limit where the gap (in rate) to capacity
vanishes (and with vanishing $P_{\text{b}}$), the lower bounds on
the average right degree in \eqref{eq: lower bound on a_R with
finite P_b} and \cite[Eq.~(77)]{Wiechman_Sason} both grow like the
logarithm of the inverse of this gap, and they therefore possess
the same asymptotic behavior where
\begin{equation}
a_{\text{R}} \triangleq a_{\text{R}}(\varepsilon) = \Omega \left(
\ln \frac{1}{\varepsilon} \right). \label{eq: asymptotic behavior
of the average right degree}
\end{equation}
However, in spite of the similarity in the asymptotic behavior of
the two lower bounds as $\varepsilon \rightarrow 0$, they may
differ significantly even for rather small values of $\varepsilon$
(see Example~\ref{example: comparison of numerical results for the
new and old lower bounds} on p.~\pageref{example: comparison of
numerical results for the new and old lower bounds}).

Theorem~\ref{Theorem: Lower bound on right degree} also provides a
universal lower bound on the average right degree for the set of
all MBIOS channels with a given capacity $C$. This theorem states
the conditions where the bound in \eqref{eq: lower bound on a_R
with finite P_b} gets its extreme values among all MBIOS channels
which exhibit a given capacity. \label{Remark: connection with
previous lower bound on average right degree}
\end{remark}

\begin{remark}{\bf{[Adaptation of Theorem~\ref{Theorem: Lower bound on right
degree} to LDPC code ensembles]}} As is clarified in
Discussion~\ref{discussion: extension of Theorem 1 to LDPC
ensembles} (see page~\pageref{discussion: extension of Theorem 1
to LDPC ensembles}), Theorem~\ref{Theorem: Lower bound on right
degree} can be adapted to hold for an arbitrary ensemble of
$(n,\lambda,\rho)$ LDPC codes. In this case, the requirement of a
full-rank parity-check matrix of a particular code $\mathcal{C}$
from this ensemble is relaxed by requiring that the design rate of
the LDPC code ensemble is equal to a fraction $1-\varepsilon$ of
the channel capacity. In this case, $P_{\text{b}}$ and
$P_{\text{B}}$ stand for the average bit and block error (or
erasure) probabilities of the ensemble under some decoding
algorithm. \label{remark: extension of Theorem 1 to LDPC
ensembles}
\end{remark}

\begin{remark}{\bf[The graphical complexity of finite-length LDPC
codes]} In Section~\ref{Numerical Results for Finite-Length
Analysis}, we apply Theorem~\ref{Theorem: Lower bound on right
degree} and sphere-packing bounds on the decoding error
probability (see \cite{Shannon_1959}, \cite{SGB},
\cite{Valembois_Fossorier}, \cite{ISP08}) to obtain
information-theoretic lower bounds on the graphical complexity of
finite-length LDPC codes. These bounds are expressed as a function
of the target block error probability and the gap between the
design rate of the code and the channel capacity. We note that in
this context, the graphical complexity measures the number of
edges used for the representation of finite-length codes by
bipartite graphs. By referring to the total number of edges, the
graphical complexity is strongly related to the decoding
complexity per iteration. The bounds are compared with
capacity-approaching LDPC code ensembles under BP decoding, and
they are shown to be informative (see Section~\ref{Numerical
Results for Finite-Length Analysis}).
\end{remark}

\vspace*{0.1cm} Based on Remark~\ref{remark: extension of Theorem
1 to LDPC ensembles} and the background which is provided in
Section~\ref{subsection: elements from graph theory}, the
following result is derived:

\begin{corollary}{\bf{[On the asymptotic average cardinality of the fundamental system
of cycles of LDPC code ensembles]}} Let $\bigl\{\bigl(n, \lambda,
\rho\bigr)\bigr\}$ be a sequence of LDPC code ensembles whose
transmission takes place over an MBIOS channel. Let the design
rate of these ensembles be a fraction $1-\varepsilon$ of the
channel capacity $C$, and assume that the average bit error/
erasure probability of this sequence vanishes under some decoding
algorithm as we let the block length $(n)$ tend to infinity.
Consider the average cardinality of the fundamental system of
cycles in bipartite graphs from the LDPC code ensemble $(n,
\lambda, \rho)$ where the graphs are chosen uniformly at random
(from Remark~\ref{remark: equality between the fundamental system
of cycles and cycle rank}, the cardinality of the fundamental
system of cycles in a graph $\mathcal{G}$ is equal to its cycle
rank $\beta(\mathcal{G})$). Then, the following asymptotic lower
bound holds:
\begin{eqnarray}
&& \hspace*{-1cm} \liminf_{n \rightarrow \infty}
\frac{\expectation_{\text{LDPC}(n,\lambda,\rho)}\bigl[\beta(\mathcal{G})\bigr]}{n}
\nonumber \\ && \hspace*{-1cm} \geq \frac{(1-C) \; \ln\biggl(g_1
\; \left[1-2 h_2^{-1}\Bigl(\frac{1-C}
{1-(1-\varepsilon)C}\Bigr)\right]^{-2}\biggr)}{\ln\Bigl(\frac{1}{g_1}\Bigr)}-1
\label{eq: lower bound on the cardinality of the fundamental
system of cycles of LDPC ensembles for general MBIOS channels}
\end{eqnarray}
where $g_1$ is introduced in \eqref{eq: definition of g_k}. For a
BEC whose erasure probability is $p$, a tightened bound gets the
form:
\begin{equation}
\liminf_{n \rightarrow \infty}
\frac{\expectation_{\text{LDPC}(n,\lambda,\rho)}\bigl[\beta(\mathcal{G})\bigr]}{n}
\geq \frac{p \; \ln\left(1-p+\frac{p}{\varepsilon}
\right)}{\ln\left(\frac{1}{1-p}\right)}-1. \label{eq: tightened
lower bound on the cardinality of the fundamental system of cycles
of LDPC ensembles for the BEC}
\end{equation}
\label{corollary: lower bound on the cardinality of the
fundamental system of cycles of LDPC ensembles}
\end{corollary}

\begin{remark}
Corollary~\ref{corollary: lower bound on the cardinality of the
fundamental system of cycles of  LDPC ensembles} provides two
results which are of the type $\Omega \left( \ln
\frac{1}{\varepsilon} \right)$.
\end{remark}

\begin{theorem}{\bf{[On the degree distributions of
capacity-approaching LDPC code ensembles]}} Let $\bigl(n,\lambda,
\rho\bigr)$ (or $(n, \Lambda, \Gamma)$) be an ensemble of LDPC
codes whose transmission takes place over an MBIOS channel. Assume
that the design rate of the ensemble is equal to a fraction
$1-\varepsilon$ of the channel capacity $C$, and let
$P_{\text{b}}$ designate the average bit error (or erasure)
probability of the ensemble under ML decoding or any sub-optimal
decoding algorithm. Then, the following properties hold for an
arbitrary finite (and fixed) degree~$i$
\begin{eqnarray}
&& \Lambda_i(\varepsilon) = O(1) \label{eq: behavior of Lambda_i}\\[0.1cm]
&& \Gamma_i(\varepsilon) = O\bigl(\varepsilon
C+h_2(P_{\text{b}})\bigr) \label{eq: behavior of Gamma_i}\\[0.1cm]
&& \lambda_i(\varepsilon) = O \left(\frac{1}{\ln
\frac{1}{\varepsilon C +h_2(P_{\text{b}})}}\right) \label{eq:
behavior
of lambda_i}\\[0.1cm] && \rho_i(\varepsilon) = O\left(\frac{\varepsilon C+h_2(P_{\text{b}})}{\ln
\frac{1}{\varepsilon C+h_2(P_{\text{b}})}}\right)\,.\label{eq:
behavior of rho_i}
\end{eqnarray}
For the case where the transmission takes place over the BEC, the
bounds above are tightened by replacing $h_2(P_{\text{b}})$ with
$P_{\text{b}}$. \label{Theorem: Behavior of left and right
degrees}
\end{theorem}

\begin{remark}{\bf{[On the connection between Theorems~\ref{Theorem: Lower bound on right degree}
and~\ref{Theorem: Behavior of left and right degrees}]}}
Theorem~\ref{Theorem: Behavior of left and right degrees} implies
that for every capacity-approaching LDPC code ensemble whose bit
error probability vanishes and also for an arbitrary finite
degree~$i$ in their bipartite graphs, the fraction of edges
attached to variable nodes or parity-check nodes of degree~$i$
tends to zero as the gap to capacity $(\varepsilon)$ vanishes.
This conclusion is consistent with Theorem~\ref{Theorem: Lower
bound on right degree} which states that the average left and
right degrees of the bipartite graphs scale at least like $\ln
\frac{1}{\varepsilon}$; hence, these average degrees necessarily
become unbounded as the gap to capacity vanishes.
\end{remark}

\begin{corollary} Under the assumptions of Theorem~\ref{Theorem: Behavior of left
and right degrees}, if the asymptotic bit error/ erasure
probability vanishes then the following properties hold for an
arbitrary finite degree~$i$
\begin{eqnarray*}
&& \Lambda_i = O(1) \, , \quad \quad \quad \; \Gamma_i = O(\varepsilon)\, , \\
&& \lambda_i = O \left(\frac{1}{\ln \frac{1}{\varepsilon}}\right)
\, , \quad \rho_i = O \left(\frac{\varepsilon}{\ln
\frac{1}{\varepsilon}}\right).
\end{eqnarray*}
\label{Corollary: Behavior of the left and right degrees when
P_b=0}
\end{corollary}

\begin{remark}{\bf{[Linear programming upper bounds on the degree
distributions of LDPC code ensembles]}} Theorem~\ref{Theorem:
Behavior of left and right degrees} and Corollary~\ref{Corollary:
Behavior of the left and right degrees when P_b=0} provide
asymptotic results for the degree distributions of LDPC code
ensembles in the limit where the gap to capacity vanishes (i.e.,
$\varepsilon \rightarrow 0$). Section~\ref{LP bounds on degree
distributions of LDPC code ensembles} provides linear programming
(LP) upper bounds on the degree distributions which are expressed
in terms of the target average bit error probability, and the
(possibly non-zero) gap between the channel capacity and the
design rate of the ensemble for achieving this target. Similarly
to Theorem~\ref{Theorem: Behavior of left and right degrees} and
Corollary~\ref{Corollary: Behavior of the left and right degrees
when P_b=0}, the LP bounds in Section~\ref{LP bounds on degree
distributions of LDPC code ensembles} hold under ML decoding, and
are therefore general in terms of the decoding algorithm. We note
that these LP bounds apply to finite-length LDPC code ensembles
and to the asymptotic case of an infinite block length. Analytical
solutions for these LP bounds are provided in Section~\ref{LP
bounds on degree distributions of LDPC code ensembles}, and these
bounds are also compared with some capacity-achieving sequences of
LDPC code ensembles for the BEC under BP decoding. Additional LP
bounds are derived to hold for the set of all the MBIOS channels
which exhibit a given capacity, and that also achieve a target bit
error probability. These universal LP bounds are compared with the
LP bounds which refer to specific MBIOS channels (see
Section~\ref{LP bounds on degree distributions of LDPC code
ensembles}).
\end{remark}

We turn now our attention to sequences of LDPC code ensembles
which asymptotically achieve vanishing bit error probability under
BP decoding. The following theorem gives upper bounds on the
fraction of degree-2 variable nodes $(\Lambda_2)$ and the fraction
of edges attached to these nodes $(\lambda_2)$ for an arbitrary
sequence of LDPC code ensembles whose transmission takes place
over an MBIOS channel. It relies on information-theoretic
arguments and the stability condition. We note that $\lambda_2$ is
involved in the stability condition (see \eqref{eq: stability
condition}). Moreover, some previously reported
information-combining bounds on the performance of LDPC code
ensembles under BP decoding are sensitive to the value of
$\lambda_2$ (see, e.g., \cite{Sutskover_IT07}).
\begin{theorem}{\bf{[On the fraction of degree-2 variable nodes
and the fraction of edges attached to these nodes for LDPC code
ensembles]}} Let $\bigl\{\bigl(n_m, \lambda(x),
\rho(x)\bigr)\bigr\}_{m \geq 1}$ be a sequence of LDPC code
ensembles whose transmission takes place over an MBIOS channel.
Assume that this sequence asymptotically achieves a fraction
$1-\varepsilon$ of the channel capacity under BP decoding with
vanishing bit error probability. Then, the fraction of degree-2
variable nodes satisfies
\begin{eqnarray} && \Lambda_2 < \frac{1-C}{2 \, \mathcal{B}(a)} \left(1+\frac{\varepsilon
C}{1-C}\right) \nonumber \\ && \hspace*{0.8cm} \cdot \left[ 1 +
\frac{\ln\left(\frac{1}{g_1}\right)}{\ln\Biggl(\frac{g_1}{\bigl[1-2h_2^{-1}
\bigl(\frac{1-C}{1-(1-\varepsilon)C}\bigr)\bigr]^2}\Biggr)}
\right] \label{eq: upper bound on fraction of degree-2 variable
nodes}
\end{eqnarray}
and the fraction of edges attached to these nodes satisfies
\begin{equation}
\lambda_2 < \frac{\ln\left(\frac{1}{g_1}\right)}{\mathcal{B}(a) \,
\ln\biggl(\frac{g_1}{\bigl[1-2h_2^{-1}
\bigl(\frac{1-C}{1-(1-\varepsilon)C}\bigr)\bigr]^2}\biggr)}
\label{eq: bound on lambda_2 for MBIOS channels}
\end{equation}
where the Bhattacharyya constant $\mathcal{B}(a)$ and the
parameter $g_1$ are introduced in \eqref{eq: definition of
Bhattacharyya constant} and \eqref{eq: definition of g_k},
respectively. Consider the set of all the MBIOS channels with a
given capacity $C$ and a Bhattacharyya constant $\mathcal{B}(a)$,
for which the bit error probability vanishes under BP decoding.
Then, universal upper bounds on $\Lambda_2$ and $\lambda_2$ hold
for this set of channels by replacing $g_1$ on the RHS of
\eqref{eq: upper bound on fraction of degree-2 variable nodes} and
\eqref{eq: bound on lambda_2 for MBIOS channels}, respectively,
with~$C$.

For a BEC with an erasure probability $p$, the following tightened
bounds hold:
\begin{equation}
\Lambda_2 < \frac{1}{2} \left(1+\frac{\varepsilon (1-p)}{p}\right)
\left[1 + \frac{\ln\left(\frac{1}{1-p}\right)}{\ln\left(1 - p +
\frac{p}{\varepsilon}\right)} \right] \label{eq: upper bound on
fraction of degree-2 variable nodes for the BEC}
\end{equation}
and
\begin{equation}
\lambda_2 < \frac{\ln\left(\frac{1}{1-p}\right)}{p \,
\ln\left(1-p+\frac{p}{\varepsilon}\right)}. \label{eq: bound on
lambda_2 for the BEC}
\end{equation}
\label{Theorem: degree-2 variable nodes}
\end{theorem}

\begin{corollary}
Under the assumptions of Theorem~\ref{Theorem: degree-2 variable
nodes}, in the limit where the gap to capacity vanishes under BP
decoding (i.e., $\varepsilon \rightarrow 0$), the fraction of
degree-2 variable nodes satisfies
\begin{equation}
\Lambda_2 \leq \frac{1-C}{2 \, \mathcal{B}(a)}\ \label{eq: upper
bound on degree-2 variable nodes for capacity-achieving LDPC
ensembles}
\end{equation}
where this upper bound is necessarily not larger than~$\frac{1}{2}$.
Note that this forms a universal upper bound on the fraction of
degree-2 variable nodes for all MBIOS channels with a given capacity
$C$ and a Bhattacharyya constant $\mathcal{B}(a)$ for which the bit
error probability vanishes under BP decoding, and for which the gap
to capacity vanishes. \label{corollary: The fraction of degree-2
variable nodes as the gap to capacity vanishes}
\end{corollary}

In the continuation to this paper, sufficient conditions for the
tightness of \eqref{eq: upper bound on degree-2 variable nodes for
capacity-achieving LDPC ensembles} are considered (see
Lemma~\ref{lemma: L_2 for capacity approaching LDPC ensembles} on
page~\pageref{lemma: L_2 for capacity approaching LDPC
ensembles}).

\begin{remark} Note that for capacity-achieving sequences of
LDPC code ensembles whose transmission takes place over the BEC,
the bound in \eqref{eq: upper bound on degree-2 variable nodes for
capacity-achieving LDPC ensembles} is particularized to
$\frac{1}{2}$ {\em regardless of the erasure probability of this
channel}. This is indeed the case for some sequences of LDPC code
ensembles which achieve the capacity of the BEC under BP decoding
(see, e.g., \cite{LubyMSS_IT01, Oswald-it02,
Shokrollahi-IMA2000}). \label{remark: on the fraction of degree-2
variable nodes for c.a. sequences over the BEC}
\end{remark}

\vspace*{0.2cm}
\begin{corollary}{\bf{[A looser and simpler version of the upper
bound on $\lambda_2$]}} The bound \eqref{eq: bound on lambda_2 for
MBIOS channels} implies that
\begin{equation}
\lambda_2 < \frac{1}{\Bigl[c_1 + c_2 \ln
\left(\frac{1}{\varepsilon}\right)\Bigr]^{+}} \label{eq: loosened
version of the upper bound on lambda_2}
\end{equation}
for some constants $c_1$ and $c_2$ which only depend on the MBIOS
channel, and where $[x]^{+} \triangleq \max(x,1)$; the coefficient
$c_2$ of the logarithm in \eqref{eq: loosened version of the upper
bound on lambda_2} is given by
\begin{equation}
c_2 = \frac{\mathcal{B}(a)}{\ln \left(\frac{1}{g_1}\right)}
\label{eq: c_2}
\end{equation}
and it is strictly positive. \label{Corollary: The fraction of
edges connected to degree-2 variable nodes}
\end{corollary}

In the following proposition, it is shown that for the BEC, the
bounds in \eqref{eq: bound on lambda_2 for the BEC} and \eqref{eq:
loosened version of the upper bound on lambda_2} are tight under
BP decoding.

\begin{proposition}{\bf{[On the tightness of the upper bound on $\lambda_2$
for capacity-achieving sequences of LDPC code ensembles over the
BEC]}} The bounds in \eqref{eq: bound on lambda_2 for the BEC} and
\eqref{eq: loosened version of the upper bound on lambda_2} are
tight for the capacity-achieving sequence of right-regular LDPC
code ensembles over the BEC in \cite{Shokrollahi-IMA2000}. For
this sequence, $\lambda_2 \triangleq \lambda_2(\varepsilon)$
vanishes as $\varepsilon \rightarrow 0$ similarly to the upper
bound in \eqref{eq: loosened version of the upper bound on
lambda_2} with the same coefficient $c_2$ in \eqref{eq: c_2}.
\label{Proposition: the tightness of the upper bound on lambda_2}
\end{proposition}

\section{Proofs and Discussions}
\label{Section: Proofs of Main Results}

\subsection{Proof of Theorem~\ref{Theorem:
Lower bound on right degree}} \label{Subsection: Proof of
Theorem:lower bound on right degree}

Let $\vect{X}$ be a random codeword from the binary linear block
code $\mathcal{C}$, and let $\vect{Y}$ designate the output of the
communication channel when $\vect{X}$ is transmitted. Based on the
assumption that the code $\mathcal{C}$ is represented by a
full-rank parity-check matrix and $\mathcal{G}$ is the
corresponding bipartite graph which represents this code, then
inequality \eqref{eq: Lower bound on conditional entropy} holds.
Since $f(t) = x^t$ is convex for any $x \geq 0$ then Jensen's
inequality gives
\begin{equation*}
\Gamma(x) = \sum_i \Gamma_i x^i \geq x^{\sum_i i\, \Gamma_i} =
x^{a_{\text{R}}}\,,\quad x\geq 0\,.
\end{equation*}
Substituting the inequality above in \eqref{eq: Lower bound on
conditional entropy} implies that
\begin{equation}
\frac{H(\vect{X}|\vect{Y})}{n} \geq  R - C + \frac{1-R}{2\ln 2}\;
\sum_{k=1}^{\infty} \frac{g_k^{a_{\text{R}}}}{k(2k-1)}\,.
\label{eq: Lower bound on conditional entropy in terms of a_R}
\end{equation}
\begin{lemma}
\begin{equation}
g_k \geq \left(g_1\right)^k\, ,\quad \forall k\in\naturals.
\label{eq: g_k and g_1}
\end{equation}
\label{lemma: g_k and g_1}
\end{lemma}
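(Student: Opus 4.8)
The plan is to express both $g_k$ and $g_1$ as moments of one and the same bounded random variable, and then to deduce the inequality from the convexity of the power function. First I would invoke the symmetry-based representation \eqref{eq: alternative definition of g_k}, namely $g_k = \expectation\bigl[\tanh^{2k}(L/2)\bigr]$, where $L$ denotes the LLR at the channel output conditioned on a zero input. Setting $U \triangleq \tanh^2\!\left(\frac{L}{2}\right)$, which is a random variable taking values in $[0,1]$ because $|\tanh| \le 1$, this representation reads $g_k = \expectation\bigl[U^k\bigr]$ for every $k \in \naturals$; in particular $g_1 = \expectation[U]$.

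The decisive step is then a single application of Jensen's inequality. Since the function $x \mapsto x^k$ is convex on $[0,\infty)$ for every integer $k \ge 1$, Jensen's inequality yields $\expectation\bigl[U^k\bigr] \ge \bigl(\expectation[U]\bigr)^k$, which is precisely the claimed bound $g_k \ge (g_1)^k$. Equivalently, one may phrase this via the monotonicity of the $L^p$-norms on the underlying probability space, $\|U\|_k \ge \|U\|_1$ for $k \ge 1$, and raise both sides to the $k$-th power.

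The point to watch --- and the only place where a careless argument fails --- is the direction of the inequality. A pointwise comparison is of no help here: since $U \in [0,1]$ one actually has $U^k \le U$ pointwise, hence $g_k \le g_1$, which is the reverse of what is needed. The inequality $g_k \ge (g_1)^k$ is therefore a genuinely probabilistic statement about the spread of $U$ and must be obtained from convexity (Jensen/H\"older) rather than from any termwise bound. Because \eqref{eq: alternative definition of g_k} already supplies the required expectation representation together with the boundedness of $U$, there is no further obstacle, and the lemma follows at once.
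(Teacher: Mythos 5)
Your proof is correct and is essentially the paper's own argument: the paper also writes $g_k=\expectation\bigl[\tanh^{2k}(L/2)\bigr]$ via \eqref{eq: alternative definition of g_k} and then applies H\"older's inequality in the form $\expectation^{1/k}\bigl[U^k\bigr]\ge\expectation[U]$ with $U=\tanh^2(L/2)$, which is the same $L^p$-norm monotonicity / Jensen step you use. Your remark about the direction of the inequality is a sensible sanity check but does not change the substance.
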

\vspace*{-0.5cm}
\begin{proof}
For $k \geq 1$, Jensen's inequality and \eqref{eq: alternative
definition of g_k} give
\begin{eqnarray*}
& & g_k = \expectation\left[\tanh^{2k}\Bigl(\frac{L}{2}\Bigr)\right]\\
& & \hspace*{0.5cm} \geq \left(\expectation\left[\tanh^{2}
\Bigl(\frac{L}{2}\Bigr)\right]\right)^k\\
& & \hspace*{0.5cm} = \left(g_1\right)^k.
\end{eqnarray*}
\end{proof}
The substitution of \eqref{eq: g_k and g_1} in \eqref{eq: Lower
bound on conditional entropy in terms of a_R} gives
\begin{equation}
\frac{H(\vect{X}|\vect{Y})}{n} \geq  R - C + \frac{1-R}{2\ln 2}\;
\sum_{k=1}^{\infty}
\frac{\left(g_1^{a_{\text{R}}}\right)^k}{k(2k-1)}\,. \label{eq:
Lower bound on conditional entropy in terms of g_1 and a_R}
\end{equation}
The substitution $x = \frac{1-\sqrt{u}}{2}$ in \eqref{eq: power
series for h_2} gives
\begin{equation}
\frac{1}{2\ln2}\sum_{k=1}^{\infty} \frac{u^k}{k(2k-1)} =
1-h_2\left(\frac{1-\sqrt{u}}{2}\right), \; \forall \, u \in [0,1].
\label{eq: power series for 1-h_2}
\end{equation}
Since $0 \leq \tanh^2(x) < 1$ for all $x \in \reals$, we get from
\eqref{eq: alternative definition of g_k} that $0\leq g_1 \leq 1$
(this property holds for the entire sequence
$\{g_k\}_{k=1}^{\infty}$). Substituting \eqref{eq: power series for
1-h_2} into \eqref{eq: Lower bound on conditional entropy in terms
of g_1 and a_R} gives the following lower bound on the conditional
entropy:
\begin{equation}
\frac{H(\vect{X}|\vect{Y})}{n} \geq  1 - C - \left(1-R\right)
h_2\left(\frac{1-g_1^{a_{\text{R}}/2}}{2}\right). \label{eq: Lower
bound on conditional entropy in terms of h_2, g_1 and a_R}
\end{equation}
On the other hand, Fano's inequality provides the upper bound
\begin{equation}
\frac{H(\vect{X}|\vect{Y})}{n} \leq \left\{
\begin{array}{ll}
R \, P_{\text{B}} + \frac{h_2(P_{\text{B}})}{n}  \\
R \, h_2(P_{\text{b}})
\end{array}
\right. \label{eq: upper bound on conditional entropy}
\end{equation}
where, for the bound which is expressed in terms of the bit error
probability $P_{\text{b}}$, one can assume without any loss of
generality that the first $nR$ bits of the code are its
information bits, and their knowledge is sufficient for
determining the codeword.

In order to make the statement also valid for code ensembles (to be
clarified in Discussion~\ref{discussion: extension of Theorem 1 to
LDPC ensembles}), we rely on the inequality $R \leq 1$, and loosen
the bound in \eqref{eq: upper bound on conditional entropy} to get
\begin{equation}
\frac{H(\vect{X}|\vect{Y})}{n} \leq \delta \label{eq: loosened
upper bound on conditional entropy}
\end{equation}
where $\delta$ is introduced in \eqref{delta}. Combining
\eqref{eq: Lower bound on conditional entropy in terms of h_2, g_1
and a_R} and \eqref{eq: loosened upper bound on conditional
entropy} gives
\begin{equation}
\delta \geq  1 - C - (1-R)\,
h_2\left(\frac{1-g_1^{a_{\text{R}}/2}}{2}\right). \label{eq:
intermediate inequality for the generalization to LDPC ensembles}
\end{equation}
Since the RHS of \eqref{eq: intermediate inequality for the
generalization to LDPC ensembles} is monotonically increasing in
$R$, then following our assumption that $R \geq (1-\varepsilon)C$,
the bound is loosened by replacing $R$ with $(1-\varepsilon)C$.
This gives the inequality
\begin{equation*}
h_2\left(\frac{1-g_1^{a_{\text{R}}/2}}{2}\right) \geq
\frac{1-C-\delta}{1-(1-\varepsilon)C}\,.
\end{equation*}
Since the binary entropy function $h_2$ is monotonically
increasing on $[0, \frac{1}{2}]$ then
\begin{equation*}
g_1^{\frac{a_{\text{R}}}{2}} \leq
1-2h_2^{-1}\left(\frac{1-C-\delta}{1-(1-\varepsilon)C}\right)
\end{equation*}
which gives the lower bound on $a_{\text{R}}$ in \eqref{eq: lower
bound on a_R with finite P_b}.

Let us now consider the particular case where the transmission is
over the BEC. Note that for a BEC with erasure probability $p$,
$g_k = 1-p$ for all $k\in\naturals$ (in this case we have $L \in
\{0, +\infty\}$ with probabilities $p$ and $1-p$, respectively,
and the equality $\tanh(+\infty)=1$ is exploited in \eqref{eq:
alternative definition of g_k}). Therefore \eqref{eq: Lower bound
on conditional entropy in terms of a_R} is particularized to
\begin{equation*}
\frac{H(\vect{X}|\vect{Y})}{n} \geq  R - C +
\frac{(1-R)(1-p)^{a_{\text{R}}}}{2\ln 2}\;\sum_{k=1}^{\infty}
\frac{1}{k(2k-1)}\,.
\end{equation*}
Substituting $u = 1$ in \eqref{eq: power series for 1-h_2} gives
the equality
\begin{equation}
\frac{1}{2\ln 2}\sum_{k=1}^{\infty} \frac{1}{k(2k-1)} = 1
\label{eq: infinite series}
\end{equation}
and
\begin{equation}
\frac{H(\vect{X}|\vect{Y})}{n} \geq  R - C +
(1-R)(1-p)^{a_{\text{R}}}\,. \label{eq: almost final lower bound
on conditional entropy for the BEC}
\end{equation}
Note that the RHS of \eqref{eq: almost final lower bound on
conditional entropy for the BEC} is monotonic increasing as a
function of the rate $R$. Following the assumption that $R \geq
(1-\varepsilon)C$ where $C = 1-p$ is the capacity of the BEC, we
get
\begin{equation}
\frac{H(\vect{X}|\vect{Y})}{n} \geq  -\varepsilon (1-p) +
\bigl(1-(1-\varepsilon)(1-p)\bigr)(1-p)^{a_{\text{R}}}\,.
\label{eq: final lower bound on conditional entropy for the BEC}
\end{equation}
Similarly to \eqref{eq: upper bound on conditional entropy} and
\eqref{eq: loosened upper bound on conditional entropy}, we get
for the BEC
\begin{equation}
  \frac{H(\vect{X}|\vect{Y})}{n} \leq P_{\text{b}}
  \label{eq: upper bound on conditional entropy for the BEC}
\end{equation}
where the decoder finds $X_i$ with probability $1-P_{\text{b}}$;
otherwise, the bit $X_i$ is not determined by the decoder, and its
conditional entropy (given the sequence $\vect{Y}$) is upper
bounded by~1 bit. Combining \eqref{eq: final lower bound on
conditional entropy for the BEC} with \eqref{eq: upper bound on
conditional entropy for the BEC} gives
\begin{equation}
P_{\text{b}} \geq  -\varepsilon(1-p) +
\bigl(1-(1-\varepsilon)(1-p)\bigr)(1-p)^{a_{\text{R}}}\,.
\label{eq: expression with a_R and p}
\end{equation}
Finally, the lower bound on the average right degree in \eqref{eq:
lower bound on a_R for the BEC with finite P_b} follows from
\eqref{eq: expression with a_R and p} by simple algebra. Note that
in the case where $P_{\text{b}}=0$, the resulting lower bound
coincides with the result obtained in \cite[p.~1619]{Sason-it03}
(though it was derived there in a different way), and it gets the
form
\begin{equation}
a_{\text{R}} \geq \frac{\ln\left(1+\frac{p}{(1-p)\varepsilon}
\right)}{\ln \left(\frac{1}{1-p}\right)}. \label{eq: lower bound
on a_R for the BEC}
\end{equation}

We wish now to show that among all the MBIOS channels which
exhibit a given capacity $C$, the lower bound on the average
degree of the parity-check nodes as given in \eqref{eq: lower
bound on a_R with finite P_b} attains its maximal and minimal
values for a BSC and BEC, respectively.

\begin{lemma}{\bf{[Extreme values of $g_1$ among all MBIOS channels
with a given capacity]}} Among all the MBIOS channels with a given
capacity $C$, the value of $g_1$ satisfies
\begin{equation}
C \leq g_1 \leq \bigl(1-2h_2^{-1}(1-C)\bigr)^2 \label{eq: extreme
values of g_1}
\end{equation}
and these upper and lower bounds on $g_1$ are attained for a BSC
and BEC, respectively.  \label{lemma: Extreme values of g_1}
\end{lemma}

\begin{proof}
See Appendix~\ref{Appendix: Proof of the lemma on the extreme
values of g_1}.
\end{proof}

\begin{remark}
This lemma is in fact equivalent to the statement in
\cite[Theorem~1]{Jiang_IT08} with the extreme values derived in
its proof (note that \eqref{eq: alternative definition of g_k}
implies that the sequence $\{g_k\}$ is equal to the sequence
$\{m_{2k}\}$ in \cite{Jiang_IT08}, from which the equivalence
between Lemma~\ref{lemma: Extreme values of g_1} and
\cite[Theorem~1]{Jiang_IT08} follows directly). In
Appendix~\ref{Appendix: Proof of the lemma on the extreme values
of g_1}, we present an alternative proof which is more
elementary.\footnote{The author was un-aware of \cite{Jiang_IT08}
until its publication as a journal paper. The alternative proof on
Lemma 5 was found independently of this work.}
\end{remark}

\begin{remark}
The ratio between the upper and lower bounds on $g_1$ (see
Lemma~\ref{lemma: Extreme values of g_1}) is equal to $\eta(C) =
\frac{\bigl(1-2h_2^{-1}(1-C)\bigr)^2}{C}$. Based on \eqref{eq:
power series for 1-h_2}, one can verify that $\eta$ is a monotonic
decreasing function of the capacity where it tends to $2 \ln 2
\approx 1.386$ when $C \rightarrow 0$, and it is~1 (i.e., the
upper and lower bounds coincide) for $C=1$.
\end{remark}

\vspace*{0.2cm} Consider the set of all MBIOS channels with a
given capacity $C$ for which a target block error probability
$(P_{\text{B}}))$ or bit error probability $(P_{\text{b}})$ is
obtained under some decoding algorithm. To complete the proof of
the last statement in Theorem~\ref{Theorem: Lower bound on right
degree}, note that among this set of channels, the lower bound in
\eqref{eq: lower bound on a_R with finite P_b} is maximized or
minimized by maximizing or minimizing the value of $g_1$,
respectively. It therefore follows from Lemma~\ref{lemma: Extreme
values of g_1} that a universal bound on $a_{\text{R}}$ for the
above set of channels holds by replacing $g_1$ on the RHS of
\eqref{eq: lower bound on a_R with finite P_b} with $C$. The gives
the following universal lower bound:
\begin{equation}
a_{\text{R}} \geq \frac{2\ln\left(\frac{1}{1-2
h_2^{-1}\Bigl(\frac{1-C-\delta}
{1-(1-\varepsilon)C}\Bigr)}\right)}{\ln\Bigl(\frac{1}{C}\Bigr)}.
\label{eq: universal lower bound on a_R with finite P_b}
\end{equation}

\subsection*{Discussions on Theorem~\ref{Theorem: Lower bound on right degree} via its Proof}
In the following we discuss Theorem~\ref{Theorem: Lower bound on
right degree} via its proof, and consider some of the
generalizations of this theorem.

\begin{discussion}{\bf{[A discussion on the bounds in Theorem~\ref{Theorem: Lower
bound on right degree} and \cite[Eq.~(77)]{Wiechman_Sason}]}} If the
bit error probability vanishes, the lower bound in \eqref{eq: lower
bound on a_R with finite P_b} forms a tightened version of
\cite[Eq.~(77)]{Wiechman_Sason}. We note that both bounds are based
on \eqref{eq: Lower bound on conditional entropy} but the difference
in their derivation follows since \cite{Wiechman_Sason} relies on
the fact that the RHS of \eqref{eq: Lower bound on conditional
entropy} is an infinite sum of non-negative terms, and a simple
lower bound is obtained in \cite{Wiechman_Sason} by truncating this
sum after its first term. In the proof of Theorem~\ref{Theorem:
Lower bound on right degree}, on the other hand, a tightened lower
bound on the average right degree $(a_{\text{R}})$ is derived by
applying Jensen's inequality to the RHS of \eqref{eq: Lower bound on
conditional entropy} (see \eqref{eq: Lower bound on conditional
entropy in terms of g_1 and a_R}), and calculating exactly the
resulting bound via \eqref{eq: power series for 1-h_2}. In this
context, see Remark~\ref{Remark: connection with previous lower
bound on average right degree} on page~\pageref{Remark: connection
with previous lower bound on average right degree}. The additional
dependence of the bound in \eqref{eq: lower bound on a_R with finite
P_b} on $P_{\text{b}}$ makes Theorem~\ref{Theorem: Lower bound on
right degree} valid for codes of finite block length, whereas the
bound in \cite[Eq.~(77)]{Wiechman_Sason} can be only applied to the
asymptotic case of vanishing bit error (or erasure) probability by
letting the block length tend to infinity. \label{Discussion: on the
proof of lower bound on average right degree}
\end{discussion}

\begin{discussion}{\bf{[An adaptation of Theorem~\ref{Theorem: Lower
bound on right degree} for LDPC code ensembles]}} The statement in
Theorem~\ref{Theorem: Lower bound on right degree} can be adapted
for finite-length LDPC code ensembles whose transmission takes
place over an MBIOS channel. First, from
Section~\ref{subsubsection: An adaptation of the analysis to LDPC
codes}, the lower bound on the conditional entropy~\eqref{eq:
Lower bound on conditional entropy} holds for every code from this
ensemble if we relax the requirement of a full-rank parity-check
matrix, and instead replace the rate $R$ of the code by the design
rate $R_{\text{d}}$ of the ensemble. Similarly to the derivation
of \eqref{eq: Lower bound on conditional entropy in terms of h_2,
g_1 and a_R}, we get
\begin{equation*}
\frac{H(\vect{X}|\vect{Y})}{n} \geq 1-C - (1-R_{\text{d}}) \,
h_2\left(\frac{1-g_1^{a_{\text{R}}/2}}{2}\right).
\end{equation*}
Assume that $R_{\text{d}} \geq (1-\varepsilon)C$. Since the RHS of
the above inequality is monotonic increasing with $R_{\text{d}}$,
then for every code in this ensemble
\begin{equation}
\frac{H(\vect{X}|\vect{Y})}{n} \geq 1-C - \bigl(
1-(1-\varepsilon)C \bigr) \,
h_2\left(\frac{1-g_1^{a_{\text{R}}/2}}{2}\right). \label{eq: lower
bound on conditional entropy for a code whose parity-check matrix
is not necessarily full-rank}
\end{equation}
Note that this lower bound on the conditional entropy is global in
the sense that it does not depend on the code from the $(n,
\lambda, \rho)$ LDPC code ensemble; all these codes are
represented by bipartite graphs whose common value of
$a_{\text{R}}$ is equal to $\bigl(\int_0^1 \rho(x) \, dx
\bigr)^{-1}$. Note also that the parameter $g_1$ does not depend
on the code. Taking the expectation over the LDPC code ensemble
gives
\begin{equation}
\hspace*{-0.1cm} \expectation \left[\frac{H(\vect{X}|\vect{Y})}{n}
\right] \geq 1-C - \bigl( 1-(1-\varepsilon)C \bigr) \,
h_2\biggl(\frac{1-g_1^{a_{\text{R}}/2}}{2}\biggr). \label{eq:
inequality for the LDPC code ensemble}
\end{equation}
Note that $0 \leq g_1 < 1$ (unless $g_1=1$ when the capacity of
the binary-input channel is~1 bit per channel use which implies
that the channel is noiseless).

The loosening of the bound in the transition from \eqref{eq: upper
bound on conditional entropy} to \eqref{eq: loosened upper bound
on conditional entropy} is due to the fact that an upper bound on
the rate $R$ of a code from this ensemble is required; since
binary codes are considered, a trivial upper bound on the rate
is~1 bit per channel use (note that the rate of an arbitrarily
chosen code from this ensemble may exceed the channel capacity).
Due to the concavity of the binary entropy function, Jensen's
inequality gives
\begin{eqnarray}
\expectation \left[\frac{H(\vect{X}|\vect{Y})}{n}\right] \leq
\left\{\begin{array}{ll}
\overline{P_{\text{B}}} + \frac{h_2(\overline{P_{\text{B}}})}{n} \\
h_2(\overline{P_{\text{b}}})
\end{array}
\right.  \label{eq: first Jensen}
\end{eqnarray}
where $\overline{P_{\text{B}}} \triangleq
\expectation\bigl[P_{\text{B}}\bigr]$ and $\overline{P_{\text{b}}}
\triangleq \expectation\bigl[P_{\text{b}}\bigr]$ designate the
average block and bit error probabilities, respectively, of the
ensemble. Combining \eqref{eq: inequality for the LDPC code
ensemble} and \eqref{eq: first Jensen} leads to an adaptation of
Theorem~\ref{Theorem: Lower bound on right degree} for LDPC code
ensembles with the following modifications:
\begin{itemize}
\item The parity-check matrices of the codes are not required to be
full-rank (which otherwise would be problematic for LDPC code
ensembles).
\item The requirement on the rate a code is replaced by the same
requirement on the design rate of the LDPC code ensemble where we
refer to the average block and bit error probabilities of this
ensemble.
\end{itemize}
Note that the adaptation of the statement in Theorem~\ref{Theorem:
Lower bound on right degree} for LDPC code ensembles whose
transmission takes place over the BEC is more direct. For a BEC,
since $h_2(P_{\text{b}})$ on the LHS of \eqref{eq: intermediate
inequality for the generalization to LDPC ensembles} is replaced
by $P_{\text{b}}$ on the LHS of \eqref{eq: expression with a_R and
p}, then there is no need for Jensen's inequality as in \eqref{eq:
first Jensen}. \label{discussion: extension of Theorem 1 to LDPC
ensembles}
\end{discussion}

\begin{discussion}{\bf{[Adaptation of Theorem~\ref{Theorem:
Lower bound on right degree} for punctured LDPC code ensembles]}}
In the following, we consider an adaptation of
Theorem~\ref{Theorem: Lower bound on right degree} for LDPC code
ensembles with random or intentional puncturing where the
transmission takes place over an MBIOS channel. To this end, the
reader is referred to \cite[Section~V]{Sason-it07} where lower
bounds are derived on the average right degree and the graphical
complexity of such ensembles. The derivation of these bounds
relies on a lower bound \cite[Eqs.~(2) and (3)]{Sason-it07} which
generalizes \eqref{eq: Lower bound on conditional entropy} to the
case of statistically independent parallel MBIOS channels. This
lower bound was particularized in
\cite[Sections~II--IV]{Sason-it07} for the two settings of
randomly and intentionally punctured LDPC code ensembles which are
communicated over a single MBIOS channel. The concept of the proof
of Theorem~\ref{Theorem: Lower bound on right degree} enables to
tighten the lower bounds on the average right degree and the
graphical complexity, as presented in
\cite[Section~V]{Sason-it07}, for both randomly and intentionally
punctured LDPC code ensembles. More explicitly, by comparing the
proof of \eqref{eq: lower bound on a_R with finite P_b} with the
derivation of \cite[Eq.~(77)]{Wiechman_Sason} under the assumption
of vanishing bit error probability, one notices that the
tightening of the bound in the former case is enabled by combining
Lemma~\ref{lemma: g_k and g_1} with the equality in \eqref{eq:
power series for 1-h_2} (instead of the truncation of a
non-negative infinite series after its first term, as was done for
the derivation of the looser bound in \cite{Wiechman_Sason}). This
difference can be exploited exactly in the same way in connection
with the results from \cite[Section~V]{Sason-it07} for improving
the tightness of the lower bounds on the average right degree and
the graphical complexity for punctured LDPC code ensembles.
\label{discussion: extension of Theorem 1 to punctured LDPC
ensembles}
\end{discussion}

\subsection*{Proof of Corollary~\ref{corollary: lower bound on the
cardinality of the fundamental system of cycles of LDPC
ensembles}} The following lemma relies on the background material
in Section~\ref{subsection: elements from graph theory}, and it
serves for proving Corollary~\ref{corollary: lower bound on the
cardinality of the fundamental system of cycles of LDPC
ensembles}.
\begin{lemma}{\bf{[Cardinality of the fundamental system of
cycles]}} Under the assumptions of Theorem~\ref{Theorem: Lower
bound on right degree}, the cardinality of the fundamental system
of cycles of a bipartite graph $\mathcal{G}$, associated with a
full spanning forest of $\mathcal{G}$, is larger than
\begin{equation}
n \bigl[(1-R) (a_{\text{R}}-1) - 1 \bigr] \label{eq: fundamental
system of cycles of a Tanner graph}
\end{equation}
where $a_{\text{R}}$ can be replaced by the lower bounds in
\eqref{eq: lower bound on a_R with finite P_b} and \eqref{eq:
lower bound on a_R for the BEC with finite P_b} for a general
MBIOS channel and a BEC, respectively. From \eqref{eq: asymptotic
behavior of the average right degree}, the cardinality of the
fundamental system of cycles of the bipartite graph $\mathcal{G}$
which is associated with a full spanning forest of this graph is
$\Omega \left( \ln \frac{1}{\varepsilon} \right)$. \label{lemma:
lower bound on the cardinality of the fundamental system of cycles
of a binary linear code}
\end{lemma}
\begin{proof}
From Remark~\ref{remark: equality between the fundamental system
of cycles and cycle rank} (see Section~\ref{subsection: elements
from graph theory}), the cardinality of the fundamental system of
cycles of a bipartite graph $\mathcal{G}$, which is associated
with a full spanning forest of $\mathcal{G}$, is equal to the
cycle rank $\beta(\mathcal{G})$. From Eq.~\eqref{eq: cycle rank},
$\beta(\mathcal{G})
> |E_{\mathcal{G}}| - |V_{\mathcal{G}}|$ where $|E_{\mathcal{G}}|$
and $|V_{\mathcal{G}}|$ designate the number of edges and
vertices. Specializing this for a bipartite graph $\mathcal{G}$
which represents a full-rank parity-check matrix of a binary
linear block code, the number of vertices satisfies
$|V_{\mathcal{G}}|=n(2-R)$ (since there are $n$ variable nodes and
$n(1-R)$ parity-check nodes in the graph) and the number of edges
satisfies $|E_{\mathcal{G}}|=n(1-R) a_{\text{R}}$. Combining these
equalities gives the lower bound on the cardinality of the
fundamental system of cycles in \eqref{eq: fundamental system of
cycles of a Tanner graph}.
\end{proof}

The proof of \eqref{eq: lower bound on the cardinality of the
fundamental system of cycles of LDPC ensembles for general MBIOS
channels} and \eqref{eq: tightened lower bound on the cardinality
of the fundamental system of cycles of LDPC ensembles for the BEC}
is based on Remark~\ref{remark: extension of Theorem 1 to LDPC
ensembles} and Lemma~\ref{lemma: lower bound on the cardinality of
the fundamental system of cycles of a binary linear code}. By
substituting $P_{\text{b}}=0$ in \eqref{eq: lower bound on a_R
with finite P_b}, one obtains the following lower bound on the
average right degree as the average bit error probability of the
LDPC code ensemble vanishes:
\begin{equation}
a_{\text{R}} \geq \frac{2\ln\left(\frac{1}{1-2
h_2^{-1}\Bigl(\frac{1-C}
{1-(1-\varepsilon)C}\Bigr)}\right)}{\ln\Bigl(\frac{1}{g_1}\Bigr)}.
\label{eq: lower bound on a_R with vanishing P_b}
\end{equation}
Since the average bit error probability of the ensemble is assumed
to vanish as the block length tends to infinity, then
asymptotically with probability~1, the code rate of an arbitrary
code from the considered ensemble does not exceed the channel
capacity. By substituting the lower bound on $a_{\text{R}}$ from
\eqref{eq: lower bound on a_R with vanishing P_b} and an upper
bound on $R$ (i.e., $R \leq C$) into \eqref{eq: fundamental system
of cycles of a Tanner graph}, the asymptotic result in \eqref{eq:
lower bound on the cardinality of the fundamental system of cycles
of LDPC ensembles for general MBIOS channels} follows readily. A
similar proof of the tightened bound for the BEC in \eqref{eq:
tightened lower bound on the cardinality of the fundamental system
of cycles of LDPC ensembles for the BEC} follows by substituting
$P_{\text{b}}=0$ in \eqref{eq: lower bound on a_R for the BEC with
finite P_b}. This concludes the proof of Corollary~\ref{corollary:
lower bound on the cardinality of the fundamental system of cycles
of LDPC ensembles}.

\subsection{Proof of Theorem~\ref{Theorem: Behavior of left and right degrees}}
\label{Proof of Theorem: Behavior of left and right degrees}
Eq.~\eqref{eq: behavior of Lambda_i} is trivial (though it is
demonstrated in the continuation that, for degree-2 variable
nodes, this result is asymptotically tight as the gap to capacity
vanishes).

We turn now to consider the degrees of the parity-check nodes.
Similarly to Discussion~\ref{discussion: extension of Theorem 1 to
LDPC ensembles} (which succeeds the proof of Theorem~\ref{Theorem:
Lower bound on right degree}), we denote by $\vect{X}$ a random
codeword from the LDPC code ensemble $\bigl(n,\lambda,\rho\bigr)$
where the randomness is over the selected code from the ensemble
and the codeword which is selected from the code. Let $\vect{Y}$
designate the output of the communication channel when $\vect{X}$
is transmitted. From \eqref{eq: Lower bound on conditional
entropy} and its adaptation to LDPC code ensembles (see
Section~\ref{subsubsection: An adaptation of the analysis to LDPC
codes})
\begin{eqnarray}
&& \hspace*{-0.7cm} \frac{H(\vect{X}|\vect{Y})}{n} \nonumber \\
&& \hspace*{-0.7cm} \geq R_{\text{d}} - C +
\frac{1-R_{\text{d}}}{2\ln
2}\; \sum_{k=1}^{\infty} \frac{\Gamma(g_k)}{k(2k-1)}\nonumber\\
&& \hspace*{-0.7cm} = -\varepsilon\,C +
\frac{1-(1-\varepsilon)C}{2\ln 2}\;
\sum_{i=1}^{\infty}\left\{\Gamma_i \,
\sum_{k=1}^{\infty}\frac{g_k^i}{k(2k-1)}\right\} \label{eq: lower
bound on conditional entropy as function of Gamma and g_k}
\end{eqnarray}
where the last equality follows from the equality $\Gamma(x) =
\sum_i \Gamma_i x^i$ (see Section~\ref{LDPC}) and also since, by
assumption, the design rate of the LDPC code ensemble forms a
fraction $1-\varepsilon$ of the channel capacity. Applying
Lemma~\ref{lemma: g_k and g_1} to the RHS of \eqref{eq: lower
bound on conditional entropy as function of Gamma and g_k}, we get
\begin{eqnarray*}
&& \hspace*{-0.6cm} \frac{H(\vect{X}|\vect{Y})}{n} \nonumber \\ &&
\hspace*{-0.6cm} \geq -\varepsilon\,C +
\frac{1-(1-\varepsilon)C}{2\ln 2}\;
\sum_{i=1}^{\infty}\left\{\Gamma_i \, \sum_{k=1}^{\infty}\frac{\left(g_1^i\right)^k}{k(2k-1)}\right\}\nonumber\\
&& \hspace*{-0.6cm} = -\varepsilon\,C +
\big(1-(1-\varepsilon)C\big)
\sum_{i=1}^{\infty}\left\{\left[1-h_2\left(\frac{1-g_1^{i/2}}{2}\right)\right]
\, \Gamma_i \right\}
\end{eqnarray*}
where the last equality follows from \eqref{eq: power series for
1-h_2}. Combining \eqref{eq: loosened upper bound on conditional
entropy} with the last result gives
\begin{equation*}
h_2(P_{\text{b}}) \geq -\varepsilon\,C +
\big(1-(1-\varepsilon)C\big)
\sum_{i=1}^{\infty}\left[1-h_2\left(\frac{1-g_1^{i/2}}{2}\right)\right]
\, \Gamma_i
\end{equation*}
and therefore
\begin{equation}
\sum_{i=1}^{\infty}
\left\{\left[1-h_2\left(\frac{1-g_1^{i/2}}{2}\right)\right] \;
\Gamma_i \right\} \leq \frac{\varepsilon\,C +
h_2(P_{\text{b}})}{1-(1-\varepsilon)C} \label{eq: relationship
between Gamma epsilon and P_b}
\end{equation}
where $P_{\text{b}}$ designates the average bit error probability
of the ensemble under the considered decoding algorithm. Since all
the terms in the sum on the LHS of \eqref{eq: relationship between
Gamma epsilon and P_b} are non-negative, this sum is lower bounded
by its $i$-th term, for any degree $i$. This provides the
following upper bound on the fraction of parity-check nodes of any
finite degree $i$:
\begin{eqnarray}
\hspace*{-0.3cm} \Gamma_i &\leq& \frac{\varepsilon C +
h_2(P_{\text{b}})}{1-(1-\varepsilon)C}\;\frac{1}{1-h_2\left(\frac{1-g_1^{i/2}}{2}\right)}
\nonumber \\
\hspace*{-0.3cm} &\leq& \big(\varepsilon C+
h_2(P_{\text{b}})\big)\left[\frac{1}{1-C}\;
\frac{1}{1-h_2\left(\frac{1-g_1^{i/2}}{2}\right)}\right].
\label{eq: upper bound on Gamma_i}
\end{eqnarray}
This completes the proof of \eqref{eq: behavior of Gamma_i} for a
general MBIOS channel. Let us now consider the particular case where
the transmission is over a BEC with an erasure probability $p$. In
this case, $g_k=1-p$ for all $k\in\naturals$ (this equality follows
directly from \eqref{eq: definition of g_k}), and the channel
capacity is equal to $1-p$ bits per channel use. Therefore,
\eqref{eq: lower bound on conditional entropy as function of Gamma
and g_k} is particularized to
\begin{eqnarray}
&& \hspace*{-1cm} \frac{H(\vect{X}|\vect{Y})}{n} \nonumber \\ &&
\hspace*{-1cm} \geq -\varepsilon(1-p) +
\frac{1-(1-\varepsilon)(1-p)}{2\ln 2} \nonumber \\ &&
\hspace*{1.5cm} \cdot \sum_{i=1}^{\infty}
\left[\Gamma_i\,(1-p)^i \; \sum_{k=1}^{\infty}\frac{1}{k(2k-1)}\right]\nonumber\\
&& \hspace*{-1cm} = -\varepsilon(1-p) +
\big(1-(1-\varepsilon)(1-p)\big)\sum_{i=1}^{\infty}\Gamma_i\,(1-p)^i
\label{eq: lower bound on conditional entropy as function of Gamma
for the BEC}
\end{eqnarray}
where the above equality holds since
$\sum_{k=1}^{\infty}\frac{1}{k(2k-1)} = 2\ln 2$. Applying the
upper bound on the conditional entropy \eqref{eq: upper bound on
conditional entropy for the BEC} to the LHS of \eqref{eq: lower
bound on conditional entropy as function of Gamma for the BEC}, we
get
\begin{equation*}
P_{\text{b}} \geq -\varepsilon(1-p) + \bigl(p+\varepsilon \,
(1-p)\bigr) \, \sum_{i=1}^{\infty}\Gamma_i\,(1-p)^i
\end{equation*}
where $P_{\text{b}}$ denotes the average bit erasure probability
of the ensemble, and therefore
\begin{equation}
\sum_{i=1}^{\infty} \Bigl\{ \Gamma_i (1-p)^i \Bigr\} \leq
\frac{\varepsilon\,(1-p)+P_{\text{b}}}{p + \varepsilon \, (1-p)}.
\label{eq: relationship between Gamma epsilon and P_b for the BEC}
\end{equation}
Since the sum on the LHS of \eqref{eq: relationship between Gamma
epsilon and P_b for the BEC} is of non-negative terms, then we get
\begin{equation}
  \Gamma_i \leq
  \bigl(\varepsilon \, (1-p)
  +P_{\text{b}}\bigr)\;\left(\frac{1}{p\,(1-p)^i}\right)
  \label{eq: upper bound on Gamma_i for the BEC}
\end{equation}
so $h_2(P_{\text{b}})$ in \eqref{eq: behavior of Gamma_i} is
replaced for the BEC with $P_{\text{b}}$.

We turn now to consider the pair of degree distributions from the
edge perspective. The average left degree $(a_{\text{L}})$ of the
LDPC code ensemble satisfies
\begin{equation}
\frac{1}{a_{\text{L}}} = \sum_{i=2}^{\infty} \frac{\lambda_i}{i}
\label{eq: simple relationship between a_L and lambda}
\end{equation}
which implies that for any degree~$i$ of the variable nodes
\begin{equation}
\lambda_i \leq \frac{i}{a_{\text{L}}}\,. \label{eq: first bound on
lambda_i}
\end{equation}
Since the design rate of the LDPC code ensemble is assumed to be a
fraction $1-\varepsilon$ of the channel capacity, then the average
right and left degrees satisfy
\begin{eqnarray}
&& a_{\text{L}} = \bigl(1-(1-\varepsilon)C\bigr) a_{\text{R}}
\nonumber\\
&& \hspace*{0.5cm} \geq (1-C) a_{\text{R}}. \label{eq: relation
between average right and left degrees}
\end{eqnarray}
Substituting \eqref{eq: relation between average right and left
degrees} on the RHS of \eqref{eq: first bound on lambda_i} and
applying the lower bound on $a_{\text{R}}$ in \eqref{eq: lower
bound on a_R with finite P_b} gives
\begin{equation}
\lambda_i \leq \frac{i \ln\bigl(\frac{1}{g_1}\bigr)}{2 (1-C) \;
\ln\left(\frac{1}{1-2 h_2^{-1}\Bigl(\frac{1-C-
h_2(P_{\text{b}})}{1-(1-\varepsilon)C}\Bigr)} \right)}\,.
\label{eq: upper bound on lambda_i}
\end{equation}
Using the power series for the binary entropy function in
\eqref{eq: power series for h_2} and truncating the sum on the RHS
after the first term gives
\begin{equation*}
1-h_2(x) \geq \frac{(1-2x)^2}{2\,\ln 2}
\end{equation*}
and substituting $u=h_2(x)$ yields
\begin{equation}
\big(1-2h_2^{-1}(u)\big)^2 \leq 2\,\ln 2\cdot(1-u), \quad \forall
\; 0 \leq u \leq 1. \label{eq: upper bound on (1-2h_2^-1)^2}
\end{equation}
Combining \eqref{eq: upper bound on lambda_i} and \eqref{eq: upper
bound on (1-2h_2^-1)^2} gives
\begin{eqnarray*}
\lambda_i &\leq& \frac{i \ln\bigl(\frac{1}{g_1}\bigr)}{(1-C) \;
\ln\left(\frac{1}{2\ln 2} \;
\frac{1}{1-\frac{1-C-h_2(P_{\text{b}})}{1-(1-\varepsilon)C}}\right)}\\
& = & \frac{i \ln\bigl(\frac{1}{g_1}\bigr)}{(1-C) \;
\ln\left(\frac{1}{2\ln 2} \; \frac{1-(1-\varepsilon)C}{\varepsilon
C +h_2(P_{\text{b}})}\right)}\\
& \leq & \frac{i \ln\bigl(\frac{1}{g_1}\bigr)}{(1-C) \;
\left[\ln\left(\frac{1}{\varepsilon C+ h_2(P_{\text{b}})}\right) +
\ln\left(\frac{{1-C}}{2\ln 2}\right)\right]}
\end{eqnarray*}
which completes the proof of \eqref{eq: behavior of lambda_i} for
general MBIOS channels. For the BEC, we substitute \eqref{eq:
relation between average right and left degrees} and the lower
bound on the average right degree in \eqref{eq: lower bound on a_R
for the BEC with finite P_b} into the RHS of \eqref{eq: first
bound on lambda_i} to get
\begin{eqnarray}
\lambda_i &\leq& \frac{i \ln\bigl(\frac{1}{1-p}\bigr)}{p
\;\ln\left(1+\frac{p- P_{\text{b}}}{\varepsilon \,
(1-p)+P_{\text{b}}}
\right)}\nonumber\\
&=& \frac{i \ln\bigl(\frac{1}{1-p}\bigr)}{p
\;\ln\left(\frac{\varepsilon \, (1-p) + p}{\varepsilon \,
(1-p)+P_{\text{b}}}
\right)}\nonumber\\
&\leq&\frac{i \ln\bigl(\frac{1}{1-p}\bigr)}{p
\;\left[\ln\left(\frac{1}{\varepsilon \, (1-p)+ P_{\text{b}}}
\right) + \ln(p)\right]}\,. \label{eq: upper bound on lambda_i for
the BEC}
\end{eqnarray}
Hence, $h_2(P_{\text{b}})$ in \eqref{eq: behavior of lambda_i} is
replaced by $P_{\text{b}}$ when the communication channel is a
BEC. Considering the right degree distribution of the ensemble, we
have
\begin{equation*}
\frac{1}{a_{\text{R}}} = \sum_{i=1}^{\infty} \frac{\rho_i}{i}\,.
\end{equation*}
By following the same steps as in \eqref{eq: simple relationship
between a_L and lambda}--\eqref{eq: upper bound on lambda_i for
the BEC}, one obtains an upper bound on $\rho_i$ for any
degree~$i$ of the parity-check nodes. The asymptotic behavior of
the resulting upper bound on $\rho_i$ is similar to the upper
bound on $\lambda_i$ as given in \eqref{eq: upper bound on
lambda_i for the BEC}. However, as we show in the following, a
tighter upper bound on the fraction of edges connected to
parity-check nodes of degree $i$ is derived from the equality
\begin{equation}
  \rho_i = \frac{i\,\Gamma_i}{a_{\text{R}}}\,.
  \label{eq: basic relationship between rho_i and Gamma_i}
\end{equation}
Substituting \eqref{eq: lower bound on a_R with finite P_b} and
\eqref{eq: upper bound on Gamma_i} in the above equality, we get
\begin{eqnarray}
&& \hspace*{-1cm} \rho_i \leq \frac{\varepsilon\,C +
    h_2(P_{\text{b}})}{1-C}\;
    \frac{\ln\Bigl(\frac{1}{g_1}\Bigr)}{2\ln\left(\frac{1}{1-2h_2^{-1}
    \Bigl(\frac{1-C-h_2(P_{\text{b}})}{1-(1-\varepsilon)C}\Bigr)}\right)}
    \cdot \frac{i}{1-h_2\left(\frac{1-g_1^{i/2}}{2}\right)}\,.
    \label{eq: upper bound on rho_i}
\end{eqnarray}
Applying \eqref{eq: upper bound on (1-2h_2^-1)^2} to the
denominator of the second term on the RHS of \eqref{eq: upper
bound on rho_i} gives
\begin{eqnarray*}
&& \hspace*{-0.7cm} \rho_i \leq \frac{\varepsilon\,C +
    h_2(P_{\text{b}})}{1-C} \;
    \frac{\ln\Bigl(\frac{1}{g_1}\Bigr)}{\ln\left(\frac{1}{2\ln 2} \; \frac{1}
    {1-\frac{1-C-h_2(P_{\text{b}})}{1-(1-\varepsilon)C}}\right)}
    \; \frac{i}{1-h_2\left(\frac{1-g_1^{i/2}}{2}\right)}\\[.1cm]
&& \hspace*{-0.3cm} = \frac{\varepsilon\,C+h_2(P_{\text{b}})}{1-C}
    \; \frac{\ln\Bigl(\frac{1}{g_1}\Bigr)}{\ln\left(\frac{1}{2\ln 2} \;
    \frac{1-(1-\varepsilon)C}{\varepsilon C +
    h_2(P_{\text{b}})}\right)} \;
    \frac{i}{1-h_2\left(\frac{1-g_1^{i/2}}{2}\right)}\\[.1cm]
&& \hspace*{-0.3cm} \leq
    \frac{\ln\Bigl(\frac{1}{g_1}\Bigr)}{1-C}\; \frac{\varepsilon \,
    C+ h_2(P_{\text{b}})}{\ln\left(\frac{1}{\varepsilon \, C+
    h_2(P_{\text{b}})}\right) + \ln\left(\frac{1-C}{2\ln 2}\right)} \;
    \frac{i}{1-h_2\left(\frac{1-g_1^{i/2}}{2}\right)}
    \,.
\end{eqnarray*}
This proves \eqref{eq: behavior of rho_i} regarding the fraction
of edges connected to parity-check nodes of an arbitrary finite
degree $i$. For a BEC, a substitution of \eqref{eq: lower bound on
a_R for the BEC with finite P_b} and \eqref{eq: upper bound on
Gamma_i for the BEC} in \eqref{eq: basic relationship between
rho_i and Gamma_i} gives
\begin{equation*}
\rho_i \leq \frac{i \bigl[\varepsilon\,(1-p) +
    P_{\text{b}}\bigr]}{p(1-p)^i}\;\;
    \frac{\ln\left(\frac{1}{1-p}\right)}{\ln\left(1+\frac{p-
P_{\text{b}}}{\varepsilon \, (1-p)+P_{\text{b}}} \right)}.
\end{equation*}
Followed by some straightforward algebra, this proves \eqref{eq:
behavior of rho_i} for the BEC when $h_2(P_{\text{b}})$ is
replaced with $P_{\text{b}}$.

\begin{remark}{\bf{[\bf{Note on Theorem~\ref{Theorem: Behavior of left and right degrees}
and Corollary~\ref{Corollary: Behavior of the left and right degrees
when P_b=0}]}}} Consider the capacity-achieving sequence of
right-regular LDPC code ensemble as introduced in
\cite{Shokrollahi-IMA2000}. The gap to capacity $(\varepsilon)$ can
be made arbitrarily small for this sequence (even under BP
decoding), although $\rho_i=1$ for some integer $i$. At first
glance, it looks contradictory to Corollary~\ref{Corollary: Behavior
of the left and right degrees when P_b=0} (see
p.~\pageref{Corollary: Behavior of the left and right degrees when
P_b=0}) which states that $\rho_i$ is upper bounded by an expression
which scales like $\frac{\varepsilon}{\ln \frac{1}{\varepsilon}}$
for any finite degree $i$, and it therefore should tend to zero as
the gap to capacity vanishes. However, the right degree of this
sequence scales like $\ln \frac{1}{\varepsilon}$ (see
\cite{Shokrollahi-IMA2000} and \cite[Theorem~2.3]{Sason-it03}),
hence the index~$i$ for which $\rho_i=1$  becomes unbounded as
$\varepsilon \rightarrow 0$. Note that Corollary~\ref{Corollary:
Behavior of the left and right degrees when P_b=0} applies on the
other hand to finite and bounded degrees~$i$ in the limit where the
gap to capacity vanishes. Moreover, as we let $\varepsilon
\rightarrow 0$ for this capacity-achieving and right-regular
sequence, then $\rho_i$ is identically zero for all finite and
bounded degrees~$i$. \label{remark: comment on the asymptotic
behavior of the degree distributions}
\end{remark}

\begin{remark}{\bf{[On the degree
distribution of the parity-check nodes for the set of MBIOS
channels with a given capacity]}} Consider the set of all MBIOS
channels of a given capacity $C$, and consider a required bit
error probability $p_{\text{b}}$. By combining the inequality
constraint \eqref{eq: relationship between Gamma epsilon and P_b}
with the extreme values of $g_1$ in Lemma~\ref{lemma: Extreme
values of g_1} (see \eqref{eq: extreme values of g_1}), we obtain
the following universal inequality constraint which should hold
for this set of channels:
\begin{equation}
\sum_{i=1}^{\infty}
\left\{\left[1-h_2\left(\frac{1-C^{\frac{i}{2}}}{2}\right)\right]
\; \Gamma_i \right\} \leq \frac{\varepsilon\,C +
h_2(P_{\text{b}})}{1-(1-\varepsilon)C} \, . \label{eq: universal
relationship between Gamma epsilon and P_b}
\end{equation}
We refer later to this inequality when we consider linear
programming bounds for the degree distributions of
capacity-approaching LDPC code ensembles (see
Section~\ref{Section: Numerical Results}).
\end{remark}

\subsection{Proof of Theorem~\ref{Theorem: degree-2 variable
nodes}} \label{Proof of Theorem: degree-2 variable nodes} Consider
bipartite graphs which correspond to an LDPC code ensemble with
pair of degree distributions $(\lambda, \rho)$. The average
degrees of the variable nodes and the parity-check nodes of these
graphs are given in \eqref{eq: average left degree} and \eqref{eq:
average right degree}, respectively. Hence, the fraction of
degree-2 variable nodes is given by
\begin{equation}
\Lambda_2 = \frac{\lambda_2 \; a_{\mathrm{L}}}{2} =
\frac{\lambda_2}{2 \int_0^1 \lambda(x) \mathrm{d}x} \label{eq:
simple expression for Lambda_2}
\end{equation}
and the design rate of this ensemble is given by \eqref{design
rate of LDPC ensemble}. Using \eqref{design rate of LDPC
ensemble}, we rewrite $\int_0^1 \lambda(x) \mathrm{d}x$ at the
denominator of \eqref{eq: simple expression for Lambda_2} as
\begin{equation}
\int_0^1 \lambda(x) \mathrm{d}x =
\frac{1}{1-R_{\text{d}}}\,\int_0^1 \rho(x) \mathrm{d}x\,.
\label{eq: expression for integral of lambda}
\end{equation}
By assumption, the considered sequence of ensembles achieves
vanishing bit error probability under BP decoding, and hence the
stability condition in \eqref{eq: stability condition} is
satisfied. Combining \eqref{eq: stability condition}, \eqref{eq:
simple expression for Lambda_2} and \eqref{eq: expression for
integral of lambda} leads to the following upper bound on
$\Lambda_2$:
\begin{equation}
\Lambda_2 < \frac{1-R_{\text{d}}}{2 \, \mathcal{B}(a) \; \rho'(1)
\int_0^1 \rho(x) \mathrm{d}x} \,. \label{eq: upper bound on
Lambda_2 in terms of rho}
\end{equation}
From the convexity of $f(t) = x^t$ for $x>0$, Jensen's inequality
gives
\begin{eqnarray*}
&& \int_0^1 \rho(x) \mathrm{d}x \nonumber \\ && = \int_0^1
\sum_{i}\rho_i x^{i-1} \mathrm{d}x \nonumber \\ && \geq \int_0^1
x^{\sum_{i}\rho_i\,(i-1)} \mathrm{d}x \nonumber \\ && = \int_0^1
x^{\,\rho'(1)} \mathrm{d}x \nonumber \\ && = \frac{1}{\rho'(1)+1}
\end{eqnarray*}
which implies that
\begin{equation}
\rho'(1) \geq \frac{1}{\int_0^1 \rho(x) \mathrm{d}x} - 1 =
a_{\text{R}} - 1\,. \label{eq: rho'(1) > a_R - 1}
\end{equation}
Substituting \eqref{eq: rho'(1) > a_R - 1} in \eqref{eq: upper
bound on Lambda_2 in terms of rho} and since $R_{\text{d}} =
(1-\varepsilon)C$ then
\begin{eqnarray}
\Lambda_2 &<& \frac{1-R_{\text{d}}}{2 \, \mathcal{B}(a)} \left(1+ \frac{1}{\rho'(1)}\right)\nonumber\\
&\leq& \frac{1-R_{\text{d}}}{2 \, \mathcal{B}(a)} \left(1+ \frac{1}{a_{\text{R}} - 1}\right)\nonumber\\
&=&\frac{1-C}{2 \, \mathcal{B}(a)}\left(1+\frac{\varepsilon
C}{1-C}\right) \left(1+ \frac{1}{a_{\text{R}} -1}\right).
\label{eq: upper bound on Lambda_2 in terms of a_R}
\end{eqnarray}
Since the RHS of \eqref{eq: upper bound on Lambda_2 in terms of
a_R} is monotonically decreasing with the average right degree
($a_{\text{R}}$), this bound still holds when $a_{\text{R}}$ is
replaced by a lower bound. For all $m\in\naturals$, let
$P_{\text{b},m}$ designate the average bit error probability of
the LDPC code ensemble $\bigl(n_m, \lambda(x), \rho(x)\bigr)$
under BP decoding. Applying Theorem~\ref{Theorem: Lower bound on
right degree} where $P_{\text{b},m}$ vanishes as $m \rightarrow
\infty$ gives
\begin{equation}
a_{\text{R}} \geq \frac{2\ln\left(\frac{1}{1-2
h_2^{-1}\bigl(\frac{1-C}{1-(1-\varepsilon)C}\bigr)}\right)}{\ln\Bigl(\frac{1}{g_1}\Bigr)}\,.
\label{eq: lower bound on a_R}
\end{equation}
The upper bound in \eqref{eq: upper bound on fraction of degree-2
variable nodes} follows by substituting \eqref{eq: lower bound on
a_R} in \eqref{eq: upper bound on Lambda_2 in terms of a_R}.

We now turn to derive the upper bound on the fraction of edges
which are connected to degree-2 variable nodes. Since the
considered sequence of LDPC code ensembles achieves vanishing bit
error probability under BP decoding, then the stability condition
\eqref{eq: stability condition} implies that
\begin{equation*}
\lambda_2 = \lambda'(0) < \frac{1}{\rho'(1) \, \mathcal{B}(a)}
\end{equation*}
where $\mathcal{B}(a)$ is given in \eqref{eq: definition of
Bhattacharyya constant}. Combining  this with \eqref{eq: rho'(1) >
a_R - 1} gives
\begin{equation}
\lambda_2 <\frac{1}{(a_{\text{R}}-1) \, \mathcal{B}(a)} \label{eq:
upper bound on lambda_2 in terms of a_R}
\end{equation}
where $a_{\text{R}}$ designates the common average right degree of
the sequence of ensembles. The upper bounds on $\lambda_2$ in
\eqref{eq: bound on lambda_2 for MBIOS channels} and \eqref{eq:
bound on lambda_2 for the BEC} are obtained by substituting
\eqref{eq: lower bound on a_R} and \eqref{eq: lower bound on a_R
for the BEC} (these are the lower bounds on $a_{\text{R}}$ derived
in Theorem~\ref{Theorem: Lower bound on right degree} for
vanishing bit error/ erasure probability), respectively, in
\eqref{eq: upper bound on lambda_2 in terms of a_R}.

Consider the set of all MBIOS channels with a given capacity $C$
and a Bhattacharyya constant $\mathcal{B}(a)$, for which the bit
error probability of the BP decoder vanishes for the considered
sequence of LDPC code ensembles. Universal upper bound on
$\Lambda_2$ and $\lambda_2$ follow directly by combining the
bounds in \eqref{eq: upper bound on fraction of degree-2 variable
nodes} and \eqref{eq: bound on lambda_2 for MBIOS channels},
respectively, with Lemma~\ref{lemma: Extreme values of g_1} (note
that the upper bound on the RHS of \eqref{eq: upper bound on
fraction of degree-2 variable nodes} is a monotonic decreasing
function of $g_1$; this bound therefore attains its maximal value
at the minimal value of $g_1$, i.e., when $g_1 = C$). Therefore,
the universal upper bounds on $\Lambda_2$ and $\lambda_2$ hold for
all the channels from the above set by substituting $g_1=C$ on the
RHS of \eqref{eq: upper bound on fraction of degree-2 variable
nodes} and \eqref{eq: bound on lambda_2 for MBIOS channels},
respectively.

For a transmission over the BEC, the improved upper bound on the
degree-2 variable nodes follows by substituting the lower bound in
\eqref{eq: lower bound on a_R for the BEC with finite P_b} (where
the bit erasure probability $P_{\text{b}}$ vanishes) into
\eqref{eq: upper bound on Lambda_2 in terms of a_R}. Note that for
a BEC with erasure probability $p$, $1-C=\mathcal{B}(a)=p$ and
$\frac{1-C}{2 \mathcal{B}(a)} = \frac{1}{2}$. Similarly, the upper
bound on the fraction of edges which are attached to degree-2
variable nodes follows by substituting \eqref{eq: lower bound on
a_R for the BEC with finite P_b} and $\mathcal{B}(a)=p$ into
\eqref{eq: upper bound on lambda_2 in terms of a_R}.

\vspace*{0.2cm}
\begin{discussion}{\bf{[On the tightness of the upper bound \eqref{eq: upper bound
on degree-2 variable nodes for capacity-achieving LDPC ensembles}
on the fraction of degree-2 variable nodes for capacity-achieving
LDPC code ensembles over MBIOS channels]}} In the following, the
tightness of the bound in \eqref{eq: upper bound on degree-2
variable nodes for capacity-achieving LDPC ensembles} is
considered:
\begin{lemma}{\bf{[On the asymptotic fraction of degree~2 variable
nodes for capacity-achieving sequences of LDPC code ensembles]}}
Let $(n_m, \lambda_m, \rho_m)$ be a sequence of LDPC code
ensembles whose transmission takes place over an MBIOS channel of
capacity $C$ (in bits per channel use). Assume that this sequence
is capacity-achieving under BP decoding, and also that the
flatness condition is asymptotically satisfied for this sequence
(i.e., the stability condition in \eqref{eq: stability condition}
is satisfied asymptotically with equality). Let us also assume
that the limit of the ratio between the standard deviation and the
expectation of the right degree distribution in the LDPC code
ensemble $(n_m, \lambda_m, \rho_m)$ is finite as $m \rightarrow
\infty$, and denote this limit by $K$. Then, the asymptotic
fraction of degree-2 variable nodes in this sequence is equal to
\begin{equation}
\lim_{m\rightarrow\infty} \Lambda^{(m)}_2 = \frac{1-C}{2(1+K^2) \,
\mathcal{B}(a)} \label{eq: L_2 for capacity approaching LDPC
ensembles}
\end{equation}
where $\mathcal{B}(a)$ is introduced in \eqref{eq: definition of
Bhattacharyya constant}.  \label{lemma: L_2 for capacity
approaching LDPC ensembles}
\end{lemma}
\begin{proof}
See Appendix~\ref{Appendix: Proof of lemma on L_2 for c.a. LDPC
ensembles}.
\end{proof}

As a particular case of Lemma~\ref{lemma: L_2 for capacity
approaching LDPC ensembles}, if $K=0$ (this happens, e.g., when the
right degree is fixed), then the asymptotic fraction of degree-2
variable nodes in \eqref{eq: L_2 for capacity approaching LDPC
ensembles} coincides with the upper bound in \eqref{eq: upper bound
on degree-2 variable nodes for capacity-achieving LDPC ensembles}.

\begin{remark}
We note that the property proved in Lemma~\ref{lemma: L_2 for
capacity approaching LDPC ensembles} for the non-vanishing
asymptotic fraction of degree-2 {\em variable nodes} of
capacity-achieving sequences of LDPC code ensembles is reminiscent
of another information-theoretic property which was proved by
Shokrollahi with respect to the non-vanishing fraction of degree-2
{\em output nodes} for capacity-achieving sequences of Raptor
codes whose transmission takes place over an MBIOS channel (see
\cite[Theorem~11 and Proposition~12]{Raptor_IT06}). \label{remark:
raptor codes}
\end{remark}
\end{discussion}

\vspace*{0.2cm} {\em Proof of Corollary~\ref{corollary: The
fraction of degree-2 variable nodes as the gap to capacity
vanishes}}: The upper bound \eqref{eq: upper bound on degree-2
variable nodes for capacity-achieving LDPC ensembles} on the
fraction of degree-2 variable nodes for capacity-achieving LDPC
code ensembles follows directly by letting the gap to capacity
$\varepsilon$ tend to zero in \eqref{eq: upper bound on fraction
of degree-2 variable nodes}. We wish to show that the upper bound
in \eqref{eq: upper bound on degree-2 variable nodes for
capacity-achieving LDPC ensembles} is necessarily not larger
than~$\frac{1}{2}$ for all MBIOS channels, and it is equal to
$\frac{1}{2}$ for a BEC regardless of the erasure probability of
this channel. To this end, we prove the following lemma:
\begin{lemma}
For every MBIOS channel, the sum of its capacity and its
Bhattacharyya constant is at least~1. The minimal value of this sum
is attained for a BEC, irrespectively of the erasure probability of
this channel, and is equal to~1. \label{lemma: relation between the
capacity and Bhattacharyya constant of an MBIOS channel}
\end{lemma}
\begin{proof}
See Appendix~\ref{Appendix: inequality related to the Bhattacharyya
constant and channel capacity}.
\end{proof}

Combining Lemma~\ref{lemma: relation between the capacity and
Bhattacharyya constant of an MBIOS channel} and the RHS of
\eqref{eq: upper bound on degree-2 variable nodes for
capacity-achieving LDPC ensembles} implies that the fraction of
degree-2 variable nodes for an arbitrary capacity-achieving
sequence of LDPC code ensembles under BP decoding is upper bounded
by $\frac{1}{2}$. Note that this maximal value is attained for a
BEC (see also Remark~\ref{remark: on the fraction of degree-2
variable nodes for c.a. sequences over the BEC} on
page~\pageref{remark: on the fraction of degree-2 variable nodes
for c.a. sequences over the BEC}). This completes the proof of
Corollary~\ref{corollary: The fraction of degree-2 variable nodes
as the gap to capacity vanishes}.

In the following, we compare two upper bounds on the fraction of
edges connected to degree-2 variable nodes. One of these bounds is
given in Theorem~\ref{Theorem: degree-2 variable nodes}, and the
other bound follows along the lines of the proof of
Theorem~\ref{Theorem: Behavior of left and right degrees}.
\begin{discussion}{\bf{[Comparison between two upper bounds on $\lambda_2$: ML versus
iterative decoding]}} In the proof of Theorem~\ref{Theorem:
Behavior of left and right degrees}, we derive an upper bound on
the fraction of edges connected to variable nodes of degree $i$
for ensembles of LDPC codes which achieve a bit error (or erasure)
probability $P_{\text{b}}$ under an arbitrary decoding algorithm
(see \eqref{eq: upper bound on lambda_i} and the tightened version
\eqref{eq: upper bound on lambda_i for the BEC} of this bound for
the BEC). Referring to degree-2 variable nodes and letting
$P_{\text{b}}$ vanish, \eqref{eq: upper bound on lambda_i} gives
\begin{equation}
\lambda_2 \leq \frac{\ln\bigl(\frac{1}{g_1}\bigr)}{(1-C) \;
\ln\biggl(\frac{1}{1-2
h_2^{-1}\bigl(\frac{1-C}{1-R}\bigr)}\biggr)} \label{eq: first
bound on lambda_2 for MBIOS channels}
\end{equation}
where $R =(1-\varepsilon)C$. It is interesting to see that there
is some similarity between the two upper bounds on $\lambda_2$ as
given in \eqref{eq: bound on lambda_2 for MBIOS channels} and
\eqref{eq: first bound on lambda_2 for MBIOS channels}. In the
following, we compare between the two bounds on $\lambda_2$ by
calculating the ratio between the bound in \eqref{eq: bound on
lambda_2 for MBIOS channels} which relies on the stability
condition, and the bound in \eqref{eq: first bound on lambda_2 for
MBIOS channels} which follows along the lines of the proof of
Theorem~\ref{Theorem: Behavior of left and right degrees}. This
gives
\begin{eqnarray}
&& \frac{\ln\left(\frac{1}{g_1}\right)}{\mathcal{B}(a) \,
\ln\biggl(\frac{g_1}{\bigl[1-2h_2^{-1}
\bigl(\frac{1-C}{1-R}\bigr)\bigr]^2}\biggr)}
\nonumber\\[0.1cm]
&& \cdot \frac{\bigl(1-C\bigr) \; \ln\biggl(\frac{1}{1-2
h_2^{-1}\bigl(\frac{1-C}{1-R}\bigr)}\biggr)}{\ln\bigl(\frac{1}{g_1}\bigr)}
\nonumber \\[0.1cm]
&& = \frac{1-C}{\mathcal{B}(a)} \; \frac{\ln\biggl(\frac{1}{1-2
h_2^{-1}\bigl(\frac{1-C}{1-R}\bigr)}\biggr)}{
\ln\biggl(\frac{g_1}{\bigl[1-2h_2^{-1}
\bigl(\frac{1-C}{1-R}\bigr)\bigr]^2}\biggr)}
\nonumber\\[0.1cm]
&& = \frac{1-C}{\mathcal{B}(a)} \; \frac{\ln\biggl(\frac{1}{1-2
h_2^{-1}\bigl(\frac{1-C}{1-R}\bigr)}\biggr)}{ \ln(g_1) + 2
\ln\biggl(\frac{1}{1-2h_2^{-1}
\bigl(\frac{1-C}{1-R}\bigr)}\biggr)} \, . \label{eq: ratio between
two bounds on lambda_2}
\end{eqnarray}
Hence, as the gap to capacity vanishes (i.e., $\varepsilon
\rightarrow 0$), the expression in \eqref{eq: ratio between two
bounds on lambda_2} for the ratio between the two bounds on
$\lambda_2$ tends to $\frac{1-C}{2 \mathcal{B}(a)}$. By
Lemma~\ref{lemma: relation between the capacity and Bhattacharyya
constant of an MBIOS channel}, $\mathcal{B}(a) + C - 1 \geq 0$,
which implies that $\frac{1-C}{2 \mathcal{B}(a)} \leq
\frac{1}{2}$. Hence, the upper bound on $\lambda_2$ in \eqref{eq:
bound on lambda_2 for MBIOS channels} improves the bound in
\eqref{eq: first bound on lambda_2 for MBIOS channels} by at least
a factor of~2 (where the former bound is given in
Theorem~\ref{Theorem: degree-2 variable nodes}, and the latter
bound follows along the lines of the proof of
Theorem~\ref{Theorem: Behavior of left and right degrees}). We
note that the basis of the comparison between these two upper
bounds on $\lambda_2$ is the assumption of vanishing bit error
probability under BP decoding, though the bound in \eqref{eq:
first bound on lambda_2 for MBIOS channels} also holds with the
weaker requirement of vanishing bit error probability under ML
decoding.
\end{discussion}
\vspace*{0.2cm}
\subsubsection*{Proof of Corollary~\ref{Corollary: The fraction of edges connected to
degree-2 variable nodes}} See Appendix~\ref{Appendix: Proof of the
corollary on lambda_2}. \vspace*{0.2cm}
\subsubsection*{Proof of Proposition~\ref{Proposition: the
tightness of the upper bound on lambda_2}} See
Appendix~\ref{Appendix: Proof of the proposition on the tightness
of the upper bound on lambda_2}.

\section{Implications of the Information-Theoretic Bounds and Numerical Results}
\label{Section: Numerical Results} We provide here some
implications of the information-theoretic bounds and numerical
results which refer to the following issues:
\begin{itemize}
\item Examination of the tightness of the bounds provided in
Section~\ref{Section: main results} by comparing these bounds to
the asymptotic performance of some LDPC code ensembles under BP
decoding (referring here to the sum-product decoding). In order to
make this comparison more conclusive, we compare the new bounds
with previously reported bounds (see Section~\ref{Numerical
Results for Asymptotic Analysis}) in order to exemplify their
practicality.
\item Information-theoretic lower bound on the cardinality of the fundamental
system of cycles of LDPC code ensembles, expressed in terms of the
achievable gap to capacity (see Section~\ref{subsection:
fundamental system of cycles for LDPC ensembles}).
\item Linear programming (LP) bounds on the degree distributions
of capacity-approaching LDPC code ensembles. The bounds refer to
the case where the communication takes place over an MBIOS
channel, as well as universal bounds which are valid for the set
of all MBIOS channels which exhibit a given capacity $C$. These
bounds are valid under ML decoding (and hence, they are also valid
under any sub-optimal decoding algorithm). These LP bounds are
solved analytically, and are also compared with the degree
distributions of capacity-approaching LDPC code ensembles under BP
decoding (see Section~\ref{LP bounds on degree distributions of
LDPC code ensembles}).
\item Lower bounds on the graphical complexity of binary linear
block codes which are represented by an arbitrary bipartite graph
and whose transmission takes place over an MBIOS channel. The
graphical complexity is measured by the total number of edges in
the graph, and the bound provides a quantitative measure of the
minimal number of edges required for this graphical representation
as a function of the target block error probability and the gap
(in rate) to capacity. This bound refers to codes of
finite-length, and is valid under ML decoding (or any sub-optimal
decoding). It can be also applied to LDPC code ensembles, and then
it provides a lower bound on the decoding complexity per iteration
of a BP decoder. Comparison of the information-theoretic lower
bound on the graphical complexity in terms of the achievable gap
to capacity with a target block error probability with some
efficient finite-length LDPC codes which are provided in the
literature enables to evaluate the maximal potential gain that can
be attained by future design of such finite-length codes in terms
of the tradeoff between performance and graphical complexity (see
Section~\ref{Numerical Results for Finite-Length Analysis}).
\end{itemize}

\subsection{Numerical Results for the Asymptotic Analysis under BP Decoding}
\label{Numerical Results for Asymptotic Analysis}

The following sub-section relies on the theoretic results provided
in Section~\ref{Section: main results}, and it exemplifies the use
of these results in the context of capacity-approaching sequences
of LDPC code ensembles whose transmission takes place over an
MBIOS channel, and whose bit error probability vanishes under BP
decoding. As representatives of MBIOS channels, the considered
communication channels are the binary erasure channel (BEC),
binary symmetric channel (BSC) and the binary-input AWGN channel
(BIAWGNC) (as presented in \cite[Example~4.1]{RiU_book}).

\begin{example}{\bf{[BEC]}} Consider a sequence of LDPC code ensembles $(n,
\lambda, \rho)$ where the block length $(n)$ tends to infinity and
the pair of degree distributions is given by
\begin{eqnarray*}
&& \hspace*{-0.7cm} \lambda(x) = 0.409x + 0.202x^2 + 0.0768x^3 +
0.1971x^6 + 0.1151x^7 \\ && \hspace*{-0.7cm} \rho(x) = x^5.
\end{eqnarray*}
The design rate of this ensemble is $R=0.5004$, and the threshold
under BP decoding is (see \cite[Theorem~3.59]{RiU_book})
\begin{equation*}
p^{\text{BP}} = \inf_{x \in (0,1]}
\frac{x}{\lambda\bigl(1-\rho(1-x)\bigr)} = 0.4810
\end{equation*}
so the minimum capacity of a BEC over which it is possible to
transmit with vanishing $P_{\text{b}}$ under BP decoding is
$C=1-p^{\text{BP}} = 0.5190$ bits per channel use, and the
multiplicative gap to capacity is $\varepsilon = 1-\frac{R}{C} =
0.0358$. The lower bound on the average right degree in \eqref{eq:
lower bound on a_R for the BEC with finite P_b} with vanishing bit
erasure probability (i.e., $P_{\text{b}}=0$) gives that the
average right degree should be at least 5.0189. By imposing a
prior assumption that the LDPC code ensemble has a fixed right
degree (as is the case with the above LDPC code ensemble), then it
follows that this right degree cannot be below~6. Hence, the lower
bound is attained in this case with equality. An upper bound on
the fraction of edges which are connected to degree-2 variable
nodes $(\lambda_2)$ is calculated from \eqref{eq: upper bound on
lambda_2 in terms of a_R} with $ \mathcal{B}(a) =
p^{\text{BP}}=0.4810 $, and the above lower bound on
$a_{\text{R}}$ (for LDPC code ensembles of a fixed right degree)
which is equal to~6; this gives from \eqref{eq: upper bound on
lambda_2 in terms of a_R} that $\lambda_2 \leq 0.4158$ as compared
to the exact value which is equal to~0.409. The exact value of the
fraction of degree-2 variable nodes is
\begin{equation*}
\Lambda_2 = \frac{\lambda_2 \, a_{\text{L}}}{2} =\frac{\lambda_2
\, (1-R) \, a_{\text{R}}}{2} = 0.6130
\end{equation*}
as compared to the upper bound in \eqref{eq: upper bound on
Lambda_2 in terms of a_R}, combined with the tight lower bound
$a_{\text{R}} \geq 6$, which gives $\Lambda_2 \leq 0.6232$. We
note that without the prior assumption about the fixed right
degree, the universal bounds give $a_{\text{R}} \geq 5.0189$ and
$\lambda_2 < 0.5173$ so these bounds are clearly loosened.
\end{example}

\begin{example}{\bf{[Comparison of the lower bound on the average right
degree from Theorem~\ref{Theorem: Lower bound on right degree} and
Discussion~\ref{discussion: extension of Theorem 1 to LDPC
ensembles} with the bound in \cite{Wiechman_Sason}]}} In the
following, we exemplify the practical use of the lower bound on
the average right degree of LDPC code ensembles, as given in
Theorem~\ref{Theorem: Lower bound on right degree} and its
adaptation to LDPC code ensembles in Discussion~\ref{discussion:
extension of Theorem 1 to LDPC ensembles}, and compare it with the
previously reported bound in \cite[Section~IV]{Wiechman_Sason}.
Consider the case where the communications takes place over a
BIAWGNC. The LDPC code ensembles in each sequence are specified by
the following pairs of degree distributions, followed by their
corresponding design rates and thresholds under BP decoding:\\[0.1cm]
Ensemble~1:
\begin{eqnarray*}
&& \lambda(x) = x, \quad \rho(x) = x^{19}, \quad R_{\text{d}} =
0.9000.\\
&& \sigma_{\text{BP}} = 0.4156590.
\end{eqnarray*}
Ensemble~2:
\begin{eqnarray*}
&& \lambda(x) = 0.4012x + 0.5981x^2 + 0.0007x^{29}, \quad \rho(x) = x^{24}\\
&& R_{\text{d}} = 0.9000, \quad \sigma_{\text{BP}} = 0.4741840.
\end{eqnarray*}
These code ensembles are taken from the data base in \cite{LTHC}.
From \cite[Example~4.38]{RiU_book} which expresses the capacity of
the BIAWGNC in terms of the standard deviation $\sigma$ of the
Gaussian noise, the minimum capacity of a BIAWGNC over which it is
possible to communicate with vanishing bit error probability under
BP decoding is~$C = 0.9685$ and 0.9323~bits per channel use for
Ensembles~1 and~2, respectively. The corresponding gap (in rate)
to capacity $\varepsilon = 1-\frac{R_{\text{d}}}{C}$ is equal to
$\varepsilon = 7.07 \cdot 10^{-2}$ and $3.46 \cdot 10^{-2}$,
respectively. Therefore, for the first ensemble which is a (2,20)
regular LDPC code ensemble, the new lower bound on the average
right degree which follows from Discussion~\ref{discussion:
extension of Theorem 1 to LDPC ensembles} is equal to~9.949
whereas the lower bound from \cite[Section~IV]{Wiechman_Sason}
(i.e., the un-numbered equation before
\cite[Eq.~(77)]{Wiechman_Sason}) is equal to 2.392. For the second
ensemble whose fixed right degree is equal to~25, the new lower
bound on the average right degree is~16.269 whereas the lower
bound from \cite{Wiechman_Sason} is~14.788. This shows that the
improvement obtained in Theorem~\ref{Theorem: Lower bound on right
degree} followed by Discussion~\ref{discussion: extension of
Theorem 1 to LDPC ensembles} is of practical use.

We note that the gap which still exists between the lower bounds
on the average right degrees and the actual values of
$a_{\text{R}}$ for the above two ensembles is partially attributed
to the fact that this information-theoretic lower bound holds even
under ML decoding, although we apply this bound here under the
sub-optimal BP decoding algorithm. The gaps to capacity under ML
decoding are smaller than those calculated under BP decoding, and
smaller values of $\varepsilon$ provide improved lower bounds on
$a_{\text{R}}$. \label{example: comparison of numerical results
for the new and old lower bounds}
\end{example}

\begin{example}{\bf{[BIAWGNC]}}
Table~\ref{Table: Ensembles over BIAWGNC} considers two sequences
of LDPC code ensembles of design rate $\frac{1}{2}$ which are
taken from \cite[Table~II]{ChungFRU_CommL01}. The transmission of
these ensembles is assumed to take place over the BIAWGNC. The
pair of degree distributions of the ensembles in each sequence is
fixed and the block length of these ensembles tends to infinity.
The LDPC code ensembles in each sequence are specified by the
following
pairs of degree distributions:\\[0.1cm]
Ensemble~1:
\begin{eqnarray*}
&& \lambda(x) = 0.170031 x + 0.160460 x^2 + 0.112837x^5\\
&& \hspace*{1.2cm} + 0.047489x^6+0.011481x^9
+\,0.091537x^{10}\\
&& \hspace*{1.2cm}
+0.152978x^{25}+0.036131x^{26}+0.217056x^{99}\\[0.1cm]
&& \rho(x) = \frac{1}{16} \, x^9 + \frac{15}{16} \, x^{10}.
\end{eqnarray*}
Ensemble~2:
\begin{eqnarray*}
&& \lambda(x) = 0.153425 x + 0.147526 x^2 + 0.041539x^5 \\
&& \hspace*{1.2cm} + 0.147551x^6 + 0.047938x^{17}
+\,0.119555x^{18} \\
&& \hspace*{1.2cm} +0.036379x^{54}+0.126714x^{55}+0.179373x^{199}\\
&& \rho(x) = x^{11}.
\end{eqnarray*}
The asymptotic thresholds of the considered LDPC code ensembles
under BP decoding are calculated with the DE technique, and these
calculations provide the thresholds $\sigma_{\text{BP}} = 0.97592$
and~0.97704, respectively. The minimum capacity of a BIAWGNC which
enables to communicate Ensembles~1 and~2 with vanishing bit error
probability under BP decoding is therefore~$C = 0.5019$ and
0.5011~bits per channel use, respectively (it is calculated via
the power series expansion of the capacity of a BIAWGNC as given
in \cite[page~194]{RiU_book}). This leads to the indicated gaps
(in rate) to capacity as given in Table~\ref{Table: Ensembles over
BIAWGNC}.
\begin{table}
\caption{Bounds vs. exact values of $\lambda_2$ and $a_{\text{R}}$
for two sequences of LDPC code ensembles of design rate
$\frac{1}{2}$ transmitted over the BIAWGNC. The sequences are
given in \cite[Table~II]{ChungFRU_CommL01} and achieve vanishing
bit error probability under the belief propagation (BP) decoding
algorithm with the indicated gaps to capacity.}
\begin{center}
\begin{tabular}{|c|c|c|c|c|c|} \hline
LDPC & Gap to & & Lower bound & & Upper bound\\
ense- & capacity &$a_{\text{R}}$ & on $a_{\text{R}}$ & $\lambda_2$
& on $\lambda_2$\\
mble & ($\varepsilon$) & & (Theorem~\ref{Theorem: Lower bound on
right degree}) & & (Theorem~\ref{Theorem: degree-2 variable nodes}) \\
\hline \hline 1 & $3.72 \cdot 10^{-3}$ & 10.938 & 9.249 & 0.170 &
0.205\\ \hline 2 & $2.22 \cdot 10^{-3}$ & 12.000 & 10.129 & 0.153
& 0.185\\ \hline
\end{tabular}
\end{center}
\label{Table: Ensembles over BIAWGNC}
\end{table}
The value of $\lambda_2$ for each sequence of LDPC code ensembles
(where we let the block length tend to infinity) is compared with
the upper bound in Theorem~\ref{Theorem: degree-2 variable nodes}
which corresponds to BP decoding. Note that for calculating the
bound in Theorem~\ref{Theorem: degree-2 variable nodes}, the
Bhattacharyya constant in \eqref{eq: definition of Bhattacharyya
constant} is given by $\mathcal{B}(a) = \exp\bigl(-\frac{R
E_{\text{b}}}{N_0}\bigr)$ for the BIAWGNC where
$\frac{E_{\text{b}}}{N_0}$ designates the energy per information
bit over the one-sided noise spectral density, and we substitute
here the threshold value of $\frac{E_{\text{b}}}{N_0}$ under BP
decoding. The average right degree of each sequence is also
compared with the lower bound in Theorem~\ref{Theorem: Lower bound
on right degree}. These comparisons exemplify that for the
examined LDPC code ensembles, both of the theoretical bounds are
informative.
\end{example}

\begin{example}{\bf{[BSC]}}
Table~\ref{Table: Ensembles over BSC} considers two sequences of
LDPC code ensembles, taken from \cite{LTHC}, where the pair of
degree distributions of the ensembles in each sequence is fixed
and the block length of these ensembles tends to infinity. The
transmission of these ensembles is assumed to take place over the
BSC. The LDPC code ensembles in each sequence are specified by the
following pairs of
degree distributions and design rates:\\[0.1cm]
Ensemble~1:
\begin{eqnarray*}
&& \hspace*{-0.3cm} \lambda(x) = 0.291157x +
0.189174x^2+0.0408389x^4\\ && \hspace*{0.8cm}
+0.0873393x^5+0.00742718x^6 +0.112581x^7\\ &&
\hspace*{0.8cm} +0.0925954x^{15}+0.0186572x^{20}+0.124064x^{32}\\
&&\hspace*{0.8cm}+0.016002x^{39} + 0.0201644x^{44} \\
&& \hspace*{-0.3cm} \rho(x) = 0.8x^4+0.2x^5 \\
&& \hspace*{-0.3cm} R = 0.250
\end{eqnarray*}
Ensemble~2:
\begin{eqnarray*}
&& \hspace*{-0.3cm} \lambda(x) = 0.160424x+0.160541x^2+0.0610339x^5\\
&& \hspace*{0.8cm}+0.153434x^6+0.0369041x^{12}+ 0.020068x^{15}\\
&& \hspace*{0.8cm} + 0.0054856x^{16} +
0.128127x^{19}+0.0233812x^{24}\\ &&
\hspace*{0.8cm}+0.05285542x^{34}+0.0574104x^{67} + 0.0898442x^{68}\\
&& \hspace*{0.8cm}
+ 0.0504923x^{85} \\
&& \hspace*{-0.3cm} \rho(x) = x^{10} \\
&& \hspace*{-0.3cm} R = 0.500.
\end{eqnarray*}
The thresholds of the above LDPC code ensembles under BP decoding
are equal to $p_{\text{BSC}} = 0.2120$ and~0.1090, respectively.
Hence, for Ensembles~1 and~2, the minimum capacity of a BSC which
enables to communicate with vanishing bit error probability under
BP decoding is~$C = 0.2547$ and 0.5031~bits per channel use. Since
of the design rates of these two ensembles are~$R_{\text{d}} =
0.250$ and~0.500, respectively, then the gaps to capacity are
given in Table~\ref{Table: Ensembles over BSC}.
\begin{table}
\caption{Comparison of theoretical bounds and actual values of
$\lambda_2$ and $a_{\text{R}}$ for two sequences of LDPC code
ensembles transmitted over the BSC. The sequences are taken from
\cite{LTHC} and achieve vanishing bit error probability under the
belief propagation (BP) decoding algorithm with the indicated gaps
to capacity.}
\begin{center}
\begin{tabular}{|c|c|c|c|c|c|} \hline
LDPC & Gap to & & Lower bound & & Upper bound\\
ense- & capacity &$a_{\text{R}}$ & on $a_{\text{R}}$ & $\lambda_2$
& on $\lambda_2$\\
mble & ($\varepsilon$) & & (Theorem~\ref{Theorem: Lower bound on
right degree}) & & (Theorem~\ref{Theorem: degree-2 variable nodes}) \\
\hline \hline 1 & $1.85 \cdot 10^{-2}$ & 5.172 & 4.301 & 0.291 & 0.371\\
\hline 2 & $6.18 \cdot 10^{-3}$ & 11.000 & 9.670 & 0.160 & 0.185
\\\hline
\end{tabular}
\end{center}
\label{Table: Ensembles over BSC}
\end{table}
The value of $\lambda_2$ for each sequence is compared with the
upper bound given in Theorem~\ref{Theorem: degree-2 variable
nodes}. Note that for calculating the bound in
Theorem~\ref{Theorem: degree-2 variable nodes}, the Bhattacharyya
constant $\mathcal{B}(a)$ introduced in \eqref{eq: definition of
Bhattacharyya constant} satisfies $\mathcal{B}(a) =
\sqrt{4p(1-p)}$ for a BSC whose crossover probability is equal to
$p$, and we substitute here the threshold value of $p$ under BP
decoding. Also, for the calculation of this bound for such a BSC,
Eq.~\eqref{eq: g_k for BSC} gives that $g_1 = (1-2p)^2$. The
average right degree of each sequence is also compared with the
lower bound in Theorem~\ref{Theorem: Lower bound on right degree}.
These comparisons show that for the considered sequences of LDPC
code ensembles, both of the theoretical bounds are fairly tight;
the upper bound on $\lambda_2$ is within a factor of 1.3 from the
actual value for the two sequences of LDPC code ensembles while
the lower bound on the average right degree is not lower than
$83\%$ of the corresponding actual values. The LDPC code ensembles
referred to in Table~\ref{Table: Ensembles over BSC} were obtained
in \cite{LTHC} by the DE technique with the goal of minimizing the
gap to capacity under a constraint on the maximal degree.
\end{example}

\begin{figure*}
\begin{center}
\epsfig{file=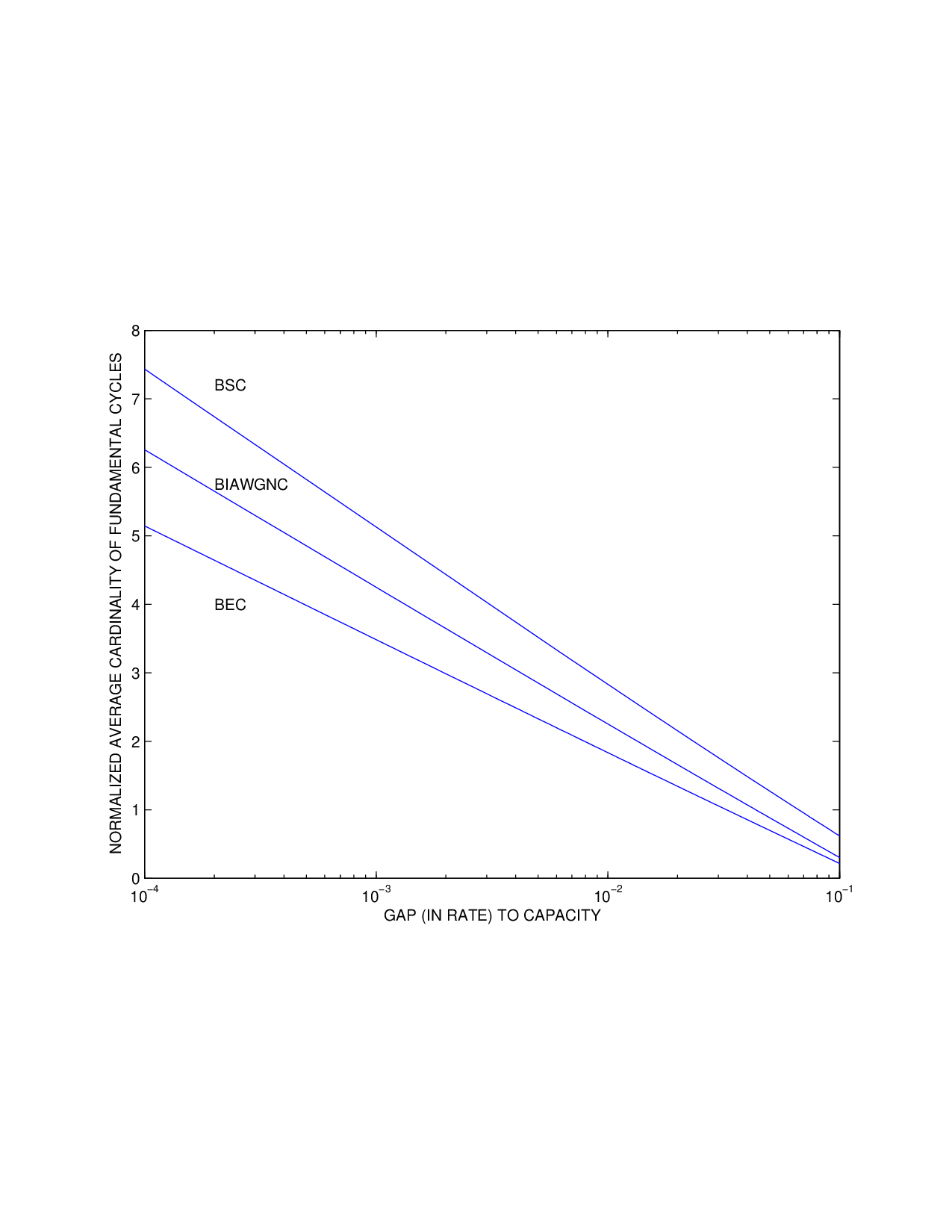,scale=0.60}
\end{center}
\caption{\label{Figure: fundamental system of cycles} Plot of the
asymptotic lower bounds in Corollary~\ref{corollary: lower bound
on the cardinality of the fundamental system of cycles of LDPC
ensembles} (see Eqs.~\eqref{eq: lower bound on the cardinality of
the fundamental system of cycles of LDPC ensembles for general
MBIOS channels} and \eqref{eq: tightened lower bound on the
cardinality of the fundamental system of cycles of LDPC ensembles
for the BEC}) for memoryless binary-input output-symmetric (MBIOS)
channels. These lower bounds correspond to the average cardinality
of the fundamental system of cycles for bipartite graphs
representing codes from an arbitrary LDPC code ensemble; the above
quantity is normalized with respect to the block length of the
ensemble, and the asymptotic result refers to the case where we
consider a sequence of LDPC code ensembles whose block lengths
tend to infinity. The bounds are plotted versus the achievable gap
(in rate) between the channel capacity and the design rate of the
LDPC code ensembles. This figure shows the bounds for the binary
symmetric channel (BSC), binary-input AWGN channel (BIAWGNC) and
the binary erasure channel (BEC) where it is assumed that the
design rate of the LDPC code ensembles is equal to one-half bit
per channel use.}
\end{figure*}

\subsection{On the Fundamental System of Cycles for Capacity-Approaching
Sequences of LDPC Code Ensembles} \label{subsection: fundamental
system of cycles for LDPC ensembles} Corollary~\ref{corollary:
lower bound on the cardinality of the fundamental system of cycles
of LDPC ensembles} considers an arbitrary sequence of LDPC code
ensembles, specified by a pair of degree distributions, whose
transmission takes place over an MBIOS channel. This corollary
refers to the asymptotic case where we let the block length of the
ensembles in this sequence tend to infinity and the bit error (or
erasure) probability vanishes; the design rate of these ensembles
is assumed to be a fraction $1-\varepsilon$ of the channel
capacity (for an arbitrary $\varepsilon \in (0,1)$). In
Corollary~\ref{corollary: lower bound on the cardinality of the
fundamental system of cycles of  LDPC ensembles}, Eq.~\eqref{eq:
lower bound on the cardinality of the fundamental system of cycles
of LDPC ensembles for general MBIOS channels} applies to a general
MBIOS channel and a tightened version of this bound is given in
\eqref{eq: tightened lower bound on the cardinality of the
fundamental system of cycles of LDPC ensembles for the BEC} for
the BEC. Based on these results, the asymptotic average
cardinality of the fundamental system of cycles for bipartite
graphs representing codes from LDPC code ensembles as above, where
this average cardinality is normalized with respect to the block
length, grows at least like $\ln \frac{1}{\varepsilon}$. We
consider here the BSC, BEC, and BIAWGNC as three representatives
of the class of MBIOS channels, and assume that the design rate of
the LDPC code ensembles is fixed to one-half bit per channel use.
It is shown in Fig.~\ref{Figure: fundamental system of cycles}
that for a given gap $(\varepsilon)$ to the channel capacity and
for a fixed design rate, the extreme values of this lower bounds
correspond to the BSC and BEC (which attain the maximal and
minimal values, respectively). This observation is consistent with
the last part of the statement in Corollary~\ref{corollary: lower
bound on the cardinality of the fundamental system of cycles of
LDPC ensembles}. \label{subsection: Fundamental System of Cycles
for Capacity-Approaching LDPC Code Ensembles}

\subsection{Linear Programming Bounds for the Degree Distributions of
LDPC Code Ensembles} \label{LP bounds on degree distributions of
LDPC code ensembles} This sub-section provides LP bounds on the
degree distributions of LDPC code ensembles. These bounds, which
are based on Sections~\ref{Section: main results}
and~\ref{Section: Proofs of Main Results}, are formulated in terms
of the target bit error probability and the gap (in rate) to
capacity required to achieve this target. The following LP bounds
refer to the node and the edge perspectives of the pair of degree
distributions, and they provide upper bounds on the fraction of
edges or nodes up to degree $k$ where $k$ is a parameter.
Similarly to Theorem~\ref{Theorem: Behavior of left and right
degrees}, the LP bounds which are introduced in this section hold
under ML decoding, and are therefore general in terms of the
decoding algorithm. These LP bounds apply to finite-length LDPC
code ensembles as well as to the asymptotic case of an infinite
block length. Analytical solutions for these LP bounds are
provided in Section~\ref{LP bounds on degree distributions of LDPC
code ensembles}, and these bounds are also compared with some
capacity-achieving sequences of LDPC code ensembles for the BEC
under BP decoding. The following LP bounds are separated into four
categories:
\begin{itemize}
\item {\bf{LP1}}: 'LP1' forms an LP upper bound on the degree
distribution of the parity-check nodes for LDPC code ensembles whose
transmission takes place over an MBIOS channel. Its first version
gives an upper bound on the fraction of parity-check nodes up to
degree $k$ (where $k \geq 1$ is an integer) as a function of the
achievable rate (and its gap to the channel capacity) with a given
bit error probability $P_{\text{b}}$. By combining \eqref{eq:
relationship between Gamma epsilon and P_b} with the trivial
constraints for an arbitrary degree distribution, the following
optimization problem follows:
\begin{equation}
\mbox{\fbox{$
\begin{array}{l}
\text{maximize} \; \; \sum\limits_{i=1}^k \Gamma_i, \quad k=1, 2, \ldots \nonumber \\
\text{subject to} \nonumber \\[0.1cm]
\begin{cases}
\hspace{0.2cm} \sum\limits_{i=1}^{\infty}
\left\{\left[1-h_2\biggl(\frac{1-g_1^{\frac{i}{2}}}{2}\biggr)\right]
\Gamma_i \right\} \leq \frac{\varepsilon\,C +
h_2(P_{\text{b}})}{1-(1-\varepsilon)C} \hspace{-0.2cm} \\ \\[-0.1cm]
\hspace{0.2cm} \sum\limits_{i=1}^{\infty} \Gamma_i = 1 \\ \\[-0.1cm]
\hspace{0.2cm} \Gamma_i \geq 0, \quad i=1, 2, \ldots
\end{cases}
\end{array}
$}}
\end{equation}
where the optimization variables are $\{\Gamma_i\}_{i \geq 1}$.
From~\eqref{switching between representations_1}, the following
equality holds:
\begin{equation}
\Gamma_i = \frac{\rho_i}{i}
\Bigl(\sum\limits_{j=1}^{\infty}\frac{\rho_j}{j}\Bigr)^{-1}.
\label{eq: switching between representations_2}
\end{equation}
The substitution of this equality in the first constraint of the
above LP bound gives the following optimization problem for the
degree distribution of the parity-check nodes from the edge
perspective (i.e., we get an upper bound on the fraction of edges
which are connected to parity-check nodes up to degree $k \geq
1$):
\begin{equation}
\mbox{\fbox{$
\begin{array}{l}
\text{maximize} \; \; \sum\limits_{i=1}^k \rho_i, \quad k=1, 2, \ldots \nonumber \\
\text{subject to} \nonumber \\[0.1cm]
\begin{cases}
\hspace{0.2cm} \sum\limits_{i=1}^{\infty}
\left\{\left[1-h_2\biggl(\frac{1-g_1^{\frac{i}{2}}}{2}\biggr)\right]
\frac{\rho_i}{i} \right\} \leq \frac{\varepsilon\,C +
h_2(P_{\text{b}})}{1-(1-\varepsilon)C} \hspace{-0.2cm} \; \sum\limits_{i=1}^{\infty} \frac{\rho_i}{i} \hspace{-0.3cm} \\ \\[-0.1cm]
\hspace{0.2cm} \sum\limits_{i=1}^{\infty} \rho_i = 1 \\ \\[-0.1cm] \hspace{0.2cm} \rho_i \geq 0,
\quad i=1, 2, \ldots
\end{cases}
\end{array}
$}}
\end{equation}
where the optimization variables are $\{\rho_i\}_{i \geq 1}$.
These two LP bounds on the parity-check degree distribution (from
the node and edge perspectives) rely both on
Theorems~\ref{Theorem: Lower bound on right degree} and
\ref{Theorem: Behavior of left and right degrees}, and are
therefore valid under ML decoding (hence, they also hold under any
other decoding algorithm). These bounds hold for finite-length
codes and also for the asymptotic case of an infinite block
length.

An analytical solution of the LP1 bound is given in
Appendix~\ref{Appendix: solution of the LP1 bound}. This bound is
tightened in Appendix~\ref{Appendix: solution of the LP1 bound}
for the BEC, followed by its analytical solution.

\item {\bf{LP2}}: 'LP2' provides a universal LP upper bound on the
degree distribution of the parity-check nodes for LDPC code
ensembles as a function of the required achievable rate (and its gap
to the channel capacity) with a required bit error probability
$P_{\text{b}}$. This bound follows from \eqref{eq: universal
relationship between Gamma epsilon and P_b} and \eqref{eq: switching
between representations_2}, and it gets the form:
\begin{equation}
\mbox{\fbox{$
\begin{array}{l}
\text{maximize} \; \; \sum\limits_{i=1}^k \rho_i, \quad k=1, 2, \ldots \nonumber \\
\text{subject to} \nonumber \\[0.1cm]
\begin{cases}
\hspace{0.2cm} \sum\limits_{i=1}^{\infty}
\left\{\left[1-h_2\biggl(\frac{1-C^{\frac{i}{2}}}{2}\biggr)\right]
\frac{\rho_i}{i} \right\} \leq \frac{\varepsilon\,C +
h_2(P_{\text{b}})}{1-(1-\varepsilon)C} \hspace{-0.2cm} \, \sum\limits_{i=1}^{\infty} \frac{\rho_i}{i} \hspace{-0.3cm} \\ \\[-0.1cm]
\hspace{0.2cm} \sum\limits_{i=1}^{\infty} \rho_i = 1 \\ \\[-0.1cm] \hspace{0.2cm} \rho_i \geq 0,
\quad i=1, 2, \ldots
\end{cases}
\end{array}
$}}
\end{equation}
where the optimization variables are $\{\rho_i\}_{i \geq 1}$, and
the bound holds under the same conditions as of the previous item.
However, as opposed to the LP1 bound, the LP2 bound is universal
since it holds for all MBIOS channels which exhibit a given capacity
$C$. Note that the LP2 bound is similar to the LP1 bound, except of
replacing the parameter $g_1$ in the LP1 bound with the channel
capacity $C$. This follows directly by comparing \eqref{eq:
relationship between Gamma epsilon and P_b} and \eqref{eq: universal
relationship between Gamma epsilon and P_b}. Note that the
transition from \eqref{eq: relationship between Gamma epsilon and
P_b} to \eqref{eq: universal relationship between Gamma epsilon and
P_b} follows from Lemma~\ref{lemma: Extreme values of g_1} which
implies that among all MBIOS channels with a given capacity $C$, the
channel which attains the minimal value of $g_1$ is the BEC, and the
minimal value of $g_1$ is equal to $C$.

The analytical solution of the LP2 bound follows directly from the
analysis in Appendix~\ref{Appendix: solution of the LP1 bound} for
the LP1 bound, by replacing $g_1$ in the LP1 bound with the channel
capacity $C$ in the LP2 bound.

\item {\bf{LP3}}: 'LP3' provides an LP upper bound on the degree distribution of the variable
nodes (from the edge perspective) for LDPC code ensembles whose
transmission takes place over an MBIOS channel. This bound
provides an upper bound on the fraction of edges which are
connected to variable nodes up to degree $k$ for a parameter $k
\geq 2$, and it is expressed in terms of the required achievable
rate (and its gap to capacity) with a given bit error probability
$P_{\text{b}}$. From \eqref{design rate of LDPC ensemble} and
\eqref{eq: lower bound on a_R with finite P_b}, this LP bound gets
the form
\begin{equation}
\mbox{\fbox{$
\begin{array}{l}
\text{maximize} \; \; \sum\limits_{i=2}^k \lambda_i, \quad k=2, 3, \ldots \nonumber \\
\text{subject to} \nonumber \\[0.1cm]
\begin{cases}
\sum\limits_{i=2}^{\infty} \frac{\lambda_i}{i} \leq
\frac{\ln\bigl(\frac{1}{g_1}\bigr)}{2(1-C)
\left(1+\frac{\varepsilon C}{1-C}\right) \ln\biggl(\frac{1}{1-2
h_2^{-1}\bigl(\frac{1-C-h_2(P_{\text{b}})}
{1-(1-\varepsilon)C}\bigr)}\biggr)}
\hspace{-0.2cm} \\ \\[-0.1cm]
\sum\limits_{i=2}^{\infty} \lambda_i = 1 \\ \\[-0.1cm] \lambda_i \geq 0,
\quad i=2, 3, \ldots
\end{cases}
\end{array}
$}}
\end{equation}
where the optimization variables are $\{\lambda_i\}_{i \geq 2}$.
Since the bound relies on Theorem~\ref{Theorem: Lower bound on
right degree}, then it is therefore valid under ML decoding (or
any other decoding algorithm). It holds for finite block-length as
well as in the asymptotic case where we let the block length tend
to infinity. We note that the focus on the degree distribution of
the variable nodes from the edge perspective is due to
Theorem~\ref{Theorem: Behavior of left and right degrees} and
Remark~\ref{remark: on the fraction of degree-2 variable nodes for
c.a. sequences over the BEC} (see p.~\pageref{remark: on the
fraction of degree-2 variable nodes for c.a. sequences over the
BEC}).
\item {\bf{LP4}}: 'LP4' provides a universal LP upper bound on the
degree distribution of the variable nodes for LDPC code ensembles
(from the edge perspective). It is based on \eqref{design rate of
LDPC ensemble} and \eqref{eq: universal lower bound on a_R with
finite P_b} which give the following problem:
\begin{equation}
\mbox{\fbox{$
\begin{array}{l}
\text{maximize} \; \; \sum\limits_{i=2}^k \lambda_i, \quad k=2, 3, \ldots \nonumber \\
\text{subject to} \nonumber \\[0.1cm]
\begin{cases}
\sum\limits_{i=2}^{\infty} \frac{\lambda_i}{i} \leq
\frac{\ln\bigl(\frac{1}{C}\bigr)}{2(1-C) \left(1+\frac{\varepsilon
C}{1-C}\right) \ln\biggl(\frac{1}{1-2
h_2^{-1}\bigl(\frac{1-C-h_2(P_{\text{b}})}
{1-(1-\varepsilon)C}\bigr)}\biggr)}
\hspace{-0.2cm} \\ \\[-0.1cm]
\sum\limits_{i=2}^{\infty} \lambda_i = 1 \\ \\[-0.1cm] \lambda_i \geq 0,
\quad i=2, 3, \ldots
\end{cases}
\end{array}
$}}
\end{equation}
where the optimization variables are $\{\lambda_i\}_{i \geq 2}$.
This bound holds for all MBIOS channels with a given capacity $C$.
\end{itemize}

\begin{figure*}
\begin{center}
\epsfig{file=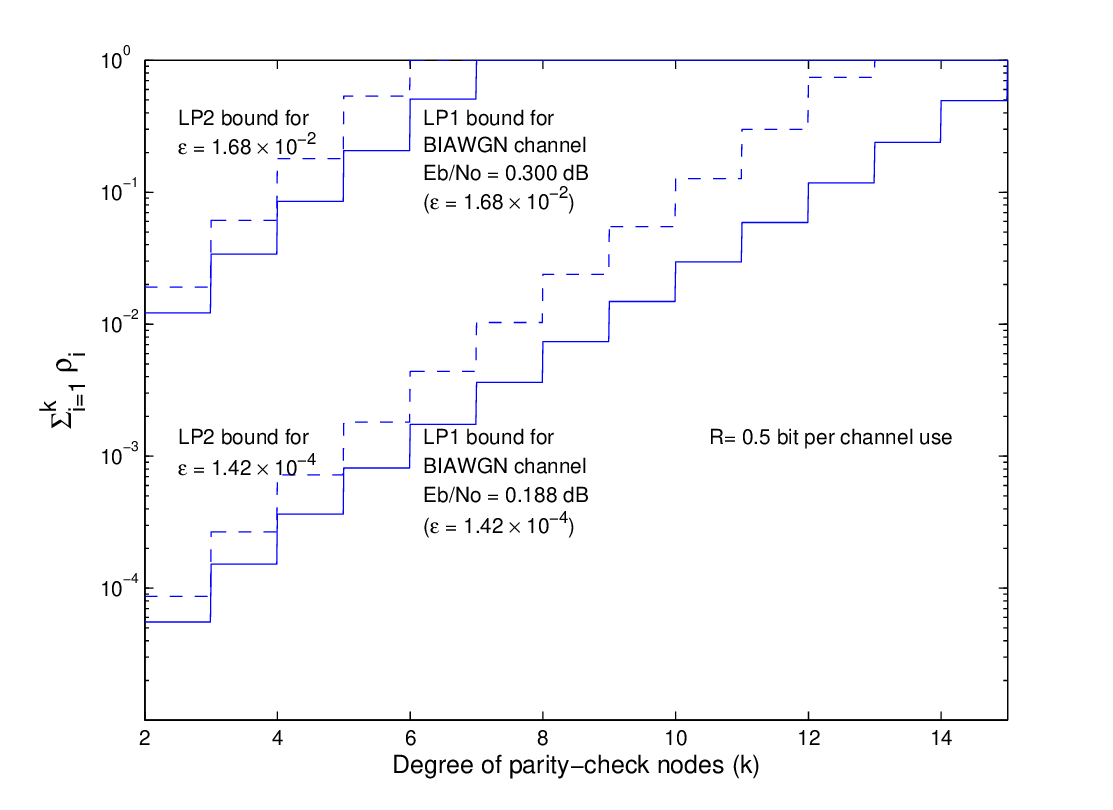,scale=0.60}
\end{center}
\caption{\label{Figure: LP1 vs LP2.eps} LP1 versus LP2 upper bounds
on the degree distributions of the parity-check nodes, from the edge
perspective, for LDPC code ensembles whose design rate is $R =
\frac{1}{2}$. The stair functions show upper bounds on the fraction
of the edges which are connected to parity-check nodes whose degrees
are at most $k$ for an integer $k \geq 2$. The bounds are valid
under ML decoding or any sub-optimal decoding algorithm. All these
curves refer to a target bit error probability of $P_{\text{b}} =
10^{-10}$. The two LP1 bounds (solid lines) refer to binary-input
AWGN (BIAWGN) channels for which $\frac{E_{\text{b}}}{N_0}$ = 0.300
and 0.188~dB, so the corresponding channel capacities are $C =
0.5086$ and 0.5001 bits per channel use, respectively; the
corresponding gaps (in rate) to capacity are therefore equal to
$\varepsilon = 1.68\cdot 10^{-2}$ and $1.42\cdot 10^{-4}$,
respectively. The two universal LP2 bounds (dashed lines) correspond
to all the MBIOS channels which exhibit a given capacity, whose
value coincides in each case with the capacity of the considered
BIAWGN channel.}
\end{figure*}

The universal (LP2) bound is compared in Figure~\ref{Figure: LP1
vs LP2.eps} to the LP1 bound for the BIAWGN channel with the same
capacity. It is shown in this figure that the difference between
these two bounds is not large. Note that the universal bound is
attained for the BEC with the same capacity as of the BIAWGN
channel.

\begin{remark}{\bf{[A discussion on the constraints given in the LP1
and LP2 bounds and the un-necessity of adding the constraint in
Theorem~\ref{Theorem: Lower bound on right degree}]}} We prove in
Appendix~\ref{Appendix: Proof for the un-necessity of adding the
additional constraint in the LP1 and LP2 bounds} that adding the
constraint which is imposed by the lower bound on the average
right degree (i.e., the lower bound on $a_{\text{R}} =
\sum_{i=1}^{\infty} i \Gamma_i$) does not affect the LP1 and LP2
bounds introduced here. This simplifies the formulation of the LP
bounds serves for the derivation of closed-form analytical
solutions of these bounds later in this section. \label{Remark: A
discussion on the constraints given in the LP1 and LP2 bounds}
\end{remark}

\begin{remark}{\bf{[The LP1 and LP2 bounds and their connection
with the asymptotic behavior as given in Theorem~\ref{Theorem:
Behavior of left and right degrees}]}} As shown via the upper
bounds in Fig.~\ref{Figure: LP1 vs LP2.eps}, the fraction of edges
which are connected to parity-check nodes of low degree is small,
especially when the achievable gap to capacity vanishes. This is
consistent with the theoretical result in Theorem~\ref{Theorem:
Behavior of left and right degrees} and Corollary~\ref{Corollary:
Behavior of the left and right degrees when P_b=0} which states
that the fraction of parity-check nodes of any finite degree
scales at most like $\varepsilon$ and the fraction of edges
connected to parity-check nodes of any finite degree scales at
most like $\frac{\varepsilon}{\ln \frac{1}{\varepsilon}}$ where
$\varepsilon$ designates the gap in rate to capacity, so both
quantities tend to zero as the gap to capacity vanishes.
\end{remark}

\vspace*{0.2cm} For solving the LP1 and LP2 bounds which are
introduced in this section we originally used \cite{CVX_software},
a package for specifying and solving convex optimization problems
\cite{CVX_book}. It enables to solve these problems on a standard
PC in a fraction of a second. However, it is still nice to get an
analytic solution of these LP bounds.

\vspace*{0.1cm} {\bf{Analytical solutions for the LP1 and LP2
bounds}}: The LP1 problem can be expressed in the following
equivalent form:
\begin{equation}
\mbox{\fbox{$
\begin{array}{l}
\text{maximize} \; \; \sum\limits_{i=1}^k \rho_i, \quad k=1, 2, \ldots \nonumber \\
\text{subject to} \nonumber \\[0.1cm]
\begin{cases}
\hspace{0.2cm} \sum\limits_{i=1}^{\infty} d_i \rho_i \leq 0 \\
\hspace{0.2cm} d_i \triangleq \frac{1}{i} \left[
1-h_2\biggl(\frac{1-g_1^{\frac{i}{2}}}{2}\biggr) -
\frac{\varepsilon\,C + h_2(P_{\text{b}})}{1-(1-\varepsilon)C}
\right] \\[0.15cm]
\hspace{0.2cm} \sum\limits_{i=1}^{\infty} \rho_i = 1 \\[0.15cm]
\hspace{0.2cm} \rho_i \geq 0, \quad i=1, 2, \ldots
\end{cases}
\end{array}
$}}
\end{equation}
An analytical solution for the LP1 bound is obtained in
Appendix~\ref{Appendix: solution of the LP1 bound} (via the use of
strong Lagrange duality).

In the following, the final solution of the LP1 bound is
presented. To this end, note that for indices $i$ large enough,
$d_i < 0$ and also $\lim_{i \rightarrow \infty} d_i = 0$.  Let
$d^* \triangleq \min_{i \geq 1} d_i$ be the minimal value of this
sequence, and let $i=l$ be the corresponding index of $d_i$ which
achieves this minimal value of the sequence $\{d_i\}$. Clearly,
$d^* < 0$. The resulting closed-form solution for the LP1 bound
gets the following form (see Appendix~\ref{Appendix: solution of
the LP1 bound}):
\begin{itemize}
\item For values of $k$ below the lower bound
on the average right degree in \eqref{eq: lower bound on a_R with
finite P_b}, it is equal to $-\frac{d^*}{d_k - d^*}$.
\item For values of $k$ larger or equal to the lower bound
on the average right degree in \eqref{eq: lower bound on a_R with
finite P_b}, it is equal to~1.
\end{itemize}
A similar solution is obtained for the LP2 bound where the only
difference is that $g_1$ in the definition of the sequence
$\{d_i\}$ is replaced by the channel capacity $C$. These
analytical solutions match the numerical solutions obtained via
\cite{CVX_software}.

\vspace*{0.2cm}
\begin{example}{\bf{[A comparison of the LP1 bound and capacity-achieving
LDPC code ensembles over the BEC]}} In the following, we compare
the LP1 bound for the BEC and the degree distributions of two
capacity-achieving sequences of LDPC code ensembles under
iterative message-passing decoding.

The first capacity-achieving sequence for the BEC refers to the
heavy-tail Poisson distribution, and it was introduced in
\cite[Section~IV]{LubyMSS_IT01}, \cite{Shokrollahi-IMA2000} (see
also \cite[Problem~3.20]{RiU_book}). The second capacity-achieving
sequence refers to the right-regular LDPC code ensembles
\cite{Shokrollahi-IMA2000}, based also on the analysis in the
proof of Proposition~\ref{Proposition: the tightness of the upper
bound on lambda_2} (see Section~\ref{Section: Proofs of Main
Results}).

This first capacity-achieving sequence is obtained via the pair of
degree distributions
\begin{eqnarray*}
&& \hat{\lambda}_{\alpha}(x) = -\frac{1}{\alpha} \cdot
\ln(1-x) = \frac{1}{\alpha} \sum_{i=1}^{\infty} \frac{x^i}{i} \\
&& \rho_{\alpha}(x) = e^{\alpha (x-1)} = e^{-\alpha}
\sum_{i=0}^{\infty} \frac{\alpha^i x^i}{i!}
\end{eqnarray*}
which satisfies the equality
$\hat{\lambda}_{\alpha}(1-\rho_{\alpha}(1-x))=x$ for all $\alpha >
0$. Starting with the heavy-tail Poisson distribution as above and
proceeding along the lines in \cite[Section~3.15]{RiU_book}, the
following two steps are performed for the construction of
capacity-approaching LDPC code ensembles for the BEC:
\begin{itemize}
\item The degree distribution $\hat{\lambda}_{\alpha}(x)$
is truncated so that it consists of the first $N$ terms of its
Taylor series expansion (up to and including the term $x^{N-1}$).
\item The truncated power series
$\hat{\lambda}_{\alpha}^{(N)}(x)$ is normalized so that it is
equal to~1 at $x=1$. The left degree distribution (from the edge
perspective) is then equal to $\lambda_{\alpha}^{(N)}(x) =
\frac{\hat{\lambda}_{\alpha}^{(N)}(x)}{\hat{\lambda}_{\alpha}^{(N)}(1)}$.
The right degree distribution, $\rho_{\alpha}(x)$, is not
modified.
\end{itemize}
This procedure provides the following degree distributions:
\begin{eqnarray}
&& \lambda_i = \frac{1}{H(N-1)\, (i-1)}, \quad i = 2, 3, \ldots N \nonumber\\
&& \rho_i = \frac{e^{-\alpha} \alpha^{i-1}}{(i-1)!}, \quad \quad
i=1, 2, \ldots \label{eq: heavy-tail Poisson distribution}
\end{eqnarray}
where $H(k) \triangleq \sum_{i=1}^k \frac{1}{i}$ for $k \geq 1$ is
a truncated harmonic sum. From \eqref{design rate of LDPC
ensemble}, straightforward calculus shows that the design rate of
the corresponding LDPC code ensemble is equal to
\begin{eqnarray}
R_{\text{d}}(\alpha, N) &=& 1 - \frac{\int_0^1 \rho_{\alpha}(x) \,
\text{d}x}{\int_0^1 \lambda_{\alpha}^{(N)}(x) \, \text{d}x} \nonumber \\
&=& 1 - \frac{N \, H(N-1) \, (1-e^{-\alpha})}{(N-1)\alpha}.
\label{eq: design rate of heavy-tail Poisson distribution}
\end{eqnarray}
We need to determine the parameters $\alpha$ and $N$ so that the
design rate in \eqref{eq: design rate of heavy-tail Poisson
distribution} forms (at least) a fraction $1-\varepsilon$ of the
capacity of the BEC. Let $p$ designate the erasure probability of
the channel, and let $r = (1-\varepsilon)(1-p)$ be the lower bound
on the required design rate. We need to choose $\alpha$ and $N$ to
satisfy the inequality $R_{\text{d}}(\alpha, N) \geq r$ with
vanishing bit erasure probability under BP decoding. Similarly to
the calculations in \cite[Example~3.88]{RiU_book}, the
satisfiability of the inequality
\begin{equation*}
\frac{\hat{\lambda}_{\alpha}^{(N)}(1)}{1-\hat{\lambda}_{\alpha}^{(N)}(1)}
\left(\frac{\int_0^1 \rho_{\alpha}(x) \, \text{d}x}{\int_0^1
\hat{\lambda}_{\alpha}^{(N)}(x) \, \text{d}x} - 1 \right) \leq
\varepsilon
\end{equation*}
implies this requirement, and straightforward algebra gives the
inequality
\begin{equation}
\frac{\frac{H(N-1)}{\alpha}}{1-\frac{H(N-1)}{\alpha}}
\left(\frac{N (1-e^{-\alpha})}{N-1} -1 \right) \leq \varepsilon.
\label{eq: inequality for choosing alpha and N for the heavy-tail
Poisson distribution}
\end{equation}
By choosing $\alpha$ to satisfy the equality
$\frac{H(N-1)}{\alpha} = 1-r$ and replacing $1-e^{-\alpha}$ by~1,
we get from \eqref{eq: inequality for choosing alpha and N for the
heavy-tail Poisson distribution} the following stronger
requirement:
\begin{equation}
\frac{1-r}{r} \, \frac{1}{N-1} \leq \varepsilon
\end{equation}
which then provides a proper choice for $N$. To conclude, the
parameters $\alpha$ and $N$ are chosen to be
\begin{equation}
\alpha = \frac{H(N-1)}{1-r}, \quad N = \biggl\lceil
\frac{1-r}{\varepsilon r} \biggr\rceil + 1. \label{eq: choice of
parameters for the heavy-tail Poisson distribution}
\end{equation}
In the following, we calculate the heavy-tail Poisson distribution
in \eqref{eq: heavy-tail Poisson distribution} with the choice of
parameters in \eqref{eq: choice of parameters for the heavy-tail
Poisson distribution}. The resulting degree distribution of the
parity-check nodes (from the edge perspective) is compared with
the LP1 bound for the BEC where the analytical solution of this
bound is given in Appendix~\ref{Appendix: solution of the LP1
bound}.

\begin{figure*}[here]
\begin{center}
\epsfig{file=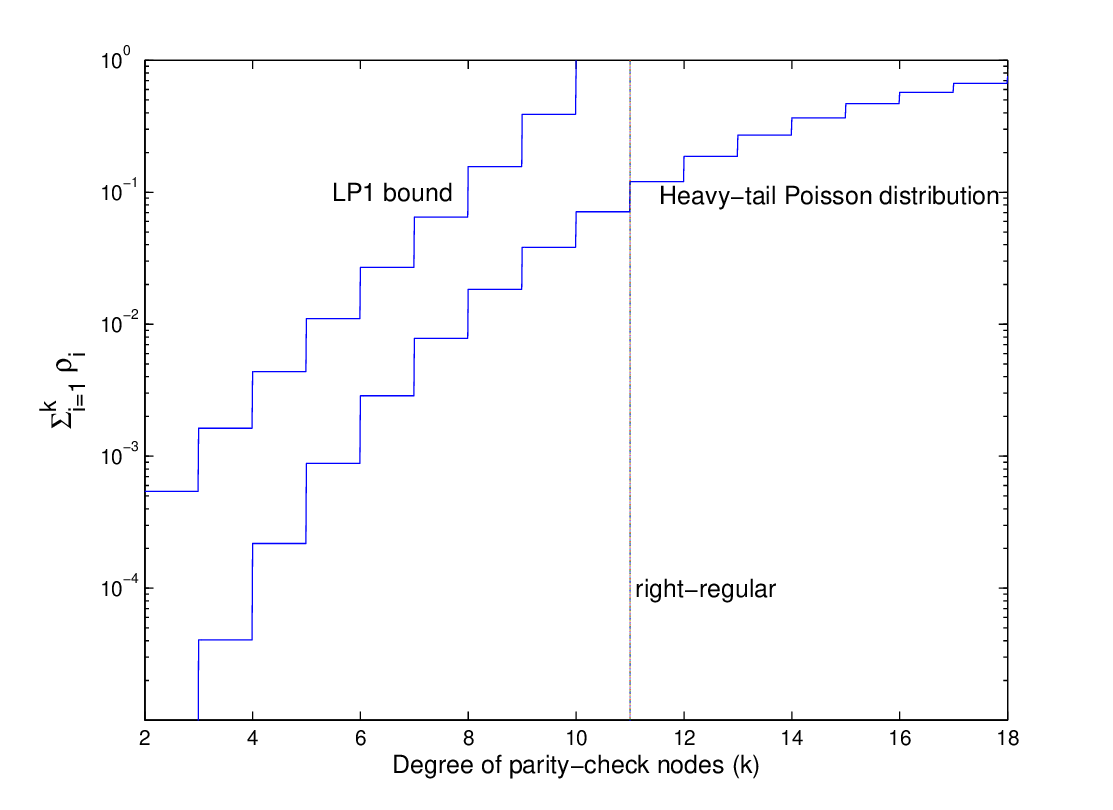,scale=0.60}\\
\epsfig{file=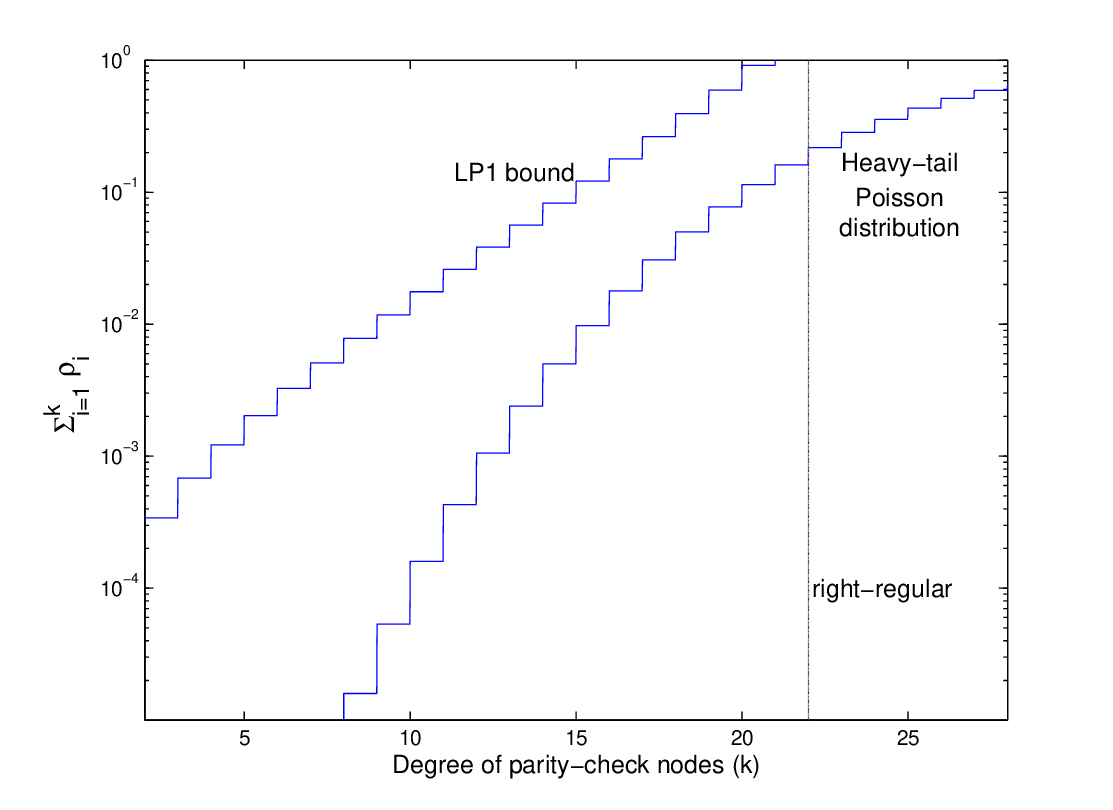,scale=0.60}
\end{center}
\caption{\label{Figure: comparison between the LP1 bound and the
heavy-tail Poisson distribution} A comparison between the LP1 bound,
the heavy-tail Poisson degree distribution in \eqref{eq: heavy-tail
Poisson distribution} and \eqref{eq: choice of parameters for the
heavy-tail Poisson distribution}, and the parity-check degree
distribution of the right-regular LDPC ensemble (it is calculated
via \eqref{eq: N for right-regular ensembles}, \eqref{eq: k2},
\eqref{eq: alpha as function of p and N} where the right degree is
equal to $a_{\text{R}} = \lceil \frac{1}{\alpha}\rceil+1$). This
comparison refers to a BEC whose capacity is one-half (upper plot)
and three-quarters (lower plot) bits per channel use, and the
setting where 99.9\% of the channel capacity is achieved under BP
decoding with vanishing bit erasure probability. The stair functions
correspond to the fraction of edges which are attached to
parity-check nodes whose degrees are at most $k$ for a positive
integer $k$.}
\end{figure*}
Comparisons between the heavy-tail Poisson distribution and the LP1
bound are shown in Figure~\ref{Figure: comparison between the LP1
bound and the heavy-tail Poisson distribution}. We note that the LP1
bound is an upper bound on the parity-check degree distribution
which is valid under ML decoding (and hence, it is general for any
decoding algorithm), whereas the heavy-tail Poisson distribution is
designed to achieve a certain gap to capacity under BP decoding. We
also show in this figure the fixed degree of the parity-check nodes
for the right-regular LDPC code ensemble; this calculation is done
via \eqref{eq: N for right-regular ensembles}, \eqref{eq: k2},
\eqref{eq: alpha as function of p and N} where the right degree is
equal to $a_{\text{R}} = \lceil \frac{1}{\alpha}\rceil+1$. Although
the latter case corresponds to a step function, the degree where
this function switches from zero to one provides an indication to
the reasonable tightness of the LP1 upper bound with respect to the
value of the parity-check degree $k$ where this upper bound is close
to~1.

The following analysis compares between the behavior of the upper
bound on $\rho_i$ as given in Corollary~\ref{Corollary: Behavior
of the left and right degrees when P_b=0} with the behavior of the
heavy-tail Poisson distribution in the limit where the gap to
capacity vanishes under BP decoding: Note that the truncated
harmonic sum $H(k)$ scales like the logarithm of $k$ (more
precisely, $H(k) \approx \ln(k) + \gamma$ for $k \gg 1$ where
$\gamma \approx 0.5772$ is Euler's constant), and the value of $N$
as given in \eqref{eq: choice of parameters for the heavy-tail
Poisson distribution} becomes un-bounded as the gap to capacity
vanishes (since it is inversely proportional to $\varepsilon$).
Hence, for small values of the gap to capacity (i.e., when
$\varepsilon \ll 1$), we get from \eqref{eq: choice of parameters
for the heavy-tail Poisson distribution}
\begin{equation*}
\alpha \approx \frac{\ln \frac{1-r}{\varepsilon r}}{1-r}, \quad N
\approx \frac{1-r}{\varepsilon r} + 1
\end{equation*}
and therefore \eqref{eq: heavy-tail Poisson distribution} yields
that the fraction of edges which are attached to parity-check
nodes of a given degree~$i$ scales like
$\varepsilon^{\frac{1}{1-r}} \left(\ln \frac{1}{\varepsilon}
\right)^{i-1}$ for $i \geq 1$. The upper bound on $\rho_i$ as
given in Corollary~\ref{Corollary: Behavior of the left and right
degrees when P_b=0} scales like $\frac{\varepsilon}{\ln
\frac{1}{\varepsilon}}$, where this bound is even valid under ML
decoding. For a comparison between this general upper bound and
the behavior of the Poisson distribution when the gap to capacity
vanishes, we note that for any rate $r<1$, a positive integer~$i$
and $\varepsilon \ll 1$, the inequality $
\varepsilon^{\frac{1}{1-r}} \left(\ln \frac{1}{\varepsilon}
\right)^{i-1} \ll \frac{\varepsilon}{\ln \frac{1}{\varepsilon}} $
holds, as expected from a comparison of a degree distribution with
a general upper bound. Moreover, it follows from the asymptotic
analysis that for small design rates (i.e., $r \ll 1$), the
Poisson distribution gets closer to the LP1 bound in the limit
where $\varepsilon \rightarrow 0$ (as exemplified in
Fig.~\ref{Figure: comparison between the LP1 bound and the
heavy-tail Poisson distribution} by comparing the upper and lower
plots which correspond to a capacity of $\frac{1}{2}$ and
$\frac{3}{4}$ bits per channel use, respectively).
\end{example}
{\bf{Analytical solutions for the LP3 and LP4 bounds}}: Consider
an LP problem of the form
\begin{equation}
\mbox{\fbox{$
\begin{array}{l}
\text{maximize} \; \; \sum\limits_{i=2}^k \lambda_i, \quad k=2, 3, \ldots \nonumber \\
\text{subject to} \nonumber \\
\begin{cases}
\sum\limits_{i=2}^{\infty} \frac{\lambda_i}{i} \leq \alpha
\hspace{-0.2cm} \\[0.3cm]
\sum\limits_{i=2}^{\infty} \lambda_i = 1 \\[0.3cm] \lambda_i
\geq 0, \quad i=2, 3, \ldots
\end{cases}
\end{array}
$}}
\end{equation}
If $k \alpha \leq 1$ then the optimal solution is obtained by
setting $\lambda_k = k\alpha$, $\lambda_j = 1-k\alpha$ for some $j
\rightarrow \infty$ where all the other $\lambda_i$'s are set to
zero. This gives a solution which is equal to $\sum_{i=1}^k
\lambda_i = \lambda_k = k\alpha$. If $k \alpha > 1$ then the
optimal solution is obtained by setting $\lambda_k = 1$ and all
the other $\lambda_i$'s to be zero. Hence, the solution of this LP
problem is given by $ \min\{k \alpha, 1\} $ which implies that the
closed-form solutions of the LP3 and LP4 bounds are given by
\begin{equation}
\min\left\{1, \frac{k \ln\bigl(\frac{1}{g_1}\bigr)}{2(1-C)
\left(1+\frac{\varepsilon C}{1-C}\right) \ln\biggl(\frac{1}{1-2
h_2^{-1}\bigl(\frac{1-C-h_2(P_{\text{b}})}
{1-(1-\varepsilon)C}\bigr)}\biggr)}  \right\} \label{eq: closed
form solution of LP3}
\end{equation}
and
\begin{equation}
\min\left\{1, \frac{k \ln\bigl(\frac{1}{C}\bigr)}{2(1-C)
\left(1+\frac{\varepsilon C}{1-C}\right) \ln\biggl(\frac{1}{1-2
h_2^{-1}\bigl(\frac{1-C-h_2(P_{\text{b}})}
{1-(1-\varepsilon)C}\bigr)}\biggr)}  \right\} \label{eq: closed
form solution of LP4}
\end{equation}
respectively.

Based on the observations in Theorems~\ref{Theorem: Behavior of
left and right degrees} and~\ref{Theorem: degree-2 variable
nodes}, the fraction of edges connected to variable nodes of small
degree is expected to be significantly larger than the fraction of
edges which are connected to parity-check nodes of the same
degree. This is shown in the following example:
\begin{example}[{\bf{LP3 bound}}]
Consider LDPC code ensembles whose design rate is one-half bit per
channel use, and whose transmission takes place over a BIAWGN
channel. Lets assume that we wish to find upper bounds on the
fraction of edges up to degree $k$ (for a parameter $k \geq 2$) for
the setting of a bit error probability of (at most) $P_{\text{b}} =
10^{-10}$ under ML decoding (or any sub-optimal decoding algorithm)
at $\frac{E_{\text{b}}}{N_0} = 0.188~\text{dB}$. This implies a gap
to capacity which is equal to $\varepsilon = 1.42 \cdot 10^{-4}$.
From \eqref{eq: closed form solution of LP3}, we obtain the
following inequalities (also verified numerically via
\cite{CVX_software}):
\begin{eqnarray*}
&& \lambda_2 \leq 0.2683\\
&& \lambda_2 + \lambda_3 \leq 0.4025\\
&& \lambda_2 + \lambda_3 + \lambda_4 \leq 0.5367\\
&& \lambda_2 + \lambda_3 + \lambda_4 + \lambda_5 \leq 0.6709\\
&& \lambda_2 + \lambda_3 + \lambda_4 + \lambda_5 + \lambda_6
\leq 0.8051 \\
&& \lambda_2 + \lambda_3 + \lambda_4 + \lambda_5 + \lambda_6 +
\lambda_7 \leq 0.9392 \\
&& \lambda_2 + \lambda_3 + \lambda_4 + \lambda_5 + \lambda_6 +
\lambda_7 + \lambda_8 \leq 1.0000.
\end{eqnarray*}
A comparison of these numerical results with those presented in
Fig.~\ref{Figure: LP1 vs LP2.eps} for the same value of
$\frac{E_{\text{b}}}{N_0}$ shows a big difference between the two
upper bounds on the sequences $\{\lambda_i\}$ and $\{\rho_i\}$.
This difference is well expected in light of the bounds in
Corollary~\ref{Corollary: Behavior of the left and right degrees
when P_b=0} where for every finite degree~$i$, the upper bounds on
$\lambda_i$ and $\rho_i$ scale like $\frac{1}{\log
\frac{1}{\varepsilon}}$ and $\frac{\varepsilon}{\log
\frac{1}{\varepsilon}}$, respectively. We note that this
difference is not an artifact of the bounding technique, as is
demonstrated in Proposition~\ref{Proposition: the tightness of the
upper bound on lambda_2} for the BEC.
\end{example}

\subsection{Bounds on the Graphical Complexity of Finite-Length Codes}
\label{Numerical Results for Finite-Length Analysis} In various
applications, there is a need to design a communication system which
fulfills several requirements on the available bandwidth with
acceptable delay for transmitting and processing the data while
maintaining a certain fidelity criterion in reconstructing the data.
In this setting, one wishes to design a code which satisfies the
delay constraint (i.e., the block length is limited) while adhering
to the required performance over the given channel. By fixing the
communication channel model and code rate (which is related to the
bandwidth expansion caused by the error-correcting code),
sphere-packing bounds are transformed into lower bounds on the
minimal block length required to achieve a target block error
probability at a certain gap to capacity using an arbitrary block
code and decoding algorithm. This issue is studied in
\cite[Section~V]{ISP08}.

\begin{figure*}[here!]
\begin{center}
\epsfig{file=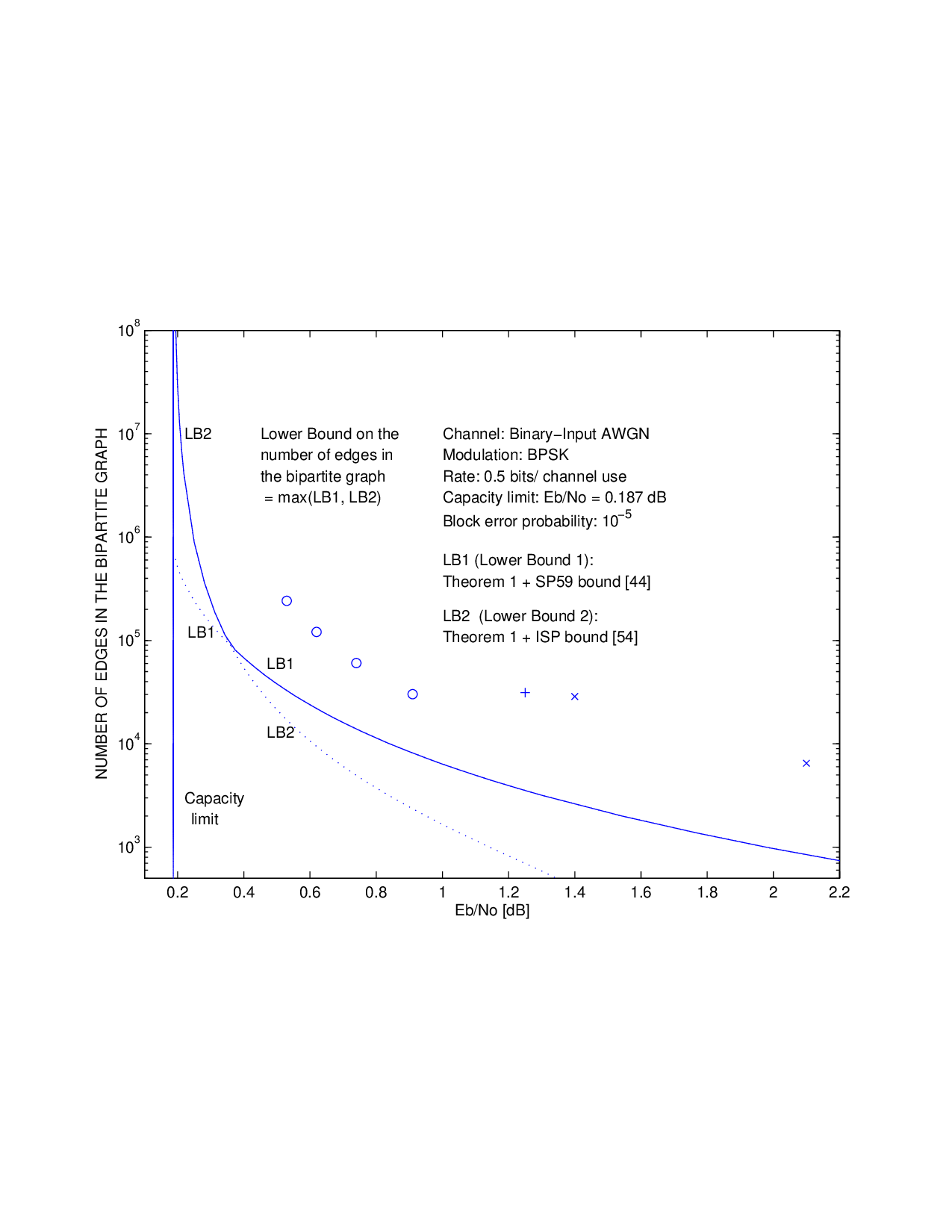,scale=0.68}
\end{center}
\caption{\label{Figure: graphical_complexity_Rate05.eps} A
comparison between the graphical complexity of various efficient
LDPC code ensembles and an information-theoretic lower bound. The
graphical complexity is measured by the number of edges which are
used to represent the codes (or code ensembles) by bipartite graphs
in order to achieve a fixed target block error probability over a
given communication channel. It is assumed that the code is BPSK
modulated and transmitted over a binary-input AWGN channel. This
figure refers to a target block error probability of $P_{\text{B}} =
10^{-5}$, and a design rate of one-half bit per channel use. The
information-theoretic lower bound is valid under maximum-likelihood
(ML) decoding (and, hence, it also holds under any sub-optimal
decoding algorithm). For the comparison of the lower bound with
various LDPC code ensembles, we refer to both ML and
belief-propagation (BP) decoding algorithms. The circled points
refer to ML decoding, and they are based on the tangential-sphere
upper bound which is applied to the (6,12) regular LDPC code
ensembles of Gallager for block lengths of 5040, 10080, 20160 and
40320 bits (these points rely on \cite[Table~II]{Tong_ICC08}). The
other three points in this figure refer to LDPC code ensembles which
are decoded by a BP decoder. The point marked by $`+'$ refers to a
non-punctured protograph LDPC code ensemble of block length~7360
bits and of rate one-half (see \cite[Fig.~9]{Divsalar_ISIT2006}).
The other two points which are marked by $`\times'$ refer to
irregular quasi-cyclic LDPC code ensembles (see \cite[Figs.~10 and
11]{QC LDPC_IT07}). The two information-theoretic lower bounds on
the graphical complexity ('LB1' and 'LB2') rely, respectively, on
the sphere-packing bound of Shannon \cite{Shannon_1959} and the
recently introduced sphere-packing bound in \cite{ISP08}. Both of
these bounds also rely on Theorem~\ref{Theorem: Lower bound on right
degree} which serves as a lower bound on the average right degree.
The information-theoretic lower bound that is shown in this figure
is obtained by taking the maximum of the LB1 and LB2 bounds.}
\end{figure*}

In the following, we refer to the graphical complexity of an
arbitrary bipartite graph which represents a binary linear block
code. The graphical complexity has an operational meaning for an
iterative message-passing decoder since the number of edges is
equal to the number of right-to-left and left-to-right messages
which are delivered in each iteration. As opposed to
\cite{Hsu_Achilleas1}, \cite{Pfister1} and \cite{Pfister2}, we
refer here to the graphical complexity of {\em finite-length
codes}. In order to evaluate an information-theoretic lower bound
on the graphical complexity which is expressed in terms of the
target block error probability and the corresponding achievable
gap to capacity, we rely here on the following algorithm:
\begin{itemize}
\item {\em Step~1}: Sphere-packing bounds are used to calculate a
lower bound on the minimal required block length in terms of the
achievable rate with a target block error probability and its gap
to capacity. For a memoryless symmetric channel, the lower bound
on the minimal block length is calculated via the ISP bound (for
finite-length codes, this recent sphere-packing bound suggests a
significant improvement over the bounds in \cite{SGB} and
\cite{Valembois_Fossorier}, see Section~\ref{sub-section:
sphere-packing bounds} and \cite[Section~III]{ISP08}). In
addition, this lower bound is also compared with the 1959
sphere-packing (SP59) bound of Shannon (see
Section~\ref{sub-section: sphere-packing bounds} and
\cite{Shannon_1959}) for a binary-input AWGN channel where the
transmitted signals are assumed to have equal energy.
\item {\em Step~2}: A lower bound on the average right degree is
calculated via Theorem~\ref{Theorem: Lower bound on right degree}
for an arbitrary bipartite graph which is used to represent a
binary linear block code. Note that for an LDPC code whose
parity-check matrix is not necessarily full-rank, one can apply
this lower bound by replacing the code rate with the design rate
(see Discussion~\ref{discussion: extension of Theorem 1 to LDPC
ensembles} in Section~\ref{Section: Proofs of Main Results}). The
calculation of this lower bound for a target block error
probability $P_{\text{B}}$ also stays valid if the block length
$n$ is replaced in \eqref{delta} with a lower bound $n'$ (as
calculated in the previous step).
\item {\em Step~3:} The total number of edges of a bipartite graph is a
measure of its graphical complexity. For a bipartite graph which
refers to a design rate of $R_{\text{d}}$, the total number of
edges is equal to $| \mathcal{E} | = (1-R_{\text{d}})n
a_{\text{R}}$. Replacing $n$ and $a_{\text{R}}$ by the lower
bounds calculated in Steps~1 and~2, respectively, gives a lower
bound on the number of edges.
\end{itemize}

The resulting lower bound on the total number of edges is general
for every representation of a binary linear block code by a
parity-check matrix and its respective bipartite graph. This bound
depends on the code rate (or design rate), the communication
channel, the achievable gap to capacity, and the target block
error probability. This lower bound holds for an arbitrary
representation of the code by a bipartite graph.

According to the above description of the three steps used to
calculate the information-theoretic lower bound on the graphical
complexity, we calculate here two lower bounds on the graphical
complexity:
\begin{itemize}
\item {\bf{LB1}}: A lower bound which combines a
lower bound on the block length calculated via the SP59 bound
\cite{Shannon_1959}, and a lower bound on the average right degree
which is calculated via Theorem~\ref{Theorem: Lower bound on right
degree} for a target block error probability $P_{\text{B}}$ and a
given code rate (or design rate).
\item {\bf{LB2}}: A lower bound which combines a
lower bound on the block length calculated via the ISP bound
\cite[Section~III]{ISP08}, and the same lower bound on the average
right degree.
\end{itemize}
We note that Steps~2 and~3 in the above algorithm are common for
the calculation of the LB1 and LB2~bounds, and the only difference
in the calculation of these two bounds is in Step~1 where the SP59
and ISP bounds are used for the LB1 and LB2 bounds, respectively.
The resulting lower bound (LB) on the graphical complexity is the
maximal value of the LB1 and LB2 bounds, i.e., $\text{LB} =
\max(\text{LB1}, \text{LB2})$. We note that the resulting lower
bound on the graphical complexity holds under ML decoding or any
sub-optimal decoding algorithm.

The above algorithm is applied in Figure~\ref{Figure:
graphical_complexity_Rate05.eps} to obtain a lower bound on the
graphical complexity of an arbitrary binary linear block code of
rate one-half and with a target block error probability of
$P_{\text{B}} = 10^{-5}$. It is assumed that the code is BPSK
modulated, and the transmission takes place over a binary-input
AWGN channel. The un-bounded complexity in the limit where the gap
to capacity vanishes is due to the infinite block length which is
required to obtain reliable communications at rates which are
arbitrarily close to capacity. We note that the {\em bounded}
graphical complexity for the BEC, as demonstrated in
\cite{Hsu_Achilleas1, Pfister1} and \cite{Pfister2}, is obtained
by addressing the graphical complexity per information bit, and by
also allowing more complicated Tanner graphs which include state
nodes (e.g., punctured bits) in addition to the variable and
parity-check nodes which are used for a representation of these
codes by bipartite graphs.

As shown in Figure~\ref{Figure: graphical_complexity_Rate05.eps},
the bound LB2 is advantageous over LB1 for low values of
$\frac{E_{\text{b}}}{N_0}$ which are close to the capacity limit;
this phenomenon is even more pronounced for higher code rates
(above one-half bit per channel use). This observation is
partially due to the fact that the ISP bound depends on the
particular type of modulation used, in contrast to the SP59 bound
which only assumes that the modulated signals have equal energy
but does not consider the particular modulation used.

The lower bound on the graphical complexity is compared here with
some efficient LDPC codes (or code ensembles) as reported in the
literature. To this end, we refer to computer simulations under BP
decoding, and also to upper bounds on the block error probability
under ML decoding. Although the number of edges is relevant for
the decoding complexity per iteration under BP decoding, some
comparisons with ML decoding provide a better assessment of the
tightness of this information-theoretic lower bound. The circled
points in Figure~\ref{Figure: graphical_complexity_Rate05.eps} are
based on the tangential-sphere upper bound\footnote{For a
presentation of the tangential-sphere bound, originally introduced
by Poltyrev \cite{Poltyrev}, we refer the reader to
\cite[pp.~23--32]{Tutorial}.} which is applied to the (6,12)
regular LDPC code ensembles of Gallager for block lengths of 5040,
10080, 20160 and 40320 bits whose block error probability is upper
bounded by $10^{-5}$ (see \cite[Table~II]{Tong_ICC08}). The other
three points which are shown in Figure~\ref{Figure:
graphical_complexity_Rate05.eps} refer to LDPC code ensembles
which are decoded by a BP decoder. The point marked by $`+'$
refers to a non-punctured protograph LDPC code ensemble of block
length of 7360 bits and a design rate of one-half (see
\cite[Fig.~9]{Divsalar_ISIT2006}). The other two points which are
marked by $`\times'$ refer to irregular quasi-cyclic LDPC code
ensembles (see \cite[Figs.~10 and~11]{QC LDPC_IT07}) where the
graphical complexity is obtained via the degree distributions
which are given in \cite[Examples~10 and 11]{QC LDPC_IT07}. To
conclude, the information-theoretic lower bound on the graphical
complexity becomes un-bounded as the gap to capacity vanishes
(even under ML decoding). It also behaves in a similar way to the
circled points in Figure~\ref{Figure:
graphical_complexity_Rate05.eps} (where these points refer to the
performance of a regular LDPC code ensembles under ML decoding).
Moreover, the comparison of this lower bound in
Figure~\ref{Figure: graphical_complexity_Rate05.eps} with some
efficient LDPC code ensembles under BP decoding (where the
corresponding points are marked by $`+'$ and $`\times'$) indicate
the gain that can be potentially obtained by improved designs of
efficient LDPC codes and iterative decoding algorithms defined on
graphs.

\section{Outlook}
\label{Section: Outlook}

This work considers some universal properties of
capacity-approaching low-density parity-check (LDPC) code ensembles
whose transmission takes place over memoryless binary-input
output-symmetric (MBIOS) channels. Properties of the degree
distributions, graphical complexity and the fundamental cycles of
the bipartite graphs are studied in this paper via the derivation of
information-theoretic bounds (see Sections~\ref{Section: main
results} and~\ref{Section: Proofs of Main Results}). The
applications of these bounds are exemplified in
Section~\ref{Section: Numerical Results}.

In the following, we gather some interesting open problems which are
related to this research work:
\begin{itemize}
\item The analysis in this paper relies (in part) on the lower bound
\eqref{eq: Lower bound on conditional entropy} on the conditional
entropy (see \cite{Wiechman_Sason}). Note that this bound depends on
the right degree distribution (i.e., the degree distribution of the
parity-check nodes), but the dependence on the left degree
distribution is rather weak (according to
Section~\ref{subsubsection: An adaptation of the analysis to LDPC
codes}, this dependence is made only through the design rate of the
LDPC code ensemble). It would be interesting to improve this bound
by also having an explicit dependence on the left degree
distribution. This goal can be obtained by improving the weak link
in the derivation of this bound, namely, by tightening the upper
bound \eqref{second step} on the conditional entropy of the syndrome
vector (which is expressed by the sum of the respective conditional
entropies of the components of the syndrome). Note that for the BSC,
the bound in \eqref{eq: Lower bound on conditional entropy}
coincides with the bound of Gallager in
\cite[Section~3.8]{Gallager_1962} (since the conditioning on the RHS
of \eqref{second step} becomes irrelevant for the BSC, due to the
fact that the absolute value of the LLR is a constant for this
channel). A step towards the improvement of Gallager's bound for the
BSC was done by Wadayama \cite{Wadayama_ISIT06} where the entropy of
the syndrome vector was calculated exactly in terms of the coset
weight distribution of the code (or the average coset weight
distribution of the ensemble). For a general MBIOS channel, the
improvement of the bound in \eqref{eq: Lower bound on conditional
entropy} is an open problem, and it may provide an explicit
dependence of the bound on the pair of degree distributions for a
code which is represented by a bipartite graph.
\item Unlike the information-theoretic bound in \eqref{eq: Lower
bound on conditional entropy}, the bounds presented in
\cite{Montanari_IT05} rely on statistical physics, and therefore do
not provide a bound on the conditional entropy which is valid for
every binary linear block code from the considered ensembles.
%(for a survey paper which introduces codes defined on graphs and
%highlights connections with statistical physics, the reader is
%referred to \cite{Montanari_Urbanke07}).
It would be interesting to get some theory that unifies the
information-theoretic and statistical physics approaches, and
provides bounds that are tight on the average and valid for each
code. We note that the bounds in \cite{Montanari_IT05} depend on
both the left and right degree distributions for LDPC code ensembles
(though their computation is more complicated than the bound given
in \eqref{eq: Lower bound on conditional entropy}).
\item The asymptotic bounds in Corollary~\ref{corollary: lower bound on the cardinality of the
fundamental system of cycles of  LDPC ensembles} address the average
cardinality of the fundamental system of cycles for bipartite graphs
representing LDPC code ensembles where the results are directly
linked to the average right degree of these ensembles. Further study
of the possible link between the statistical properties of the
degree distributions of capacity-approaching LDPC code ensembles and
some other graphical properties related to the bipartite graphs of
these ensembles is of interest.
\item The graphical complexity of capacity-approaching LDPC codes
is studied in this paper via an information-theoretic lower bound
which relies on both Theorem~\ref{Theorem: Lower bound on right
degree} and sphere-packing bounds (see Section~\ref{Numerical
Results for Finite-Length Analysis}). The graphical complexity is
defined to be the number of edges in the bipartite graphs used to
represent these codes. A recent sphere-packing bound which was
introduced in \cite{ISP08} is shown to be helpful for the
calculation of the lower bound on the graphical complexity,
especially when the gap to capacity becomes small (see the algorithm
for the calculation of this bound in Section~\ref{Numerical Results
for Finite-Length Analysis} and the results shown in
Figure~\ref{Figure: graphical_complexity_Rate05.eps}). Further
tightening of sphere-packing bounds for finite-length codes,
especially for codes of short to moderate block lengths, is of
interest and it has the potential of further improving the resulting
lower bound on the graphical complexity. An improvement of the
sphere-packing bounds introduced in \cite{Shannon_1959} and
\cite{ISP08} will also contribute to the study of the sub-optimality
of iteratively decoded codes for finite block lengths.
\item The derivation of universal bounds on the number of iterations
of code ensembles defined on graphs, measured in terms of the
achievable gap (in rate) to capacity, is of theoretical and
practical interest. In a recent work \cite{iterations}, this issue
is addressed for the BEC. It is demonstrated in \cite{iterations}
that the number of iterations which is required for successful
message-passing decoding scales at least like the inverse of the
achievable gap (in rate) to capacity, provided that the fraction of
degree-2 variable nodes of these turbo-like code ensembles does not
vanish (hence, the number of iterations becomes unbounded as the gap
to capacity vanishes). Note that Lemma~\ref{lemma: L_2 for capacity
approaching LDPC ensembles} (see p.~\pageref{lemma: L_2 for capacity
approaching LDPC ensembles}) provides a condition which ensures that
the fraction of degree-2 variable nodes stays strictly positive for
capacity-achieving LDPC code ensembles. A generalization of such a
lower bound on the number of iterations for an arbitrary MBIOS
channel is of interest. The matching condition for generalized
extrinsic information transfer (GEXIT) curves serves to conjecture
in \cite[Section~XI]{GEXIT} that, also for an arbitrary MBIOS
channel, this number of iterations scales like the inverse of the
achievable gap to capacity.
\item Extension of the results in this paper to channels with
memory (e.g., finite-state channels) is of interest. In this
respect, the reader is referred to \cite{Grover07} which considers
information-theoretic bounds on the achievable rates of LDPC code
ensembles for a class of finite-state channels.
\item Extension of the results in this work to general
ensembles of multi-edge type LDPC codes (see
\cite[Chapter~7]{RiU_book}) is of interest.
\end{itemize}

\appendices

\section{Proof of Lemma~\ref{lemma: extension of lower bound on the conditional
entropy}}
\renewcommand{\thelemma}{I.\arabic{lemma}}
\setcounter{lemma}{0} \label{Appendix: extension of lower bound on
the conditional entropy}

The following proof deviates from the analysis in
Section~\ref{subsubsection: The analysis for a full-rank
parity-check matrix}, starting from \eqref{second step}.
\begin{itemize}
\item In the transition to the last line in \eqref{second step},
the conditional entropy $H\bigl({\bf{S}} \, | \, \Omega_1, \ldots,
\Omega_n \bigr)$ is upper bounded by the sum of the conditional
entropies of the $n(1-R)$ independent components of the syndrome
${\bf{S}}$ under the assumption that the parity-check matrix is
full-rank. In the general case where this parity-check matrix is
not necessarily full-rank, the rate $R$ of the code may exceed the
design rate $R_{\text{d}}$ due to a possible linear dependence of
the rows in this matrix. Therefore, we obtain an upper bound on
the conditional entropy by summing over the $n(1-R_{\text{d}})$
components of the syndrome.
\item In parallel to \eqref{second step}, we get the inequality
\begin{eqnarray}
&& H\bigl( \Phi_1, \ldots, \Phi_n \, | \,
\Omega_1, \ldots, \Omega_n \bigr) \nonumber\\[0.1cm]
&& \leq H(M) + \sum_{j=1}^{n(1-R_{\text{d}})} H\bigl(S_j \, | \,
\Omega_1, \ldots, \Omega_n \bigr)\,. \label{modified second step}
\end{eqnarray}
\item The entropy of the transmitted codeword ${\bf{X}}$ is equal to the
entropy of the index $M$ of the received vector in the appropriate
coset, regardless of the rank of $H$. Hence, $H({\bf{X}})$ in the
second line of \eqref{eq: chain of equalities for the conditional
entropy} can be replaced by $H(M)$, and we get
\begin{equation*}
H(\mathbf{X}|\mathbf{Y}) = H(M) + n[H(\widetilde{Y}_1) -
I(X_1;\widetilde{Y}_1)] - H(\widetilde{\mathbf{Y}}).
\end{equation*}
\item Combining \eqref{eq:
trivial upper bound on the mutual information}--\eqref{first
step}, \eqref{modified second step} and the last equality, we get
the inequality (note that the entropy $H(M)$ cancels out)
\begin{equation*}
H(\mathbf{X}|\mathbf{Y}) \geq n(1-C) -
\sum_{j=1}^{n(1-R_{\text{d}})}H(S_j \big{|} \Omega_1, \ldots,
\Omega_n)
\end{equation*}
which is similar to \eqref{eq: new inequality for the conditional
entropy} except that the sum on the RHS is over the
$n(1-R_{\text{d}})$ (possibly linearly-dependent) components of
the syndrome.
\end{itemize}
From this point, the analysis is similar to
Section~\ref{subsubsection: The analysis for a full-rank
parity-check matrix} which then yields an extension of \eqref{eq:
Lower bound on conditional entropy} with $R$ replaced by
$R_{\text{d}}$ when the parity-check matrix is not necessarily
full-rank.

\section{Proof of Lemma~\ref{lemma: Extreme values of g_1}}
\label{Appendix: Proof of the lemma on the extreme values of g_1}

This lemma is proved by expressing the channel capacity as a
non-negative infinite series which depends on the sequence
$\{g_k\}_{k \geq 1}$, and solving an optimization problem for the
extreme values of $g_1$ subject to a constraint on the channel
capacity $C$. To this end, we rely on the equality in \eqref{eq:
channel capacity 2 of an MBIOS channel} for the capacity of an
MBIOS channel:
\begin{eqnarray}
C &=& \int_0^{\infty} a(l) (1+e^{-l})
\left[1-h_2\left(\frac{1}{1+e^{\,l}}\right)\right] \; \text{d}l
\nonumber\\
&\stackrel{\mathrm{(a)}}{=}& \int_0^{\infty} a(l) (1+e^{-l}) \;
\frac{1}{2\ln2}\sum_{k=1}^{\infty}
\frac{\tanh^{2k}\left(\frac{l}{2} \right)}{k(2k-1)} \; \text{d}l
\nonumber\\
&=& \frac{1}{2\ln2}\sum_{k=1}^{\infty} \Biggl\{
\frac{\int_0^{\infty}
a(l) (1+e^{-l}) \tanh^{2k}\left(\frac{l}{2} \right) \; \text{d}l}{k(2k-1)} \Biggr\} \nonumber\\
&\stackrel{\mathrm{(b)}}{=}& \frac{1}{2 \ln 2} \sum_{k=1}^{\infty}
\frac{g_k}{k(2k-1)} \label{eq: capacity of MBIOS channel}
\end{eqnarray}
where equality~(a) follows by substituting $x =
\frac{1}{1+e^{\,l}}$ in \eqref{eq: power series for h_2}, and
equality~(b) follows from \eqref{eq: definition of g_k}; this
provides an expression for the channel capacity in terms of the
non-negative sequence $\{g_k\}_{k=0}^{\infty}$ defined in
\eqref{eq: definition of g_k}. The representation of the capacity
as the infinite series in \eqref{eq: capacity of MBIOS channel}
follows in fact from the result which is obtained via
\cite[Propositions~3.1--3.3]{Sharon_COM06} by referring to an
equi-probable binary input, though the derivation here is more
direct.

We start with the proof of the upper bound on $g_1$, as given on
the RHS of \eqref{eq: extreme values of g_1}. Since we look for
the maximal value of $g_1$ among all MBIOS channels with a given
capacity $C$, then we need to solve the optimization problem
\begin{eqnarray}
&& \text{maximize} \quad g_1 \nonumber \\
&& \text{subject to} \quad \frac{1}{2 \ln 2} \;
\sum_{k=1}^{\infty} \frac{g_k}{k(2k-1)} = C. \label{eq:
maximization problem}
\end{eqnarray}
Based on Lemma~\ref{lemma: g_k and g_1}, for every MBIOS channel,
$g_k \geq (g_1)^k$ for all $k \in \naturals$. Therefore
\begin{eqnarray}
&&\frac{1}{2 \ln 2} \sum_{k=1}^{\infty} \frac{g_k}{k(2k-1)}
\nonumber\\
&& \geq \frac{1}{2 \ln 2} \sum_{k=1}^{\infty}
\frac{(g_1)^k}{k(2k-1)}
\nonumber\\
&&=1- h_2\left(\frac{1-\sqrt{g_1}}{2} \right)  \label{eq: equality
with epsilon_k}
\end{eqnarray}
where the last equality is based on \eqref{eq: power series for
1-h_2}. The equality constraint in \eqref{eq: maximization
problem} and the inequality \eqref{eq: equality with epsilon_k}
yield that
\begin{equation*}
1- h_2\left(\frac{1-\sqrt{g_1}}{2} \right) \leq C
\end{equation*}
from which the RHS of \eqref{eq: extreme values of g_1} follows.
Note that this upper bound on $g_1$ is attained when $g_k = (g_1)^k$
for all $k \in \naturals$. To show this equality, note that for a
BSC with crossover probability $p$, the LLR at the channel output
$(L)$ is bimodal and it gets the values $l_1 =
+\ln\bigl(\frac{1-p}{p}\bigr)$ and $l_2 = -l_1$ with probabilities
$1-p$ and $p$, respectively. Eq.~\eqref{eq: alternative definition
of g_k} then gives
\begin{eqnarray}
&& g_k \triangleq \expectation
\left[\tanh^{2k}\left(\frac{L}{2}\right) \right] \nonumber \\
&& \hspace*{0.4cm} = (1-p) \tanh^{2k}\left(\frac{{l_1}}{2} \right)
+ p \tanh^{2k}\left(-\frac{l_1}{2} \right) \nonumber \\
&& \hspace*{0.4cm} = \tanh^{2k}\left(\frac{l_1}{2} \right) \nonumber \\
&& \hspace*{0.4cm} = \left(\frac{e^{l_1}-1}{e^{l_1}+1} \right)^{2k} \nonumber\\
&& \hspace*{0.4cm} = (1-2p)^{2k} \; , \quad \forall \; k \in
\naturals. \label{eq: g_k for BSC}
\end{eqnarray}
Hence for the BSC, $g_k = (g_1)^k$ for all $k \in \naturals$. The
upper bound on $g_1$ on the RHS of \eqref{eq: extreme values of
g_1} is therefore achieved for a BSC whose crossover probability
is $p = h_2^{-1}(1-C)$.

The proof of the lower bound on $g_1$ relies on \eqref{eq:
capacity of MBIOS channel}. Since the sequence $\{g_k\}_{k \geq
1}$ is monotonically non-increasing and non-negative (this
property follows directly from \eqref{eq: alternative definition
of g_k}), then
\begin{eqnarray*}
C &=& \frac{1}{2\ln2} \sum_{k=1}^{\infty} \frac{g_k}{k(2k-1)} \nonumber\\
&\leq& \frac{g_1}{2\ln2} \sum_{k=1}^{\infty} \frac{1}{k(2k-1)} \nonumber\\
&=& g_1
\end{eqnarray*}
where the last equality follows from \eqref{eq: infinite series}.
This lower bound on $g_1$ is attained for a BEC (since for a BEC
whose erasure probability is $p$, \eqref{eq: alternative definition
of g_k} implies that the sequence $\{g_k\}$ is constant and $g_1 =
1-p = C$).

\section{Proof of Lemma~\ref{lemma: L_2 for capacity approaching
LDPC ensembles}} \label{Appendix: Proof of lemma on L_2 for c.a.
LDPC ensembles} From the assumption in Lemma~\ref{lemma: L_2 for
capacity approaching LDPC ensembles}, the satisfiability of the
flatness condition for this capacity-achieving sequence gives that
\begin{equation}
\lim_{m\rightarrow\infty} \mathcal{B}(a) \, \lambda_2^{(m)} \,
\rho'_{m}(1) = 1\,. \label{eq: flatness condition}
\end{equation}
From \eqref{eq: simple expression for Lambda_2}, the fraction of
degree-2 variable nodes is given by
\begin{equation}
\Lambda^{(m)}_2 =
\frac{\lambda^{(m)}_2}{2\,\int_0^1\lambda_m(x)\mathrm{d}x}\,,\quad\quad\forall
m\in\naturals  \label{eq: connection for degree-2 variable nodes}
\end{equation}
and therefore
\begin{eqnarray}
&& \hspace*{-0.5cm} \lim_{m\rightarrow\infty} \Lambda^{(m)}_2 \nonumber\\
&& \hspace*{-0.5cm} \stackrel{\text{(a)}}{=}
\lim_{m\rightarrow\infty} \frac{1}{2 \, \mathcal{B}(a) \;
\rho'_m(1)
\, \int_0^1\lambda_m(x)\mathrm{d}x}\nonumber\\[0.1cm]
&& \hspace*{-0.5cm} \stackrel{\text{(b)}}{=}
\lim_{m\rightarrow\infty} \frac{1-R_m}{2 \, \mathcal{B}(a) \;
\rho'_m(1)
\, \int_0^1\rho_m(x)\mathrm{d}x}\nonumber\\[0.1cm]
&& \hspace*{-0.5cm} \stackrel{\text{(c)}}{=} \frac{1-C}{2 \,
\mathcal{B}(a)} \lim_{m\rightarrow\infty} \frac{1}{\rho'_m(1) \,
\int_0^1\rho_m(x)\mathrm{d}x} \label{eq: first expression for
limit of L_2}
\end{eqnarray}
where (a) relies on \eqref{eq: flatness condition} and \eqref{eq:
connection for degree-2 variable nodes}, (b) follows from
\eqref{design rate of LDPC ensemble} where $R_m$ designates the
design rate of the $m$-th LDPC code ensemble in this sequence,
and~(c) follows by the assumption that the sequence is
capacity-achieving. Let $a_{\text{R}}^{(m)}$ designate the average
right degree of the LDPC code ensemble $(n_m, \lambda_n, \rho_m)$.
From \eqref{eq: average right degree}, this implies that $
a_{\text{R}}^{(m)} = \bigl(\int_0^1 \rho_m(x) \, \text{d}x
\bigr)^{-1} $ and, from Theorem~\ref{Theorem: Lower bound on right
degree} followed by Discussion~\ref{discussion: extension of Theorem
1 to LDPC ensembles}, the asymptotic average right degree of the
considered capacity-achieving sequence tends to infinity, i.e.,
\begin{equation}
\lim_{m \rightarrow \infty} a_{\text{R}}^{(m)} = \infty.
\label{eq: limit of average right-degree for c.a. sequence}
\end{equation}
We evaluate now the expression in \eqref{eq: first expression for
limit of L_2}. To this end, let $\rho_m(x)
\triangleq\sum_i\rho^{(m)}_i x^{i-1}$, and let $\Gamma_i^{(m)}$
designate the fraction of parity-check nodes of degree $i$ for
LDPC code ensemble $(n_m, \lambda_m, \rho_m)$, then
\begin{eqnarray}
&& \rho_m'(1) \, \int_0^1\rho_m(x)\mathrm{d}x \nonumber\\
&& = \frac{\rho_m'(1)}{a_{\text{R}}^{(m)}} \nonumber\\
&& = \frac{\sum_i (i-1) \rho_i^{(m)}}{a_{\text{R}}^{(m)}}
\nonumber\\
&& = \frac{\sum_i i \rho_i^{(m)} - 1}{a_{\text{R}}^{(m)}}
\nonumber\\
&& = \frac{1}{a_{\text{R}}^{(m)}} \left(\sum_i i \biggl(\frac{i
\Gamma_i^{(m)}}{\sum_j j \Gamma_j^{(m)}} \biggr) - 1 \right) \nonumber \\
&& = \frac{\sum_i i^2
\Gamma_i^{(m)}}{\bigl(a_{\text{R}}^{(m)}\bigr)^2} -
\frac{1}{a_{\text{R}}^{(m)}} \nonumber\\
&& = \left(\frac{\sqrt{\sum_i i^2 \Gamma_i^{(m)} -
\bigl(a_{\text{R}}^{(m)} \bigr)^2} }{a_{\text{R}}^{(m)}} \right)^2
+ 1 - \frac{1}{a_{\text{R}}^{(m)}}. \label{eq: chain of
equalities}
\end{eqnarray}
Consider any code from the LDPC code ensemble $(n_m, \lambda_m,
\rho_m)$. Note that the first term in \eqref{eq: chain of
equalities} is the square of the ratio of the standard deviation
and the average degree of the parity-check nodes for this code.
Since we denote the asymptotic limit of this ratio by $K$ (where
we assume that it exists and is finite) and also \eqref{eq: limit
of average right-degree for c.a. sequence} holds, then we get from
\eqref{eq: chain of equalities} that
\begin{equation}
\lim_{m \rightarrow \infty} \rho_m'(1) \,
\int_0^1\rho_m(x)\mathrm{d}x = K^2 + 1. \label{eq: asymptptic
limit}
\end{equation}
This completes the proof of the theorem by combining \eqref{eq:
first expression for limit of L_2} with \eqref{eq: asymptptic
limit}.

\section{Proof of Lemma~\ref{lemma: relation between the capacity and
Bhattacharyya constant of an MBIOS channel}} \label{Appendix:
inequality related to the Bhattacharyya constant and channel
capacity} Let $a$ denote the symmetric $L$-density {\em pdf} of the
transition probability of an MBIOS channel (see
\cite[Theorem~4.26]{RiU_book}). Let $C = C(a)$ and $B =
\mathcal{B}(a)$ be the corresponding capacity and Bhattacharyya
constant, respectively. From \eqref{eq: channel capacity of an MBIOS
channel}, \eqref{eq: definition of Bhattacharyya constant} and the
symmetry of $a$
\begin{eqnarray*}
&& C + B - 1 \\
&& = \int_{-\infty}^{\infty} a(l) e^{-\frac{l}{2}} \, \mathrm{d}l
- \int_{-\infty}^{\infty} a(l) \log_2(1+e^{-l}) \, \mathrm{d}l \\
&& = \int_{-\infty}^{\infty} a(l) e^{-\frac{l}{2}} \, \mathrm{d}l
\\ && \hspace*{0.3cm} - \frac{1}{2} \int_{-\infty}^{\infty} \Bigl[ a(l) \log_2(1+e^{-l})
+ a(-l) \log_2(1+e^{l}) \Bigr] \mathrm{d}l \\
&& = \int_{-\infty}^{\infty} a(l) e^{-\frac{l}{2}} \, \mathrm{d}l
\\ && \hspace*{0.3cm} - \frac{1}{2} \int_{-\infty}^{\infty} a(l) \Bigl[\log_2(1+e^{-l})
+ e^{-l} \log_2(1+e^{l}) \Bigr] \mathrm{d}l \\
&& = \int_{-\infty}^{\infty} e^{-\frac{l}{2}} \, a(l) g(l)
\mathrm{d}l
\end{eqnarray*}
where the function $g$ is given by
\begin{equation*}
g(l) = 1 - \frac{1}{2} \Bigl[ e^{\frac{l}{2}} \log_2(1+e^{-l}) +
e^{-\frac{l}{2}} \log_2(1+e^{l}) \Bigr], \quad l \in \reals.
% \label{eq: g}
\end{equation*}
In order to complete the proof, it suffices to show that the
function $g$ is non-negative. The substitution $x =\frac{1}{1+e^l}$
gives $ g(l) = 1-\frac{h_2(x)}{2\sqrt{x(1-x)}} $ where the interval
$(-\infty, +\infty)$ for $l$ is mapped into the interval $(0,1)$ for
$x$. The non-negativity of $g$ follows from the inequality $ h_2(x)
\leq 2 \sqrt{x(1-x)} $ which is satisfied for $0 \leq x \leq 1$. The
non-negativity of the function $g$ implies that $C+B \geq 1$.

Note that for a BEC with erasure probability $p$, the channel
capacity is $1-p$ bits per channel use, and the Bhattacharyya
constant is equal to $p$. Hence, the equality $C+B=1$ holds for
every BEC, irrespectively of the channel erasure probability.

\section{Proof of Corollary~\ref{Corollary: The fraction of edges connected to
degree-2 variable nodes}} \label{Appendix: Proof of the corollary
on lambda_2} A truncation of the power series on the LHS of
\eqref{eq: power series for 1-h_2} after its first term gives the
inequality
\begin{equation*}
1-h_2\left(\frac{1-\sqrt{u}}{2}\right) \geq \frac{u}{2\,\ln
2}\,,\quad 0\leq u\leq1.
\end{equation*}
Assigning $u = \bigl(1-2h_2^{-1}(x)\bigr)^2$ and rearranging terms
gives
\begin{equation}
h_2^{-1}(x) \geq \frac{1}{2} \left(1-\sqrt{2\ln2 \; \;
(1-x)}\right), \quad 0 \leq x \leq 1. \label{eq: lower bound on
h_2^-1(x)}
\end{equation}
Assigning $0\leq x \triangleq \frac{1-C}{1-(1-\varepsilon)C}\leq
1$ in \eqref{eq: lower bound on h_2^-1(x)} gives
\begin{eqnarray*}
&& h_2^{-1}\left(\frac{1-C}{1-C(1-\varepsilon)}\right) \nonumber
\\ && \geq \frac{1}{2} \left(1-\sqrt{2\ln2 \; \left(\frac{\varepsilon
C }{1-(1-\varepsilon)C}\right)}
\right)\nonumber\\
&& \geq \frac{1}{2} \left(1-\sqrt{2\ln2 \; \left(\frac{\varepsilon
C}{1-C}\right)}\right)
\end{eqnarray*}
and therefore
\begin{equation}
1-2h_2^{-1}\left(\frac{1-C}{1-(1-\varepsilon)C}\right) \leq
\sqrt{2\ln2 \; \left(\frac{\varepsilon C}{1-C}\right)}\,.
\label{eq: inequality involving 1-2h_2^-1}
\end{equation}
Substituting \eqref{eq: inequality involving 1-2h_2^-1} in
\eqref{eq: lower bound on a_R} provides the following lower bound
on the average right degree of the ensembles:
\begin{equation}
a_{\text{R}} \geq \frac{\ln\left(\frac{1}{2\,\ln
2}\frac{1-C}{\varepsilon
C}\right)}{\ln\left(\frac{1}{g_1}\right)}. \label{eq: looser lower
bound on a_R}
\end{equation}
As the average right degree of an LDPC code ensemble is not less
than~2 (as otherwise, some bits are forced to be zeros and can be
deleted from all codewords), then it follows from \eqref{eq:
looser lower bound on a_R} that
\begin{eqnarray}
a_{\text{R}} - 1 &\geq& \left[\frac{\ln\left(\frac{g_1}{2\,\ln2}
\frac{1-C}{\varepsilon C}\right)}{\ln
\left(\frac{1}{g_1}\right)}\right]^+\nonumber\\
&=& \left[\frac{\ln\left(\frac{g_1}{2\,\ln2}\frac{1-C}{C}\right) +
\ln \left(\frac{1}{\varepsilon}\right)}
{\ln\left(\frac{1}{g_1}\right)}\right]^+\,. \label{eq: looser
lower bound on a_R-1}
\end{eqnarray}
The proof is completed by combining \eqref{eq: upper bound on
lambda_2 in terms of a_R} with \eqref{eq: looser lower bound on
a_R-1}.

\section{Proof of Proposition~\ref{Proposition: the tightness of the upper bound on
lambda_2}} \label{Appendix: Proof of the proposition on the
tightness of the upper bound on lambda_2}

When the transmission takes place over a BEC whose erasure
probability is $p$, the constant $c_2$ given in \eqref{eq: c_2}
takes the form
\begin{equation}
c_2 = \frac{p}{\ln\left(\frac{1}{1-p}\right)} \; .
  \label{eq: c_2 for BEC}
\end{equation}
The starting point of this proof follows the concept in
\cite[Example~3.88]{RiU_book}, and its continuation relies on the
analysis used for the proof of \cite[Theorem~2.3]{Sason-it03}. For
$0<\alpha<1$, let
\begin{eqnarray}
\label{eq: general lambda distribution for Shokrollahi ensembles}
&& \hspace*{-.8cm} \hat{\lambda}_{\alpha}(x) = 1-(1-x)^{\alpha} =
\sum_{k=1}^{\infty}(-1)^{k+1} {\alpha \choose k} x^k\,,\quad 0\leq
x\leq 1 \nonumber \\
&& \hspace*{-.8cm} \rho_{\alpha}(x) = x^{\frac{1}{\alpha}}.
\end{eqnarray}
Note that all the coefficients in the power series expansion of
$\hat{\lambda}_{\alpha}$ are positive for all $0<\alpha<1$. Let us
now define the polynomials $\hat{\lambda}_{\alpha}^{(N)}$ and
$\lambda_{\alpha}^{(N)}$ where $\hat{\lambda}_{\alpha}^{(N)}(x)$
is the truncated power series of $\hat{\lambda}_{\alpha}(x)$
around $x=0$, consisting of all the terms up to (and including)
the term $x^{N-1}$, and the polynomial
\begin{equation}
\lambda_{\alpha}^{(N)}(x) \triangleq
\frac{\hat{\lambda}_{\alpha}^{(N)}(x)}{\hat{\lambda}_{\alpha}^{(N)}(1)}
  \label{eq: lambda_alpha,n}
\end{equation}
is normalized to satisfy the equality $\lambda_{\alpha}^{(N)}(1) =
1$. The sequence of right-regular LDPC code ensembles in
\cite{Shokrollahi-IMA2000} is of the form
$\big\{\big(n_m,\lambda_{\alpha}^{(N)}(x),\rho_{\alpha}(x)\big)\big\}_{m\geq
1}$ where $0<\alpha<1$ and $N\in\naturals$ are arbitrary
parameters which need to be selected properly. Assume that the
transmission takes place over a BEC whose erasure probability
is~$p$. Based on the proof of \cite[Theorem~2.3]{Sason-it03}, this
sequence achieves a fraction $1-\varepsilon$ of the capacity of
the BEC with vanishing bit erasure probability under BP decoding
when $\alpha$ and $N$ are chosen to satisfy
\begin{eqnarray}
&& \hspace*{-1.4cm} \frac{1}{N^{\alpha}} = 1-p \label{eq: N^-alpha = 1-p} \\
&& \hspace*{-1.4cm} N =
\max\biggl(\left\lceil\frac{1-(1-\varepsilon)(1-p)k_2(p)}{\varepsilon}
\right\rceil, \left\lceil(1-p)^{-\frac{1}{p}} \right\rceil \biggr)
\label{eq: N for right-regular ensembles}
\end{eqnarray}
where
\begin{equation}
k_2(p) \triangleq (1-p)^{\frac{\pi^2}{6}}\;
e^{\left(\frac{\pi^2}{6}-\gamma\right)\,p}  \label{eq: k2}
\end{equation}
and $\gamma$ is Euler's constant ($\gamma \approx 0.5772$).
Combining \eqref{eq: general lambda distribution for Shokrollahi
ensembles} and \eqref{eq: lambda_alpha,n}, and using the equality
\begin{equation*}
  \sum_{k=1}^{N-1}(-1)^{k+1} {\alpha \choose k} =
  1-\frac{N}{\alpha}{\alpha \choose N}(-1)^{N+1}
\end{equation*}
gives
\begin{equation*}
  \lambda_{\alpha}^{(N)}(x) = \frac{\sum_{k=1}^{N-1}(-1)^{k+1}{\alpha \choose k}
  x^k}{1-\frac{N}{\alpha}\,(-1)^{N+1}\, {\alpha \choose N}}\,.
\end{equation*}
Therefore, the fraction of edges adjacent to variable nodes of
degree two is given by
\begin{equation}
\lambda_2 = \frac{\alpha}{1-\frac{N}{\alpha}\,(-1)^{N+1}\, {\alpha
\choose N}}\,.
  \label{eq: lambda_2 as function of alpha and N}
\end{equation}
We now obtain upper and lower bounds on $\lambda_2$. From
\cite[Eq.~(67)]{Sason-it03} we have that
\begin{equation}
\frac{c(\alpha,N)}{N^{\alpha}} <
\frac{N}{\alpha}(-1)^{N+1}\,{\alpha \choose
N}\leq\frac{1}{N^{\alpha}}
  \label{eq: lower and upper bounds on -1^N+1 alpha choose N}
\end{equation}
where
\begin{equation}
c(\alpha,N) \triangleq (1-\alpha)^{\frac{\pi^2}{6}}\;
e^{\alpha\big(\frac{\pi^2}{6}-\gamma+\frac{1}{2N}\big)}\,.
\label{eq: c(alpha,N)}
\end{equation}
Substituting \eqref{eq: lower and upper bounds on -1^N+1 alpha
choose N} in \eqref{eq: lambda_2 as function of alpha and N} and
using \eqref{eq: N^-alpha = 1-p}, we get
\begin{equation}
  \frac{\alpha}{1-c(\alpha,N)\,(1-p)} < \lambda_2 \leq
  \frac{\alpha}{1-(1-p)} = \frac{\alpha}{p}\,.
  \label{eq: lower and upper bounds on lambda_2}
\end{equation}
Under the parameter assignments in \eqref{eq: N^-alpha = 1-p} and
\eqref{eq: N for right-regular ensembles}, the parameters $N$ and
$\alpha$ satisfy
\begin{eqnarray}
\label{eq: alpha as function of p and N}
  && \alpha = \frac{\ln\left(\frac{1}{1-p}\right)}{\ln N}\\
  \label{eq: lower bound on N}
  && N \geq \frac{1-(1-p) \, k_2(p)}{\varepsilon} \,.
\end{eqnarray}
Substituting \eqref{eq: alpha as function of p and N} and
\eqref{eq: lower bound on N} into the inequality on the RHS of
\eqref{eq: lower and upper bounds on lambda_2} gives an upper
bound on $\lambda_2$ which takes the form
\begin{eqnarray}
&& \lambda_2 \leq \frac{\alpha}{p}\nonumber\\
&& \hspace*{0.5cm} \leq \frac{\ln\left(\frac{1}{1-p}\right)}{p\,
\ln\left(\frac{1-(1-p) \, k_2(p)}{\varepsilon}\right)}\nonumber\\
%&& \hspace*{0.5cm} = \frac{\ln\left(\frac{1}{1-p}\right)}{p
%\Bigl[\ln \frac{1}{\varepsilon} + \ln\big(1-(1-p) \,
%k_2(p)\big)\Bigr]}
%\nonumber\\
&& \hspace*{0.5cm} = \frac{1}{c_3 + c_2\ln \frac{1}{\varepsilon}}
  \label{eq: upper bound on lambda_2}
\end{eqnarray}
where $c_2$ is the coefficient of the logarithmic growth rate in
$\frac{1}{\varepsilon}$, which coincides here with \eqref{eq: c_2
for BEC}, and
\begin{equation}
c_3 \triangleq \frac{p\,\ln\bigl(1-(1-p) \,
k_2(p)\bigr)}{\ln\left(\frac{1}{1-p}\right)} \label{eq: c_3}
\end{equation}
is a constant which only depends on the BEC. We turn now to derive
a lower bound on $\lambda_2$, and then examine it in the limit
where the gap to capacity vanishes. From \eqref{eq: N for
right-regular ensembles}, we have that for small enough values of
$\varepsilon$, the parameter $N$ satisfies
\begin{eqnarray}
&& \hspace*{-1.2cm} N =
\left\lceil\frac{1-k_2(p)\,(1-p)\,(1-\varepsilon)}{\varepsilon}\right\rceil\nonumber\\
&& \hspace*{-0.7cm} \leq
\frac{1-k_2(p)\,(1-p)\,(1-\varepsilon)}{\varepsilon} + 1.
\label{eq: upper bound on N for small epsilon}
\end{eqnarray}
Substituting \eqref{eq: alpha as function of p and N} and
\eqref{eq: upper bound on N for small epsilon} into the inequality
on the LHS of \eqref{eq: lower and upper bounds on lambda_2}, we
get
\begin{eqnarray}
&& \hspace*{-0.9cm} \lambda_2 >
\frac{\alpha}{p}\;\frac{p}{1-c(\alpha,N)\,(1-p)}\nonumber\\
&& \hspace*{-0.4cm} \geq
\frac{\ln\Bigl(\frac{1}{1-p}\Bigr)}{p\,\ln\left(\frac{1-k_2(p)
(1-p) (1-\varepsilon)+\varepsilon}{\varepsilon} \right)}
\cdot \frac{p}{1-c(\alpha,N)\,(1-p)}\nonumber\\
&& \hspace*{-0.4cm} =\frac{1}{c_3 +
c_2\ln\left(\frac{1}{\varepsilon}\right) +
\widetilde{\varepsilon}(\varepsilon,p)} \; \; \frac{p}{1-(1-p) \,
c(\alpha,N)} \label{eq: lower bound on lambda_2}
\end{eqnarray}
where $c_2$ is the coefficient of the logarithm in the denominator
of \eqref{eq: loosened version of the upper bound on lambda_2} and
it coincides with \eqref{eq: c_2 for BEC} for the BEC, $c_3$ is
given in \eqref{eq: c_3}, and
\begin{equation*}
\widetilde{\varepsilon}(\varepsilon,p) \triangleq \frac{p
\ln\biggl(1+\frac{\varepsilon \bigl(1+k_2(p) \,
(1-p)\bigr)}{1-k_2(p) \, (1-p)
}\biggr)}{\ln\left(\frac{1}{1-p}\right)}
\end{equation*}
which therefore implies that for $0 \leq p < 1$
\begin{equation}
\lim_{\varepsilon \rightarrow 0}
\widetilde{\varepsilon}(\varepsilon,p) = 0. \label{eq: limit of
epsilon tilde}
\end{equation}
Using the lower bound on the parameter $N$ in \eqref{eq: lower bound
on N}, in the limit where $\varepsilon$ tends to zero, the parameter
$N$ tends to infinity (since $1-(1-p) k_2(p)>0$ for all $0<p<1$
where $k_2$ in introduced in \eqref{eq: k2}). Also, from \eqref{eq:
N for right-regular ensembles} and \eqref{eq: alpha as function of p
and N}, we get
\begin{equation*}
\lim_{\varepsilon\rightarrow 0}\alpha = 0
\end{equation*}
which, from \eqref{eq: c(alpha,N)}, yields that
\begin{equation}
  \lim_{\varepsilon\rightarrow 0} c(\alpha,N) = 1\,.
  \label{eq: limit of c(alpha,N)}
\end{equation}
Substituting \eqref{eq: limit of epsilon tilde} and \eqref{eq:
limit of c(alpha,N)} in \eqref{eq: lower bound on lambda_2} yields
that in the limit where the gap to capacity vanishes (i.e.,
$\varepsilon \rightarrow 0$), the upper and lower bounds on
$\lambda_2$ in \eqref{eq: upper bound on lambda_2} and \eqref{eq:
lower bound on lambda_2} coincide. Specifically, we have shown
that
\begin{equation*}
\lim_{\varepsilon\rightarrow0}\lambda_2(\varepsilon)\cdot
c_2\,\ln\left(\frac{1}{\varepsilon}\right) = 1\,.
\end{equation*}
Therefore, as $\varepsilon\rightarrow0$, the upper bound on
$\lambda_2 = \lambda_2(\varepsilon)$ in Corollary~\ref{Corollary:
The fraction of edges connected to degree-2 variable nodes}
becomes tight for the sequence of right-regular LDPC code
ensembles in \cite{Shokrollahi-IMA2000} with the parameters chosen
in \eqref{eq: N^-alpha = 1-p} and \eqref{eq: N for right-regular
ensembles}. We note that the setting of the parameters $N$ and
$\alpha$ in \eqref{eq: N^-alpha = 1-p} and \eqref{eq: N for
right-regular ensembles} is identical to
\cite[p.~1615]{Sason-it03}.

\section{A proof of Remark~\ref{Remark: A discussion on the
constraints given in the LP1 and LP2 bounds}} \label{Appendix: Proof
for the un-necessity of adding the additional constraint in the LP1
and LP2 bounds} We prove in the following the claim in
Remark~\ref{Remark: A discussion on the constraints given in the LP1
and LP2 bounds} which states that adding the constraint that is
imposed by the lower bound on the average right degree (i.e., the
lower bound on $a_{\text{R}} = \sum_{i=1}^{\infty} i \Gamma_i$) does
not affect the LP1 and LP2 bounds introduced in Section~\ref{LP
bounds on degree distributions of LDPC code ensembles}. More
explicitly, for the LP1 bound, we prove that the constraint on
$\{\Gamma_i\}_{i \geq 1}$ which is imposed by \eqref{eq:
relationship between Gamma epsilon and P_b} implies the lower bound
on the average right degree as given in \eqref{eq: lower bound on
a_R with finite P_b} and \eqref{delta}.
\newline
\begin{proof}
Eq.~\eqref{eq: relationship between Gamma epsilon and P_b} gives
the first constraint in the LP1 bound. By substituting $x =
\frac{1-g_1^{\frac{i}{2}}}{2} $ in \eqref{eq: power series for
h_2}, we get that the following equality holds for $i \geq 1$
(note that since $0 \leq g_1 \leq 1$ then $0 \leq x \leq 1$ as
required in \eqref{eq: power series for h_2}):
\begin{equation*}
1 - h_2\biggl(\frac{1-g_1^{\frac{i}{2}}}{2}\biggr) = \frac{1}{2
\ln 2} \sum_{p=1}^{\infty} \frac{g_1^{pi}}{p(2p-1)}.
\end{equation*}
Plugging this equality into the LHS of \eqref{eq: relationship
between Gamma epsilon and P_b} gives
\begin{eqnarray}
&& \sum_{i=1}^{\infty}
\left\{\left[1-h_2\biggl(\frac{1-g_1^{\frac{i}{2}}}{2}\biggr)\right]
\Gamma_i \right\}
\nonumber\\
&& \stackrel{(a)}{=} \frac{1}{2 \ln 2} \sum_{p=1}^{\infty} \sum_{i=1}^{\infty} \frac{\Gamma_i g_1^{pi}}{p(2p-1)} \nonumber\\
&& \stackrel{(b)}{\geq} \frac{1}{2 \ln 2} \sum_{p=1}^{\infty} \frac{{g_1}^{p \sum_i i \Gamma_i}}{p(2p-1)} \nonumber\\
&& \stackrel{(c)}{=} \frac{1}{2 \ln 2} \sum_{p=1}^{\infty} \frac{{g_1}^{p a_{\text{R}}}}{p(2p-1)} \nonumber\\
&& \stackrel{(d)}{=}
1-h_2\biggl(\frac{1-g_1^{\frac{a_{\text{R}}}{2}}}{2} \biggr).
\label{eq: lower bound on the LHS}
\end{eqnarray}
where equality~(a) is obtained by interchanging the order of
summation, equality~(b) follows from Jensen's inequality,
equality~(c) follows from expressing the average right degree by
the equality $a_{\text{R}} = \sum_i i \Gamma_i$, and equality~(d)
follows from \eqref{eq: power series for h_2}. Combining
\eqref{eq: relationship between Gamma epsilon and P_b} with
\eqref{eq: lower bound on the LHS} gives that
\begin{equation*}
1-h_2\biggl(\frac{1-g_1^{\frac{a_{\text{R}}}{2}}}{2} \biggr) \leq
\frac{\varepsilon\,C + h_2(P_{\text{b}})}{1-(1-\varepsilon)C}
\end{equation*}
and then some straightforward algebra implies that
\begin{equation*}
a_{\text{R}} \geq \frac{2 \ln
\left(\frac{1}{1-2h_2^{-1}(\frac{1-C-h_2(P_{\text{b}})}{1-(1-\varepsilon)C})}
\right) }{\ln \left(\frac{1}{g_1}\right)}.
\end{equation*}
This lower bound on the average right degree coincides with the
bound in \eqref{eq: lower bound on a_R with finite P_b} and
\eqref{delta} which then completes our proof for the LP1 bound.
The same proof holds for the LP2 bound while referring to the
lower bound given in \eqref{delta} and \eqref{eq: universal lower
bound on a_R with finite P_b}.
\end{proof}

\section{Analytical Solution of the LP1 Bound}
\label{Appendix: solution of the LP1 bound} The LP1 bound in
Section~\ref{LP bounds on degree distributions of LDPC code
ensembles} can be equivalently expressed as the following
minimization problem:
\begin{equation}
\mbox{\fbox{$
\begin{array}{l}
\text{minimize} \; \; -\sum\limits_{i=1}^k \rho_i, \quad k=1, 2, \ldots \nonumber \\
\text{subject to} \nonumber \\[0.1cm]
\begin{cases}
\hspace{0.2cm} \sum\limits_{i=1}^{\infty} d_i \rho_i \leq 0 \\
\hspace{0.2cm} d_i \triangleq \frac{1}{i} \left[
1-h_2\biggl(\frac{1-g_1^{\frac{i}{2}}}{2}\biggr) -
\frac{\varepsilon\,C + h_2(P_{\text{b}})}{1-(1-\varepsilon)C}
\right], \; i \geq 1 \\[0.15cm]
\hspace{0.2cm} \sum\limits_{i=1}^{\infty} \rho_i \leq 1 \\[0.15cm]
\hspace{0.2cm} \rho_i \geq 0, \quad i=1, 2, \ldots
\end{cases}
\end{array}
$}}
\end{equation}
where we negated the objective function and turned the maximization
into a minimization, and also the equality constraint on $\sum_{i
\geq 1} \rho_i$ was turned into an inequality constraint. By
introducing the non-negative Lagrange multipliers $\mu_1$ and
$\mu_2$, respectively, to the first and second inequality
constraints, and also introducing the non-negative Lagrange
multiplies $\{\theta_i\}$ to the non-negativity constraint on
$\{\rho_i\}$, we get the Lagrangian
\begin{eqnarray}
&& \hspace*{-0.7cm} L(\{\rho_i\}, \mu_1, \mu_2, \{\theta_i\}) \nonumber \\
&& \hspace*{-0.7cm} = -\sum\limits_{i=1}^k \rho_i + \mu_1
\sum\limits_{i=1}^{\infty} d_i \rho_i + \mu_2
\biggl(\sum\limits_{i=1}^{\infty} \rho_i-1 \biggr) -
\sum\limits_{i=1}^{\infty} \theta_i \rho_i \nonumber \\
&& \hspace*{-0.7cm} = \sum\limits_{i=1}^{k} \bigl(-1+\mu_1 d_i +
\mu_2 - \theta_i \bigr) \rho_i + \sum\limits_{i=k+1}^{\infty}
\bigl(\mu_1 d_i + \mu_2 - \theta_i \bigr) \rho_i \nonumber
\\ && - \mu_2. \label{eq: Lagrangian for LP1}
\end{eqnarray}
By alternating again the sign of the objective function, we get the
following dual LP problem:
\begin{equation}
\mbox{\fbox{$
\begin{array}{l}
\text{minimize} \; \mu_2 \nonumber \\
\text{subject to} \nonumber \\[0.1cm]
\begin{cases}
\hspace{0.2cm} -1+\mu_1 d_i + \mu_2 - \theta_i = 0, \quad
i=1,2,\ldots, k \\
\hspace{0.2cm} \mu_1 d_i + \mu_2 - \theta_i = 0, \quad
i=k+1, k+2,\ldots \\
\hspace{0.2cm} \mu_1, \mu_2 \geq 0 \\
\hspace{0.2cm} \theta_i \geq 0, \quad i=1, 2, \ldots
\end{cases}
\end{array}
$}}
\end{equation}
Strong duality holds for linear programming provided that the primal
LP or its dual LP are feasible (see \cite[Problem~5.23]{CVX_book}).
Hence, strong duality holds for the LP1 problem.

Note that the sequence $\{d_i\}$ (see the above primal problem) is
positive if and only if $i < k_0$ where $k_0$ denotes the lower
bound on the average right degree as is given in \eqref{eq: lower
bound on a_R with finite P_b}. For $k < k_0$, the sequence
$\{d_i\}_{i=1}^{k}$ is positive and monotonic decreasing:
\begin{equation*}
d_1 > d_2 > \ldots, > d_k > 0, \quad \forall \, k < k_0.
\end{equation*}
Also $d_i \leq 0$ for $i \geq k_0$, and $\lim_{i \rightarrow \infty}
d_i = 0$. Let
\begin{equation}
d^* \triangleq \min_{i \geq 1} d_i \label{eq: d^*}
\end{equation}
where the minimum of the sequence $\{d_i\}$ is attained for some
index $i \geq k_0$, and $d^* \leq 0$ (note that except for the
degenerate case where $g_1=0$, for which the channel is completely
useless, the sequence $\{d_i\}$ is negative for $i>k_0$, and it
tends asymptotically to zero in the limit where $i \rightarrow
\infty$).

Let $k < k_0$. Due to the properties of the sequence $\{d_i\}$ and
the non-negativity constraint on $\{\theta_i\}$ in the dual LP
problem, the minimization of the objective function $(\mu_2)$ can be
simplified. To this end, one can remove all the equality constraints
from the dual LP problem except of the first equality constraint
with the index $i=k$, and the second equality constraint with the
index $i \geq k_0$ for which the sequence $\{d_i\}$ attains its
minimal value $(d^*)$. For these two indices of $i$, the Lagrange
multipliers $\theta_i$ in the two equality constraints of the dual
LP problem are set to zero; this setting attains the minimal value
of $\mu_2$ (for the other equality constraints that were removed
from the dual LP problem, the corresponding $\theta_i$'s are
strictly positive; however, these equality constraints are redundant
for the minimization of $\mu_2$ in the dual LP). Hence, for $k <
k_0$, the dual LP problem is simplified to
\begin{equation}
\mbox{\fbox{$
\begin{array}{l}
\text{minimize} \; \mu_2 \nonumber \\
\text{subject to} \nonumber \\[0.1cm]
\begin{cases}
\hspace{0.2cm} -1+\mu_1 d_k + \mu_2 = 0 \\
\hspace{0.2cm} \mu_1 d^* + \mu_2 = 0 \\
\hspace{0.2cm} \mu_1, \mu_2 \geq 0
\end{cases}
\end{array}
$}}
\end{equation}
whose solution is
\begin{equation*}
\mu_1 = \frac{1}{d_k-d^*}, \quad \mu_2 = -\frac{d^*}{d_k-d^*}
\end{equation*}
and the optimal value of the dual LP is equal to
$-\frac{d^*}{d_k-d^*}$ which is indeed bounded between~0 and~1
(since $d^* \leq 0$ and $d_k > 0$ for $k < k_0$).

For $k \geq k_0$, we get the following system of inequalities from
the dual LP problem:
\begin{eqnarray*}
\left\{
\begin{array}{ll}
-1+\mu_1 d_i + \mu_2 \geq 0, \quad & \mbox{for $i=1,2, \ldots, k$}\\
\mu_1 d_i + \mu_2 \geq 0, \quad & \mbox{for $i=k+1, k+2, \ldots$}
\end{array}
\right.
\end{eqnarray*}
Since $d_k \leq 0$, then the optimal solution of the dual LP is
obtained at $\mu_1=0$ and $\mu_2=1$, which then gives an optimal
value of~1 for the minimization of $\mu_2$.

\begin{remark}: Consider again the solution of the LP1 problem in the
case where $k \leq k_0$. From the solution of the dual problem, it
follows that it is obtained by setting $\rho_i$ to be zero, except
for two indices. To this end, let $i=l$ be the index for which the
sequence $\{d_i\}$ achieves its negative minimal value $(d^*)$, and
let us choose the values of $\rho_k$ and $\rho_l$ to satisfy the two
equalities:
\begin{eqnarray*}
&& d_k \rho_k + d_l \rho_l = 0 \\[0.1cm]
&& \rho_k + \rho_l = 1.
\end{eqnarray*}
Since $d^* = d_l$ for some $l>k_0$, then for $k \leq k_0$ and the
above selection of $\{\rho_i\}$
\begin{equation*}
\sum_{i=1}^k \rho_i = \rho_k = -\frac{d^*}{d_k - d^*}
\end{equation*}
which indeed coincides with the solution of the dual problem.
\end{remark}

\subsection*{A Tightened Version of the LP1 bound for the BEC and its Analytical Solution}
A tightened version of the LP1 bound for the BEC is obtained via
\eqref{eq: relationship between Gamma epsilon and P_b for the BEC}.
By substituting the equality \eqref{eq: switching between
representations_2} into the LHS of \eqref{eq: relationship between
Gamma epsilon and P_b for the BEC} and using the equality $C=1-p$
for a BEC gives
\begin{equation*}
\sum_{i=1}^{\infty} \frac{\rho_i C^i}{i} \leq \frac{\varepsilon C +
P_{\text{b}}}{1-(1-\varepsilon)C} \sum_{i=1}^{\infty}
\frac{\rho_i}{i} \, .
\end{equation*}
This inequality constraint forms a tightened constraint for the BEC,
as compared to the first inequality constraint which was formulated
in the LP1 problem for a general MBIOS channel. In order to use the
analytical result derived earlier in this appendix and adapt it to
this case, we formulate the tightened version of the LP1 bound for
the BEC as follows:
\begin{equation}
\mbox{\fbox{$
\begin{array}{l}
\text{minimize} \; \; -\sum\limits_{i=1}^k \rho_i, \quad k=1, 2, \ldots \nonumber \\
\text{subject to} \nonumber \\[0.1cm]
\begin{cases}
\hspace{0.2cm} \sum\limits_{i=1}^{\infty} d_i \rho_i \leq 0 \\
\hspace{0.2cm} d_i \triangleq \frac{1}{i} \left(C^i -
\frac{\varepsilon C + P_{\text{b}}}{1-(1-\varepsilon)C}
\right), \; \; i=1,2,\ldots \\[0.15cm]
\hspace{0.2cm} \sum\limits_{i=1}^{\infty} \rho_i \leq 1 \\[0.15cm]
\hspace{0.2cm} \rho_i \geq 0, \quad i=1, 2, \ldots
\end{cases}
\end{array}
$}}
\end{equation}
Similarly to the above analysis in this appendix, the new sequence
$\{d_i\}$ is non-negative if and only if $i \leq k_0$ where $k_0$
denotes the lower bound on the average right degree as is given in
\eqref{eq: lower bound on a_R for the BEC with finite P_b}. For $k
\leq k_0$, the sequence $\{d_i\}_{i=1}^{k}$ is non-negative and
monotonic decreasing; moreover, $d_i < 0$ for $i
> k_0$, and $\lim_{i \rightarrow \infty} d_i = 0$. By using the
same notation of $d^*$ in \eqref{eq: d^*}, we obtain that the
tightened version of the LP1 bound for the BEC has the same
analytical solution as of the general LP1 bound, except for the
change of the sequence $\{d_i\}$ (and its corresponding minima
$d^*$).

\subsection*{Acknowledgment} Discussions with
Henry Pfister, Tom Richardson and Ruediger Urbanke during the 2008
Turbo Coding Symposium in Lausanne are acknowledged. A discussion
with Gil Wiechman in an early stage of this work is also
acknowledged. The author wishes to thank Boaz Shuval, Moshe Twitto,
and Oren Zeitlin for pointing out typos in a previous draft. Thanks
are due to the anonymous reviewers for their feedback which
contributed to the lucidity of the presentation.

\end{document}